\documentclass[11pt,times,letter]{article}

\topmargin=-0.5in    
\textheight=8.5in  
\oddsidemargin=0pt 
\textwidth=6.4in   

\usepackage{amsmath, amsfonts,amsthm}
\usepackage{dsfont}
\usepackage{bbm}

\usepackage[english]{babel}

\usepackage[normalem]{ulem}

\usepackage{latexsym,amssymb,graphicx}
\usepackage{verbatim}
\usepackage{mathrsfs}
\usepackage{epsfig}

\usepackage{xcolor}
\usepackage{tikz}
\usepackage{pgfplots}
\usepgfplotslibrary{fillbetween}

\setlength{\parskip}{1ex plus 0.5ex minus 0.2ex}
\usepackage{url}
\usepackage{bm}
\usepackage{natbib}

\usepackage[colorlinks,linkcolor=blue,citecolor=blue,anchorcolor=blue]{hyperref}

\newcommand{\N}{ \mathbb{N} }

\def\esssup_#1{\underset{#1}{\mathrm{ess\,sup\, }}}
\def\essinf_#1{\underset{#1}{\mathrm{ess\,inf\, }}}
\def\argmax_#1{\underset{#1}{\mathrm{arg\,max\, }}}
\def\argmin_#1{\underset{#1}{\mathrm{arg\,min\, }}}

\newcommand{\cL}{\mathcal{L}}

\def\E{\mathbb{E}}
\def\EE{\mathcal{E}}
\def\P{\mathbb{P}}
\def\TT{\mathcal{T}}
\def\R{\mathbb{R}}
\def\B{\mathcal{B}}
\def\N{\mathbb{N}}
\def\d{\mathrm{d}}

\newcommand{\F}{\mathcal{F}}
\newcommand{\FF}{\mathbb{F}}

\newtheorem{theorem}{Theorem}[section]
\newtheorem{definition}{Definition}[section]
\numberwithin{equation}{section}

\newtheorem{proposition}[theorem]{Proposition}

\newtheorem{remark}[theorem]{Remark}
\newtheorem{lemma}[theorem]{Lemma}

\allowdisplaybreaks[4]

\definecolor{Red}{rgb}{1.00, 0.00, 0.00}

\definecolor{DRed}{rgb}{0.5, 0.00, 0.00}

\definecolor{Blue}{rgb}{0.00, 0.00, 1.00}

\definecolor{Green}{rgb}{0.0, 0.4, 0.0}

\definecolor{mycolor1}{rgb}{0.00000,0.44700,0.74100}%
\definecolor{mycolor2}{rgb}{0.85000,0.32500,0.09800}%
\definecolor{mycolor3}{rgb}{0.92900,0.69400,0.12500}%
\definecolor{mycolor4}{rgb}{0.49400,0.18400,0.55600}%
\definecolor{mycolor5}{rgb}{0.46600,0.67400,0.18800}%
\definecolor{mycolor6}{rgb}{0.00000,1.00000,1.00000}%

\pgfplotsset{compat=newest}

\title{On time-consistent equilibrium stopping under aggregation of diverse discount rates}

\author{Shuoqing Deng \thanks{Email: masdeng@ust.hk, Department of Mathematics, The Hong Kong University of Science and Technology, Clear Water Bay, Kowloon, Hong Kong.}
\and
Xiang Yu\thanks{Email: xiang.yu@polyu.edu.hk, Department of Applied Mathematics, The Hong Kong Polytechnic University, Hung Hom, Kowloon, Hong Kong.}
\and
Jiacheng Zhang \thanks{Email: jiachengzhang@cuhk.edu.hk, Department of Statistics, the Chinese University of Hong Kong,  Shatin, NT, Hong Kong.}
}
\date{\vspace{-1cm}}

\begin{document}
\maketitle

\begin{abstract}
This paper studies a central planner's decision making on behalf of a group of members with diverse discount rates. In the context of optimal stopping, we work with an aggregation preference to incorporate all discount rates via an attitude function that reflects the aggregation rule chosen by the central planner. The problem formulation is also applicable to single agent's stopping problem with uncertain discount rate, where our aggregation preference coincides with the conventional smooth ambiguity preference. The resulting optimal stopping problem is time inconsistent, for which we develop an iterative approach using consistent planning and characterize all time-consistent mild equilibria as fixed points of an operator in the setting of one-dimensional diffusion processes. We provide some sufficient conditions on the underlying models and the attitude function such that the smallest mild equilibrium attains the optimal equilibrium. In addition, we show that the optimal equilibrium is a weak equilibrium. When the sufficient condition of the attitude function is violated, we illustrate by various examples that the characterization of the optimal equilibrium may differ significantly from some existing results for a single agent, which now sensitively depends on the attitude function and the diversity distribution of discount rates within the group.

\vspace{0.6 cm}

\noindent{\textbf{Mathematics Subject Classification (2020)}: 60G40, 91B10, 91B14, 91A15 }
\vspace{0.1in}

\noindent{\textbf{Keywords}: Diverse discount rates, aggregation attitude function, time inconsistency, mild equilibrium, optimal equilibrium, weak equilibrium}
\end{abstract}

\section{Introduction}
 It has been well documented in experimental studies that individuals show heterogeneous discount rates in evaluating long-term profits and costs. How to reconcile the decision making for a society or a group in face of wide disagreements has been an important and challenging problem in many financial and economic applications. \cite{Weiz} surveyed a large number of economists and asked them to provide the discount rate to evaluate some long-term projects such as proposals to abate climate change. It was shown that the mean of answers is $3.96\%$ with a standard deviation of $2.94\%$, demonstrating a widely dispersed disagreement over the discount rate. \cite{FLO} examined several methods in estimating individual's discount rates, which differ dramatically across different studies and across individuals within the same study. There is no convergence toward an agreed-on or unique rate of impatience in many real life problems. People have divergent opinions on a wide range of subjects, from the growth rate of the economy in future years, the profitability of a new technology to the risk of global warming. The discount rate is subjective to the individual that may vary due to the asymmetric information, the age, the gender, the education background, etc. Aggregating different time preferences in representing a group of members is inevitable in many applications, from government's budgeting decisions and capital investment to cost-benefit analysis of collective risk prevention. 

Some previous studies to address the integration of heterogeneous opinion lean upon the Pareto efficiency, see \cite{Rein2002}, \cite{Jouini10}, \cite{Chambers18}, in which one collective discount rate needs to be endorsed by all members who would agree with the decision under this representative discount rate. However, this approach based on the consensus of the Pareto discount rate by all members is dictatorial and it is in fact equivalent to picking a time preference of one member only among the whole group. On the other hand, an aggregation rule that respects unanimous opinions and diverse discount rates has been shown to be generically time-inconsistent (see, for example, \cite{Jackson15} and \cite{Millner18}), which was regarded as one mathematical challenge in solving the dynamic optimization problem. 

{ 
In the present paper, we consider the context of optimal stopping by embracing all heterogeneous discount rates from group members, and we propose a method to resolve the issue of time inconsistency in the dynamic decision making. First, to capture diverse discount rates from the group members and reflect their impacts in determining the stopping policy, we introduce a random variable discount rate $\rho$ with the distribution function $F_{\rho}(r)$. In addition, we employ an attitude function $\phi(x)$ to depict the way that the central planner puts the weight on the larger or the smaller discount rates in aggregating their expected profits. For example, a linear aggregation rule as $\phi(x)=x$ for diverse discount rates has been studied in \cite{EWZhou20} using the extended HJB equation approach. This paper attempts to generalize the choices of the attitude function $\phi(x)$ by developing the iterative approach. Mathematically, we are interested in the following infinite horizon optimal stopping problem under an aggregation preference over the distribution of discount rates:
\begin{align}\label{stopping-pb}
\sup_{\tau\in \TT} \int_0^\infty \,\phi\left(\E\left[e^{-r \tau}g(X^x_\tau)\right]\right)dF_\rho(r).
\end{align}
Here, $(X_t^x)_{t\geq 0}$ is the underlying state process with the initial value $X^x_0=x$ and $g(x)$ is the payoff function.

Meanwhile, it is worth noting that the problem formulation in \eqref{stopping-pb} is also directly applicable to a single agent's optimal stopping problem when the decision maker is uncertain about his subjective discount rate, which is a type of optimal control problems under model ambiguity. In this regard, our choice of aggregation preference in \eqref{stopping-pb} is akin to the smooth ambiguity preference under model uncertainty coined by \cite{Klibanoff05} and \cite{Klibanoff09}, where the attitude function $\phi(x)$ in \eqref{stopping-pb} over the possible values of the discount rate coincides exactly with the ambiguity aversion function proposed in \cite{Klibanoff05} and \cite{Klibanoff09} to measure the decision maker's attitude towards different levels of model parameters; see Remark \ref{rem-phi} for more detailed discussions. }

Due to the nonlinear nature of the double expectation in \eqref{stopping-pb}, the optimal stopping problem fails to be time consistent. That is, an optimal strategy for today may not retain optimal at future dates so that our future selves may deviate from the optimal strategy that we set out to follow today. As proposed in  \cite{Strotz}, a more meaningful solution to the time inconsistent problem is the consistent planning: the agent chooses the best present action today by taking the future obedience as a constraint such that all future selves will not overturn the current plan.

There has been a surge of interest in time inconsistent stopping problems in the past decade. In addressing the time inconsistency due to non-exponential discounting, an iterative approach has been developed in \cite{HuangNgu18}. A time consistent equilibrium stopping policy therein, also called \textit{mild} equilibrium in later studies \cite{bayraktar21} and \cite{bayraktar22}, corresponds to a fixed point of the iteration operator. This iterative approach has been refined in \cite{HuangZhou19}, \cite{bayraktar19}, \cite{HuangZhou20} and \cite{HuangWang}, in which the existence and characterization of the optimal equilibrium have been established when the non-exponential discounting function is decreasing impatient. Later, this  approach is further generalized in solving some optimal stopping problems under other causes of time inconsistency such as the probability distortion in \cite{HuangNgZ20} and the model uncertainty and $\alpha$-maxmin nonlinear expectation in \cite{HuangYu21}. On the other hand, some time inconsistent stopping problems due to dependence on initial data or nonlinearity in expected rewards have been studied recently in \cite{ChristensenL18} and \cite{ChristensenL20}, in which the weak equilibrium can be characterized by the extended HJB equation method similar to \cite{bjork} in addressing some time inconsistent stochastic control problems under hyperbolic discounting.  
 
We note that, for the linear aggregation $\phi(x)=x$ considered in \cite{EWZhou20}, one key step is that  $$\sup_{\tau\in \TT} \int_0^\infty \E\left[e^{-r \tau}g(X^x_\tau)\right]dF_\rho(r)=\sup_{\tau\in \TT}\E\left[ \int_0^\infty  e^{-r\tau }dF_{\rho}(r)\cdot g(X^x_\tau)\right]=\sup_{\tau\in \TT}\E\left[ \delta (\tau)g(X^x_\tau)\right],$$ 
where the discount rate $\delta(\tau):=\int_0^\infty  e^{-r\tau }dF_{\rho}(r)$ satisfies the decreasing impatience property. In contrast, we are interested in a general diversity attitude function $\phi(x)$ to further reflect the weight of importance in larger discount rates (impatient members) and smaller discount rates (patient members) within the group. As a consequence, we cannot simplify the nonlinear double expectation into one and work with $\delta(\tau)$ as in \cite{EWZhou20}. Instead, we attempt to generalize the iterative approach in \cite{HuangNgu18} to investigate the characterization of mild equilibrium stopping policy under our aggregation preference. Admittedly, due to its definition in a weak sense, there may exist multiple mild equilibria in applications. Therefore, we also aim to discuss some sufficient conditions on the underlying model and the attitude function such that there exists a unique optimal equilibrium. It is interesting to see from our examples that the characterization of the optimal equilibrium may differ significantly from previous results in \cite{HuangZhou19} and \cite{HuangZhou20} as the smallest equilibrium may no longer be optimal in general. Indeed, we provide several examples where the characterization of the optimal equilibrium sensitively depends on the attitude function $\phi(x)$ and the distribution of the random discount rate.

The contributions of the present paper are three-fold. Firstly, we propose an aggregation preference in the optimal stopping problem to respect and incorporate diverse discount rates from all group members. Furthermore, we allow the central planner to choose the attitude function $\phi(x)$ to reflect the weight on different levels of discount rates. As a price to pay, the optimal stopping problem in \eqref{stopping-pb} is time inconsistent, for which we develop an iterative approach to characterize all time-consistent equilibrium stopping policy as fixed points of a carefully defined operator in the setting of one dimensional diffusion processes; See Definition \ref{def:equilibrium} and Theorem \ref{thm-fixed}. Secondly, admitting that there may exist many time-consistent equilibria using the iterative approach, we further discuss some sufficient conditions on the underlying model and the attitude function such that there exists a unique optimal equilibrium, whose value function dominates the ones under all other equilibria when we focus on the Put payoff $g(x)=(K-x)^+$. In particular, under our given assumptions, it is revealed that the optimal equilibrium is the smallest one among all one-barrier equilibria (see Theorem \ref{small}) and it coincides with the optimal one under the linear aggregation function $\phi(x)=x$ (see Remark \ref{rem-equiv}). In addition, when the attitude function $\phi(x)$ is assumed to be concave, we can further show that the obtained optimal equilibrium is a weak equilibrium as proposed in \cite{ChristensenL18} and some subsequent studies; See Proposition \ref{prop-weak}. Thirdly, in concrete examples of geometric Brownian motion and the Bessel process, we also examine the impacts on the optimal equilibrium by $\phi(x)$ under the special choice of $\phi(x)=\min(x,\alpha)$ for some $\alpha>0$ as well as the distribution of the random discount rate $\rho$. It is interesting to see that both $\phi(x)$ and the distribution $F_{\rho}(r)$ may distort the characterization of the optimal equilibrium significantly. Different cases may occur: (i) If $\alpha$ is sufficiently large, the optimal equilibrium is still the smallest equilibrium regardless of the distribution function $F_{\rho}(r)$; (ii) For some small values of $\alpha$ and some $F_{\rho}(r)$, the optimal equilibrium exists and differs from the smallest equilibrium; (iii) For some values of $\alpha$ and some $F_{\rho}(r)$, the optimal equilibrium does not exist, i.e., we cannot find one equilibrium such that its value function can globally dominate the ones under all other equilibria.

The rest of the paper is organized as follows. Section \ref{sec:model} first introduces the model setup with diverse discount rates and the time inconsistent optimal stopping problem under the smooth aggregation preference. In Section \ref{sec:char}, it is shown that any initial stopping policy will converge to a time-consistent equilibrium through a fixed-point iteration for one-dimensional diffusion processes. Section \ref{sec:optim} provides some sufficient conditions on underlying models and the aggregation attitude function $\phi(x)$ when the smallest equilibrium achieves the optimality in terms of its value function. Furthermore, the optimal equilibrium is shown to be a weak equilibrium as studied in the literature under some mild conditions. In Section \ref{sec:impact}, for concrete examples of geometric Brownian motion and Bessel process with $\nu=1/2$, it is illustrated that the optimal equilibrium may differ from the smallest equilibrium, and its existence and characterization depends sensitively on the aggregation attitude function $\phi(x)$ and the distribution function $F_{\rho}(r)$ of diverse discount rates.

\section{Problem Formulation}\label{sec:model}
Let $(\Omega, \F, \P)$ be a probability space supporting a one-dimensional time-homogeneous diffusion process $X=(X_t)_{t\geq 0}\subseteq\R$ (continuous and strong Markov process). Let $\FF=(\F_t)_{t\geq 0}$ be the filtration generated by $X$ and let $\TT$ be the collection of all $\FF$-stopping times. We consider the optimal stopping problem on behalf of a group of agents with diverse discount rates. To capture the heterogeneity in discount rates from group members, let us assume that the discount factor $\rho$ is a random variable with a given distribution function $F_{\rho}(r)$ on $[0,\infty)$, and $\rho$ is independent of the diffusion process $X$. 

{ 
As different discount rate by the group member leads to different recommendation of the stopping policy, how to choose the collective stopping policy by the central planner on behalf of a group becomes an important step in applications. On the other hand, the mathematical problem is closely connected to the framework of a single agent's optimal stopping problem under uncertain discount rates, i.e., the individual decision maker cannot specify the precise value of his subjective discount rate but holds a range of possible discount rates according to a given distribution. In this situation, the decision maker also needs to choose the stopping policy by aggregating the ambiguous discount rates. This motivates us to adopt the idea of the smooth ambiguity preference proposed in \cite{Klibanoff05} and \cite{Klibanoff09}) to address the decision making under Knightian uncertainty over a distribution of discount rates, where an ambiguity aversion function $\phi(x)$ is introduced to describe the weighted rule of the expected utility given the uncertain model parameter.    

From both perspectives of the central planner's problem with diverse discount rates and of the single agent's problem with uncertain discount rates, we are motivated to introduce a distribution $F_\rho$ for the random discount rate $\rho$ and an aggregation attitude function $\phi(x)$ to take into account different expected payoffs under distinct discount rates. This leads us to study the following infinite horizon optimal stopping problem as 
\begin{align*}
\sup_{\tau\in \TT} \int_0^\infty \,\phi\left(\E\left[e^{-r \tau}g(X^x_\tau)\right]\right)dF_\rho(r).
\end{align*}
Mimicking the conventional assumption on the ambiguity attitude function in 
\cite{Klibanoff05} and \cite{Klibanoff09}, we shall mandate that the aggregation attitude function to be strictly increasing.

\begin{remark}\label{rem-phi}
More precisely, our aggregation preference in \eqref{stopping-pb} is inspired by the smooth ambiguity preference in \cite{Klibanoff05} and \cite{Klibanoff09}) defined by
\begin{align}\label{ambut}
\int_{\Delta}\phi\left(\int_S u(f)d\pi\right)d\mu,
\end{align}
where $f$ is a real-valued function defined on a state space $S$, $u$ is a von Neumann-Morgenstern utility function, $\pi$ is a probability distribution on $S$, $\mu$ is the decision maker's subjective prior over the set $\Delta$ of possible probabilities $\pi$ over $S$, and $\phi(\cdot)$ is the ambiguity attitude function characterizing the aversion to mean preserving spreads in $\mu_f$, where $\mu_f$ is the distribution over expected utility values induced by $\mu$ and $f$.  

Comparing with the formulation in \eqref{ambut}, we are in face of the diversity of the discount rates (or the uncertainty of the discount rates in the single agent's problem) following the distribution $F_{\rho}(r)$ similar to the model uncertainty described by the measure $\mu$ over the set of all priors. In particular, our attitude function $\phi(\cdot)$ essentially plays the same role as the one in \eqref{ambut} to indicate the aggregation rules towards different values of $\E\left[e^{-r \tau}g(X^x_\tau)\right]$ under the diverse or uncertain discount rates. 

Similar to Theorem 2 and Corollary 3 in  \cite{Klibanoff05}, one can introduce the attitude coefficient defined by $\zeta(x):=-\frac{\phi''(x)}{\phi(x)}$ in the context of decision making on behalf of a group. For two central planners $1$ and $2$, if $\zeta_1(x)\geq \zeta_2(x)$ for all $x$, the central planner $1$ values more on the larger discount rate in the group than the central planner 2, i.e., the central planner 1 will choose the stopping policy more on behalf of the older members who are more impatient; vice versa, the central planner 2 will cling more to the side of patient members in the group.

\end{remark}
}

Similar to \cite{HuangNgu18}, \cite{HuangZhou20} and some subsequent studies, we choose to study the time-consistent equilibrium stopping within the framework of the mild equilibrium. We aim to extend the iterative approach proposed in \cite{HuangNgu18} and \cite{HuangZhou20} in the current framework under the nonlinear double integrals. 

{ 
Due to the issue of time inconsistency, the central planner may re-evaluate and change the collective stopping policy over-time. The consistent planning in  \cite{Strotz} suggests a game-theoretic thinking against all future selves. In the current context, the future selves refer to the central planner who makes the stopping policy on behalf of the group. The goal of the central planner is to find an equilibrium stopping policy $R$ such that all future selves of the central planner will not overturn. To simplify some future arguments, let us only focus on stopping policies among all hitting times, i.e., the central planner decides to stop or to continue depending on the current state $x\in\mathbb{R}$.
}

\begin{definition}[Stopping Policy]
A Borel measurable $\tau:\R\to\{0,1\}$ is called a stopping policy and it is equivalent to
\begin{equation*}
	\tau(x)=\mathbbm{1}_{R^c}(x),\qquad\text{for some }R\in\B(\R).
\end{equation*}
\end{definition}

The game theoretic thinking can be carried out in the following way: Assuming that all future selves will follow an arbitrary stopping policy $R\in\B(\R)$, the decision maker has two options: stopping or continuation. If the decision maker stops, he can get the immediate payoff $\phi(g(x))$; if he decides to continue and suppose the current state is $x\in\R$, the decision maker will eventually stop at the instant
\begin{equation*}
	\tau(x,R):=\inf\{t>0,X_t^x\in R\},
\end{equation*}
and the expected future payoff $J(x,R)$ under the smooth aggregation preference is given by
\begin{equation*}
	J(x,R):=\int_0^\infty \phi\Big(\E\Big[e^{-r \tau(x,R)}g(X^x_{\tau(x,R)})\Big]\Big) dF_\rho(r).
\end{equation*}

To determine the stopping policy for today, in response to future selves following $R\in \B(\R)$, the decision maker can improve the current stopping policy $R\in\B(\R)$ by comparing $\phi(g(x))$ and $J(x,R)$. It then leads to the following \textit{policy improvement operator} $\Theta(\cdot)$ defined by
\begin{equation*}
	\Theta(R):=S_R\cup (I_R\cap R).
\end{equation*}  
where we define the stopping region, the indifference region and the continuation region by $S_R,I_R$ and $C_R$ respectively:
\begin{equation*}
	\begin{aligned}
	S_R:=\{x\in\R:\;\phi(g(x))>J(x,R)\},
	\\
	I_R:=\{x\in\R:\;\phi(g(x))=J(x,R)\},
	\\
	C_R:=\{x\in\R:\;\phi(g(x))<J(x,R)\}.
	\end{aligned}
\end{equation*}

Using Lemma 4.4 of \cite{HuangZhou20}, we have the following result.
\begin{lemma}\label{lem-onediff}
If $X$ is a one dimension diffusion process, we have $\tau(x,R)=\tau(x,\bar R)=0$ for any $x\in \bar R$ and $R\in\B(\R)$, where $\bar R$ is the closure of $R$. Consequently, $\phi(g(x)) = J(x,R)$, for any $x\in \bar R$ and $R\in\B(\R)$.

\end{lemma}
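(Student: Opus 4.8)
The plan is to isolate the single substantive assertion---that $\tau(x,R)=0$ almost surely whenever $x\in\bar R$---and to read off the rest as an immediate consequence. Indeed, once $\tau(x,R)=0$ is known, we have $X^x_{\tau(x,R)}=X^x_0=x$, so that for every $r\ge 0$ the inner expectation is $\E[e^{-r\cdot 0}g(x)]=g(x)$; substituting into the definition of $J$ yields
\[
J(x,R)=\int_0^\infty \phi\big(g(x)\big)\,dF_\rho(r)=\phi(g(x)),
\]
because $F_\rho$ is a probability distribution with total mass one. Furthermore, since $R\subseteq\bar R$ forces $\tau(x,\bar R)\le\tau(x,R)$, both hitting times vanish and therefore coincide. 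Thus the whole lemma is reduced to the single hitting-time identity.

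First I would reduce that identity to the local oscillation of the diffusion at $x$. For a non-degenerate one-dimensional diffusion every point is regular for both half-lines $(-\infty,x)$ and $(x,\infty)$, so that for every $t>0$ one has $\inf_{0\le s\le t}X^x_s<x<\sup_{0\le s\le t}X^x_s$ almost surely. Granting this, path-continuity shows that the range of $(X^x_s)_{0\le s\le t}$ is the interval $[\,\inf_{s\le t}X^x_s,\ \sup_{s\le t}X^x_s\,]$, which contains an open neighborhood $(x-\delta,x+\delta)$ of $x$; moreover every point of this neighborhood is attained at some strictly positive time (for values other than $x$ this is immediate since $X^x_0=x$, and $x$ itself is revisited because the path crosses it). As $x\in\bar R$, the neighborhood $(x-\delta,x+\delta)$ meets $R$, so $X^x_s\in R$ for some $s\in(0,t]$ and hence $\tau(x,R)\le t$. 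Letting $t\downarrow 0$ gives $\tau(x,R)=0$.

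The main obstacle, and the only step that genuinely invokes the one-dimensional diffusion structure rather than mere continuity, is the two-sided regularity $\inf_{s\le t}X^x_s<x<\sup_{s\le t}X^x_s$ for all $t>0$. For a regular diffusion this is the standard instantaneous-oscillation property: after the canonical scale and time change the process behaves locally like a Brownian motion, for which the law of the iterated logarithm produces the two-sided fluctuation at the origin. This is exactly the input supplied by Lemma 4.4 of \cite{HuangZhou20} and Proposition 3.8 of \cite{HuangYu21}, so rather than re-deriving it I would quote those results for this single point. With the oscillation property granted, the continuity and measure-theoretic manipulations above complete the argument and deliver $\phi(g(x))=J(x,R)$ for all $x\in\bar R$.
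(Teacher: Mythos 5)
Your proposal is correct and follows essentially the same route as the paper, which gives no argument of its own and simply invokes Lemma 4.4 of \cite{HuangZhou20} and Proposition 3.8 of \cite{HuangYu21}; you defer to the same sources for the one genuinely diffusion-theoretic input (the instantaneous two-sided oscillation at points of $\bar R$) and correctly spell out the routine deductions --- $\tau(x,\bar R)\le\tau(x,R)=0$, hence $X^x_{\tau(x,R)}=x$ and $J(x,R)=\int_0^\infty\phi(g(x))\,dF_\rho(r)=\phi(g(x))$ --- that the paper leaves implicit. The only caveat, which the paper also glosses over, is that the oscillation property is asserted at interior regular points; degenerate boundary points (e.g.\ an absorbing state) would need the regularity hypotheses of the cited results.
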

By Lemma \ref{lem-onediff}, the policy improvement operator can be reduced to 
$$\Theta(R)=S_R\cup R,$$ 
and it hence follows that $R\subseteq \Theta(R)$.

\section{Characterization of Time-consistent Equilibria}\label{sec:char}
Based on the previous game theoretic thinking and the iteration improvement by the operator $\Theta(\cdot)$, a time-consistent equilibrium stopping policy $R$ is the one when $\Theta(R)$ cannot improve the stopping region $R$ and hence all future selves have no incentives to deviate from $R$. It then naturally leads to a fixed point of the operator $\Theta(\cdot)$. We can then give the definition of an equilibrium in the following sense. 

\begin{definition}[Equilibrium] \label{def:equilibrium}
	$R\in\B(\R)$ is called an equilibrium if it is the fixed point of the operator $\Theta(\cdot)$ that $\Theta(R)=R$. Let us denote $\EE$ as the set of all equilibria.
\end{definition}

Note that $R=\R$ is a trivial equilibrium, and hence the set $\EE$ is not empty. The next natural question is: How to characterize all time-consistent equilibria in the set $\EE$? In response to this, we look for an equilibrium using the iteration operator: Suppose that we start with an arbitrary $R\in\B(\R)$, can we apply $\Theta$ repetitively to reach the fixed point? As we have $R\subseteq\Theta(R)$, it follows that $(\Theta^n(R))_{n\in\mathbb{N}}$ is a nondecreasing sequence of sets in $\R$ and hence $R^*:=\lim_{n\rightarrow\infty}\Theta^n(R)=\cup_{n=0}^\infty\Theta^n(R)$ is well-defined for any  $R\in\B(\R)$. Our next result shows that the limit set $R^*$ starting from an arbitrary $R\in\B(\R)$ is a fixed point of $\Theta(\cdot)$ and is a time-consistent equilibrium. We therefore can characterize all equilibria in $\EE$ in the same fashion.

\begin{theorem}\label{thm-fixed}
	The set $\EE$ can be characterized by
	\begin{equation*}
		\EE=\Big\{\lim_{n\to\infty}\Theta^n(R)=\cup_{n=0}^\infty\Theta^n(R):R\in\B(\R)\Big\}
	\end{equation*}
\end{theorem}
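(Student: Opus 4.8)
The plan is to prove the set equality by establishing the two inclusions separately, with essentially all of the work concentrated in showing that an arbitrary limit set $R^*:=\cup_{n=0}^\infty\Theta^n(R)$ is a fixed point of $\Theta$. The inclusion $\EE\subseteq\{\cup_{n=0}^\infty\Theta^n(R):R\in\B(\R)\}$ is immediate: if $R\in\EE$ then $\Theta(R)=R$, hence $\Theta^n(R)=R$ for every $n$, so $\cup_{n=0}^\infty\Theta^n(R)=R$ and $R$ already lies in the right-hand side (taking the starting set to be $R$ itself).

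For the reverse inclusion, fix any $R\in\B(\R)$ and write $R_n:=\Theta^n(R)$, so that $R_n\uparrow R^*$. Since $R\subseteq\Theta(R)$ holds for every Borel set, we already have $R^*\subseteq\Theta(R^*)$, and by the reduced form it remains to prove $\Theta(R^*)=S_{R^*}\cup R^*\subseteq R^*$, i.e.\ $S_{R^*}\subseteq R^*$. I would argue by contradiction: suppose $x\in S_{R^*}\setminus R^*$, so that $\phi(g(x))>J(x,R^*)$. Because $R_n\subseteq R^*$, the point $x$ lies outside every $R_n$; and since $R_{n+1}=S_{R_n}\cup R_n\supseteq S_{R_n}$, the fact that $x\notin R_{n+1}$ forces $x\notin S_{R_n}$, that is $\phi(g(x))\le J(x,R_n)$ for all $n$. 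The contradiction then follows once I show $J(x,R_n)\to J(x,R^*)$, since passing to the limit would give $\phi(g(x))\le J(x,R^*)$. Once $\Theta(R^*)=R^*$ is in hand we conclude $R^*\in\EE$; measurability is not an issue, as $R^*$ is a countable union of the Borel sets $\Theta^n(R)$.

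The heart of the argument, and the step I expect to be the main obstacle, is this continuity of $J(x,\cdot)$ along the increasing sequence $R_n\uparrow R^*$. First I would establish the purely pathwise fact that the hitting times converge monotonically, $\tau(x,R_n)\downarrow\tau(x,R^*)$: the inequality $\tau(x,R_n)\ge\tau(x,R^*)$ holds because $R_n\subseteq R^*$, while for the reverse direction one uses that any time $t$ with $X^x_t\in R^*$ satisfies $X^x_t\in R_N$ for some $N$, so the decreasing limit of $\tau(x,R_n)$ cannot exceed $\tau(x,R^*)$. Given this, continuity of the diffusion paths yields $X^x_{\tau(x,R_n)}\to X^x_{\tau(x,R^*)}$ on $\{\tau(x,R^*)<\infty\}$, while on $\{\tau(x,R^*)=\infty\}$ the discount factor $e^{-r\tau}$ vanishes for $r>0$; hence $e^{-r\tau(x,R_n)}g(X^x_{\tau(x,R_n)})\to e^{-r\tau(x,R^*)}g(X^x_{\tau(x,R^*)})$ almost surely. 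Invoking the continuity and integrability hypotheses on $g$ and $\phi$, a first application of dominated convergence passes this through the inner expectation and the continuous map $\phi$, and a second application (with $F_\rho$ a probability measure) passes it through the outer integral $dF_\rho(r)$, giving $J(x,R_n)\to J(x,R^*)$.

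The delicate points to verify carefully are precisely where the analysis could break: the pathwise hitting-time convergence, where one must be attentive to whether hitting a set versus its closure matters for the diffusion — exactly the regularity encapsulated in Lemma \ref{lem-onediff} — and the domination required to justify both interchanges of limit. I would keep these two verifications as the technical core and treat the two set inclusions as the clean bookends around them.
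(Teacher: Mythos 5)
Your proposal is correct and follows essentially the same route as the paper: both directions reduce to showing $\phi(g(x))\le J(x,R^*)$ for $x\notin R^*$ by passing the inequalities $\phi(g(x))\le J(x,R_n)$ to the limit via the pathwise convergence $\tau(x,R_n)\to\tau(x,R^*)$ and two applications of dominated convergence (the paper invokes Lemma 3.10 of \cite{HuangYu21} for the hitting-time convergence and the localization of $X^x_{\tau(x,R_n)}$ in a compact interval that supplies the dominating bound you flag as needing verification). The only cosmetic difference is that you phrase the key step as a contradiction rather than a direct verification.
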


\begin{proof}
By the definition of the equilibrium, if $R\in\EE$, then $\Theta^n(R)=R$, which leads to $\lim_{n\to\infty}\Theta^n(R)=R$ and verifies that
\begin{equation*} 
\EE\subset\Big\{\lim_{n\to\infty}\Theta^n(R)=\cup_{n=0}^\infty\Theta^n(R):R\in\B(\R)\Big\}
\end{equation*}
To prove the other direction, let us denote $R_n:=\Theta^n(R)$ and since $R\subset \Theta(R)$, we have $R_n\subset R_{n+1}$. Therefore, it suffices to show that $\Theta(R^*)=R^*$, where $R^*:=\lim_{n\to\infty}R_n=\cup_{n=0}^\infty R_n$.  It is equivalent to prove that for $x\notin R^*$, $J(x,R^*)\geq \phi(g(x))$. As $x\notin R^*$, we know that $x\notin \Theta(R_n)$ for all $n\in\N$, and therefore 
\begin{equation*}
\int_0^\infty\phi\Big(\E\Big[e^{-r \tau(x,R_n)}g(X^x_{\tau(x,R_n)})\Big]\Big)dF_\rho(r)=J(x,R_n)\geq \phi(g(x)).
\end{equation*}
{ Note that $\tau(x,R_n)$ and $\tau(x,R)$ only depend on the dynamics of $X$ regardless of the choice of the attitude function $\phi(x)$. We therefore can adopt Lemma 3.10 in \cite{HuangYu21} for one-dimensional diffusion process to get the convergence that}, for each $x\in\R$,
\begin{equation*}
	\tau(x,R_n)\to\tau(x,R^*),\qquad\text{a.s.}
\end{equation*}
Let us define $p_n$, $q_n$ by
$$
p_n:=\sup\{y\in R_n:y<x\}\text{ and }q_n:=\inf\{y\in R_n:y>x\},
$$
As  $R^*=\lim_{n\to\infty}R_n=\cup_{n=0}^\infty R_n$, we have $p_n\uparrow p$ and $q_n\downarrow q$, where
$$
p:=\sup\{y\in R_*:y<x\}\text{ and }q:=\inf\{y\in R_*:y>x\}.
$$
Note that $X^x_{\tau(x,R_n)}$ is either $p_n$ or $q_n$, then there exist some deterministic $-\infty<p^*<p<q<q^*<\infty$, and $n^*\in\N$ such that for $n>n^*$ and $\tau(x,R_n)<\infty$, we have 
\begin{equation*}
	X^x_{\tau(x,R_n)}\in[p^*,q^*],\qquad\text{a.s.}
\end{equation*}
and 
\begin{equation*}
	\mathbbm{1}_{\{\tau(x,R^*)=\infty\}}\cdot e^{-\rho \tau(x,R^*)}g(X^x_{\tau(x,R^*)})=\mathbbm{1}_{\{\tau(x,R_n)=\infty\}}\cdot e^{-\rho \tau(x,R_n)}g(X^x_{\tau(x,R_n)})=0,\qquad\text{a.s..}
\end{equation*}
Finally, by the continuity of $g$ and $\phi$ and the Dominate Convergence Theorem, we have
\begin{equation*}
	\begin{aligned}
	J(x,R^*)&=\int_0^\infty\phi\Big(\E\Big[\mathbbm{1}_{\{\tau(x,R^*)<\infty\}}\cdot e^{-r \tau(x,R^*)}g(X^x_{\tau(x,R^*)})\Big]\Big)dF_{\rho}(r)
    \\
	&=\lim_{n\to\infty}\int_0^\infty\phi\Big(\E\Big[\mathbbm{1}_{\{\tau(x,R_n)<\infty\}}\cdot e^{-r \tau(x,R_n)}g(X^x_{\tau(x,R_n)})\Big]\Big)dF_\rho(r)
	\\
	&=\lim_{n\to\infty} J(x,R_n)\geq \phi(g(x)),
	\end{aligned}
\end{equation*}
which completes the proof.
\end{proof}

{ 
\begin{remark}
We note that our arguments above are only applicable to the setting of one-dimensional diffusion processes. In particular, the ordered inclusion $R\subseteq \Theta(R)$, and hence the monotonicity $R_n\subseteq R_{n+1}$, relies on Lemma 4.4 of \cite{HuangZhou20} in the one-dimensional diffusion setting. Note that in \cite{HuangNgu18}, this monotonicity result $R_n\subseteq R_{n+1}$ is also established in a multi-dimensional setting under the additional assumption of the decreasing impatience structure of the discount, which unfortunately does not hold in our framework due to the double expectation under the attitude function.    
\end{remark}
}

\section{Sufficient Conditions on the Existence of Optimal Equilibrium}\label{sec:optim}

Let us consider the real option valuation and capital budgeting decision by the government who needs to reconcile the intergenerational disagreement on the discount rate. In particular, we shall consider the decision making by the government on the cessation of investment in a project such as an R\&D project. By adopting the financial option valuation techniques, we encounter an infinite horizon optimal stopping problem with the aggregation of diverse discount rates that
\begin{align*}
\sup_{\tau\in \TT} \int_0^\infty \,\phi\left(\E\left[e^{-r \tau}g(X^x_\tau)\right]\right)dF_\rho(r)
\end{align*}
with the Put payoff function $g(x)=(K-x)^+$ for some $K>0$. For technical convenience, we shall only focus on the Put payoff function in the present paper, and will leave the general case for future investigation. The above {  $F_\rho(r)$ is the distribution function of the discounting rate $\rho$ with support $[0, \infty)$ that} describes the distribution of diverse discount rates from the group members, and the attitude function $\phi(x)$ captures the central planner's weight towards different time preference. {  Throughout the paper, when we refer to \textit{increasing} (resp. \textit{decreasing}) we mean \textit{non-decreasing} (resp. \textit{non-increasing}).}

Our goal in this section is to examine some sufficient conditions on the underlying one dimensional diffusion process and the attitude function $\phi(x)$ such that the smallest time-consistent equilibrium is the optimal equilibrium for our time inconsistent stopping problem \eqref{stopping-pb} when all equilibrium stopping policies fit the type of one-barrier stopping region. We shall only focus on the underlying project process modeled by a non-negative one dimensional diffusion process $X_t\geq 0$ a.s. for $t\geq 0$. More precisely, it satisfies the stochastic differential equation:
 $$
 d X_t = \mu(X_t) dt + \sigma(X_t) d W_t,
 $$
 where $\mu, \sigma: \R \to \R$ are Lipschitz continuous, and $\sigma^2(x) > 0$ when $x>0$. This guarantees that the above equation admits a unique strong solution given that $X_0 = x \in {   [0,\infty)}$. 

{   
Let us first introduce some notations. Fix some number $\epsilon>0$, we define the scale function $u(x)$ and the speed measure $m(y)\d y$ as below:
\begin{align*}
m(y)\d y := \frac{1}{\sigma^2(y)} \exp \left( \int_\epsilon^y \frac{2 \mu(z)}{\sigma^2(z)} dz \right) \d y,
\\
u(x):= \int_\epsilon^x \exp \left( - \int_\epsilon^y \frac{2 \mu(z)}{\sigma^2(z)} \d z \right) \d y.
\end{align*}

The following conditions on model parameters and the attitude function are imposed in this section.
\ \\
\textbf{C-(i)}: It is assumed that $\mu(x) \geq 0$ for all $x \geq 0$. 

\vspace{2mm}

\hspace{-7mm} 
\textbf{C-(ii)}: We define
$$
\phi_1(x):=\frac{\mu(x) + \sqrt{\mu^2(x)+ 2r \sigma^2(x)}}{\sigma^2(x)},
$$
and further define $I_1,I_2$ by
$$
I_1:=\int_\epsilon^{\infty} u'(x)\cdot \Big(\int_\epsilon^xm(y) \d y\Big) \d x,\quad 
I_2:=\int_\epsilon^{\infty} u(x)m(x) \d x.
$$

We assume that the following conditions hold:
\begin{enumerate}
    \item $\phi'_1(x)<0$ for all $x \geq 0$, and
$$
\limsup_{x\to\infty} \frac{\mu(x)+1}{\sigma^2(x)} < \infty.
$$
    \item $I_1=+\infty$, $I_2=+\infty$.
\end{enumerate}

\vspace{2mm}

\hspace{-7mm} 
\textbf{C-(iii)}: The attitude function $\phi(x)$ is $C^1$ and satisfies that $\phi'(x)x$ is an increasing function and $\phi(x)$ is strictly increasing.
}

\subsection{Discussions on the conditions imposed}\label{sec4.1}

\textbf{Discussion on condition  \textbf{C-(i)}}

Similar to Lemma 4.3 of \cite{HuangYu21}, under Assumption \textbf{C-(i)},  the time-consistent equilibrium is of the one-barrier type, satisfying the form of $[0,a]$ for some $0\leq a\leq K$.

\begin{lemma}\label{lem-oneb}
If Condition \textbf{C-(i)} holds, {  $F_{\rho}(0) < 1$} and $\phi(x)$ is strictly increasing, any closed time-consistent equilibrium contained in $[0,K]$ satisfies the form of $R=[0,a]$ for some $0\leq a\leq K$. 
\end{lemma}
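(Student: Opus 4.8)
The plan is to reformulate the fixed-point condition into a single pointwise inequality and then read off the geometry of $R$ from the monotonicity of the Put payoff $g(x)=(K-x)^+$ together with the sub-martingale hypothesis.

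First I would invoke Lemma~\ref{lem-onediff}: since $R$ is closed, every $x\in R$ has $\tau(x,R)=0$, so $\phi(g(x))=J(x,R)$ and hence $R\subseteq I_R$. Consequently $\Theta(R)=S_R\cup R$, and the equilibrium identity $\Theta(R)=R$ is equivalent to $S_R\subseteq R$; but $S_R\cap R=\emptyset$ because $R\subseteq I_R$, so a closed equilibrium is exactly a closed set with $S_R=\emptyset$, i.e. $\phi(g(x))\le J(x,R)$ for every $x\in\R$, with equality on $R$. This reformulation is the workhorse: to locate $R$ I only need to understand where stopping can \emph{strictly} beat continuation.

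Second, I would pin down the two endpoints. For $x=0$ we have $g(0)=K=\max g$; if $0\notin R$, then $R$ closed and $X\ge 0$ force $\tau(0,R)>0$ a.s., and since $g\le K$ and $e^{-r\tau}<1$ strictly on $\{r>0\}$, which carries positive mass by $\P(\rho>0)>0$, the strict monotonicity of $\phi$ yields $J(0,R)<\phi(K)=\phi(g(0))$; thus $0\in S_R$, contradicting $S_R=\emptyset$, so $0\in R$. At the other end, for any $x\ge K$ we have $g(x)=0$ while $\E[e^{-r\tau(x,R)}g(X^x_{\tau(x,R)})]\ge 0$, so $J(x,R)\ge\phi(0)=\phi(g(x))$; hence $[K,\infty)\cap S_R=\emptyset$, i.e. the improvement operator never recruits an out-of-the-money point, and once the interval structure below is in place this pins the barrier at $a\le K$.

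Third --- and this is the main obstacle --- I would prove that $R$ has no gaps, so that $0\in R$ and closedness give $R=[0,a]$. Arguing by contradiction, suppose there is a maximal open interval $(\ell,u)\subseteq R^c$ with endpoints $\ell,u\in R$ and $0\le\ell<u$. For $z\in(\ell,u)$ the hitting time $\tau(z,R)$ is the exit time of $(\ell,u)$ and $X^z_{\tau}\in\{\ell,u\}$. Optional sampling applied to the sub-martingale $g(X_t)=(K-X_t)^+$ gives the undiscounted bound $\E[g(X^z_{\tau})]\ge g(z)$, so in the no-discounting regime continuation is favourable. The crux is to show that discounting on the positive-mass set $\{r>0\}$, together with the strict monotonicity of $\phi$ and the decreasing shape of $g$ on $(\ell,u)\cap[0,K]$, strictly reverses this at some interior $z$, producing $\phi(g(z))>J(z,R)$ and hence $z\in S_R$, the desired contradiction. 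I expect this quantitative comparison to be the delicate step, because any mass of $\rho$ near $0$ makes the continuation value look attractive; the contradiction must come from a uniform estimate showing that $\int_0^\infty\phi\big(\E[e^{-r\tau}g(X^z_{\tau})]\big)dF_\rho(r)$ drops strictly below $\phi(g(z))$ somewhere across the gap, and it is precisely here that the sub-martingale hypothesis and the monotone structure of the Put payoff are indispensable. Once no gap can occur, combining this with $0\in R$, closedness, and $[K,\infty)\cap S_R=\emptyset$ yields $R=[0,a]$ with $0\le a\le K$, as claimed.
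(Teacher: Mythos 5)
Your first two steps are sound and match the paper's implicit framework: for a closed $R$, Lemma \ref{lem-onediff} gives $R\subseteq I_R$, so the fixed-point condition $\Theta(R)=R$ is indeed equivalent to $S_R=\emptyset$, i.e.\ $\phi(g(x))\le J(x,R)$ for all $x$, and your argument that $0\in R$ (via $e^{-r\tau}<1$ on the positive-mass set $\{r>0\}$ and strict monotonicity of $\phi$) is correct. The problem is that your third step --- ruling out a gap $(\ell,u)\subseteq R^c$ with $\ell,u\in R$ --- is not a proof but a description of what a proof would have to deliver. You state that ``the contradiction must come from a uniform estimate showing that $\int_0^\infty\phi\big(\E[e^{-r\tau}g(X^z_{\tau})]\big)\,dF_\rho(r)$ drops strictly below $\phi(g(z))$ somewhere across the gap,'' but you never produce that estimate. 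This is the entire content of the lemma, and your own observation shows why it cannot be waved through: optional sampling for the sub-martingale $(K-X_t)^+$ gives $\E[(K-X^z_{\tau})^+]\ge (K-z)^+$, so the undiscounted comparison favours continuation, and if $F_\rho$ concentrates mass near $r=0$ the continuation value $J(z,R)$ sits arbitrarily close to $\phi\big(\E[(K-X^z_\tau)^+]\big)\ge\phi((K-z)^+)$. Nothing in your sketch explains why the factor $e^{-r\tau}$ on the set $\{r>0\}$, which only carries \emph{some} positive mass, must overcome this. As written, the proposal is incomplete at its central step.

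For comparison, the paper disposes of the gap in two lines: it takes $x\in(0,a)\setminus R$ with $a=\sup R$, notes $\tau(x,R)>0$ a.s.\ by closedness, and writes $J(x,R)<\int_0^\infty\phi\big(\E[(K-X^x_{\tau(x,R)})^+]\big)\mathbf{1}_{\{r>0\}}\,dF_\rho(r)\le\phi((K-x)^+)$, concluding $x\in S_R$. The first (strict) inequality is the discounting effect you also use; the second is asserted without justification, and it is precisely the bound you flag as delicate --- indeed it sits in tension with the sub-martingale hypothesis, which via optional sampling pushes $\E[(K-X^x_\tau)^+]$ \emph{above} $(K-x)^+$ rather than below it. So while your proposal does not close the argument, it does locate exactly the step where the paper's own proof is, at best, under-justified; to turn your sketch into a proof you would need to supply the quantitative comparison you describe (or an additional structural restriction on $X$, e.g.\ that the gap can only be exited at its upper endpoint, as happens for geometric Brownian motion), and that is not a routine fill-in.
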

\begin{proof}
As $\mu(x) \geq 0$ for all $x \geq 0$, $K-X_t$ is supermartingale.
Let $R\in\mathcal{E}$ be a time-consistent equilibrium contained in $[0,K]$ and $a:=\sup\{x:x\in R,x\leq K\}$. Suppose that there exists $x\in (0,a)$ such that $x\notin R$. By the closeness of $R$, we have $\tau(x,R)>0$, $\mathbb{P}$-a.s.. Since $R$ is contained in $[0,K]$, we have $0<x<a\leq K$ and therefore {  $X_{\tau(x,R)}^x\leq K$}.  Hence
\begin{align*}
J(x,R)&=\int_0^\infty \phi\left(\mathbb{E}[e^{-r\tau(x,R)} (K-X^x_{\tau(x,R)})^+]\right)d F_\rho(r)\\
&=\int_0^\infty \phi\left(\mathbb{E}[e^{-r\tau(x,R)} (K-X^x_{\tau(x,R)})]\right)d F_\rho(r)
\\
&<\int_0^\infty \phi\left(\mathbb{E}[K-X^x_{\tau(x,R)}]\right)\mathbf{1}_{\{r>0\}} d F_\rho(r) + \int_0^\infty \phi\left(\mathbb{E}[K-X^x_{\tau(x,R)}]\right)\mathbf{1}_{\{r=0\}} d F_\rho(r) \\
&\leq  \phi(K-x) \mathbf{1}_{\{r>0\}} +  \phi(K-x) \mathbf{1}_{\{r=0\}} =\phi(K-x),
\end{align*}
which is a contradiction to the definition of the equilibrium. Here, the first strict inequality is due to the fact that $e^{-r\tau(x,R)} < 1$ when $r > 0$ and $e^{-r\tau(x,R)} \leq 1$ when $r \geq 0$; The second inequality results from the super-martingale property. Therefore, it holds that any time-consistent equilibrium satisfies the form of $R=[0,a]$ under the given assumptions.
\end{proof}
\vspace{0.1in}

\noindent\textbf{Discussion on condition  \textbf{C-(ii)}}
{ 

In what follows, we show that Condition \textbf{C-(ii)} is related to some monotonicity conditions of the
moment generating function of the first hitting time $\tau_a^x$.

 \begin{proposition}\label{prop4.2} Define $H(r,a,x):=\ln(\mathbb{E}[e^{-r \tau_a^x}])$, where $\tau_a^x:=\inf\{t>0: X_t^x=a\}$ denotes the first hitting time to the barrier $a$ by the diffusion process $X^x_t$ with $X^x_0=x\geq a$. Under Condition \textbf{C-(ii)}, we have
\begin{itemize}
\item[($a$)] 
$\lim_{x\downarrow a}\frac{\partial }{\partial x}H(r,a,x)$ is increasing in $a\in [0, K]$ for the fixed $r$;
\item[($b$)] $\frac{\partial}{\partial a} H(r,a,x)$ is increasing in $r$ for the fixed $a\leq x$.
\end{itemize}
     \end{proposition}

\begin{remark}\label{cond-process}
As $(X_t)_{t\geq 0}$ is a one-dimensional diffusion process, it follows from \cite[Part 1, Chapter II, Item 10]{BorodinSalminen}, that $\E [e^{-r \tau_a^x}]=\frac{\varphi_{r}(x)}{\varphi_{r}(a)}$, where $\varphi_{r}(x)$ is the (up to a multiplicative constant) decreasing positive solution of the ODE
\begin{align}\label{phiODE}
\frac{1}{2} \sigma^2(x) u''(x) + \mu(x) u'(x) = r u(x),
\end{align}
see \cite[Part 1, Chapter II, Item 9]{BorodinSalminen}. 
It follows that $H(r,a,x)=\ln (\varphi_{r}(x))-\ln(\varphi_{r}(a))$. Therefore, conditions $(a)$ and $(b)$ in Proposition \eqref{prop4.2}  on the hitting time $\tau_a^x$ can actually be equivalently written as the ones on the function $\varphi_{r}(x)$ associated to the underlying process $(X_t)_{t\geq 0}$ that
\begin{itemize}
\item[($a'$)] $\frac{\varphi'_{r}(x)}{\varphi_{r}(x)}$ is increasing in $x\in[0,K]$ for the fixed $r$;

\item[($b'$)] $-\frac{\varphi'_{r}(x)}{\varphi_{r}(x)}$ is increasing in $r$ for the fixed $x\in [0,K]$.
\end{itemize}
Moreover, as $\varphi_{r}(x)$ is the unique solution to the ODE \eqref{phiODE}, let us define $m_{r}(x):=\frac{\varphi'_{r}(x)}{\varphi_{r}(x)}$, it holds that $m_{r}(x)$ is the unique solution to the Riccati ODE
\begin{align}\label{mODE}
\frac{1}{2}{\sigma}^2(x)m'_{r}(x)+\frac{1}{2}\sigma^2(x)m_{r}^2(x)+\mu(x)m_{r}(x)-r=0, 
\end{align}
That is, for a given one-dimensional diffusion process $(X_t)_{t\geq 0}$, the conditions $(a)$ and $(b)$ in \textbf{C-(ii)} are fulfilled if and only if the solution $m(x)$ to the Riccati ODE \eqref{mODE} satisfies
\begin{itemize}
\item[($a''$)] $m_{r}(x)$ is increasing in $x\in[0,K]$ for the fixed $r$; 
\item[$(b'')$] $-m_{r}(x)$ is increasing in $r$ for the fixed $x\in [0,K]$.
\end{itemize}
\end{remark}



Let us first show that the Condition (a) in Proposition \ref{prop4.2} is implied by Condition \textbf{C-(ii)}-$1$.
\begin{proposition}\label{prop:main}
If Condition \textbf{C-(ii)-$1$} is valid, then Condition (a) in Proposition \ref{prop4.2} holds.
\end{proposition}
\begin{proof}
Introducing $\tilde{m}_r(x): = - m_r(x)>0$, we have that $\tilde{m}_r$ satisfies
\begin{equation} \label{eq:riccati}
\tilde{m}'_r(x) = \tilde{m}^2_r(x)-\frac{2\mu(x)}{\sigma^2(x)} \tilde{m}_r(x) - \frac{2r}{\sigma^2(x)}.
\end{equation}
Thanks to Remark \ref{cond-process}, it suffices to justify that $x \mapsto \tilde{m}_r(x)$ decreasing. 
To continue, we will first show the claim in the next lemma. Recall that a point is \textit{accessible}, if it can be reached by the process in finite time with positive probability.
\begin{lemma} \label{lemma:access}
Under Condition \textbf{C-(ii)-$1$}, any point $x_0\in (0, \infty)$ is accessible.
\end{lemma}
\begin{proof}
Following \cite[Theorem 8, item A]{Helland} (see also \cite[Proposition 16.43]{Breiman}), it is enough to check that $u(x_0) < \infty$, and $\int_b^{x_0} (u(x_0)-u(y)) m(dy) < \infty$, for an arbitrary point $b \in (0, x_0)$. This is straightforward, as Condition \textbf{C-(ii)-$1$} guarantees that both $\frac{\mu(x)}{\sigma^2(x)}$ and $\frac{1}{\sigma^2(x)}$ are bounded by some constant $C>0$. Consequently, $u(x_0) \leq x_0 - \epsilon < \infty$, and 
$$
\int_b^{x_0} (u(x_0)-u(y)) m(dy) \leq \int_b^{x_0} \left[ (x_0-\epsilon)) + (y- \epsilon) \right] C e^{C(y-\epsilon)} dy < \infty.
$$
\end{proof}
Thanks to Lemma \ref{lemma:access}, we have $\tilde{m}_r(x) < \infty$ for any $x \in (0, \infty)$, namely, $\tilde{m}_r$ cannot explode at any finite point. Otherwise, suppose there exists $x_0 \in (0, \infty)$, such that $\tilde{m}_r(x_0) = \infty$. From the definition $\tilde{m}_{r}(x):=-\frac{\varphi'_{r}(x)}{\varphi_{r}(x)}$, we have $\varphi_{r}(x_0) = \infty$. As $x \mapsto \varphi_{r}(x)$ is decreasing and positive, that $\varphi_{r}(x)=0$ for any $x \geq x_0$. As $\E [e^{-r \tau_a^x}]=\frac{\varphi_{r}(x)}{\varphi_{r}(a)}$, we have $\tau_a^x = \infty$ for any $x < x_0 < a$ $\P$-almost surely. However, from Lemma \ref{lemma:access}, $\P(\tau_a^x < \infty) >0$, and we have a contradiction.

Consider the corresponding characteristic function:
$$
\phi^2(x) - \frac{2 \mu(x)}{\sigma^2(x)} \phi(x) - \frac{2 r}{\sigma^2(x)} = 0.
$$
The two roots are
$$
\phi_1(x)=\frac{\mu(x) + \sqrt{\mu^2(x)+ 2r \sigma^2(x)}}{\sigma^2(x)}>0, \ \ \phi_2(x)=\frac{\mu(x) - \sqrt{\mu^2(x)+ 2r \sigma^2(x)}}{\sigma^2(x)}<0.
$$
Therefore,
$$
\tilde{m}'_r(x) = \tilde{m}^2_r(x)-\frac{2\mu(x)}{\sigma^2(x)} \tilde{m}_r(x) - \frac{2r}{\sigma^2(x)}=\big(\tilde{m}_r(x)-\phi_1(x)\big)\big(\tilde{m}_r(x)-\phi_2(x)\big).
$$
The next lemma establishes $\tilde{m}_r(x) < \phi_1(x)$ for all $x>0$. 
\begin{lemma} \label{lemma: mono_x}
Under Condition \textbf{C-(ii)-$1$}, $\tilde{m}_r(x) < \phi_1(x)$ for all $x>0$.
\end{lemma}
\begin{proof}
In the following, we shall consider two sub-scenarios:

Case 1: Suppose there exists $x_0>0$, such that $\tilde{m}_r(x_0)>\phi_1(x_0)>0$, then (as $\phi_2<0$)
$$
\tilde{m}'_r(x_0) = (\tilde{m}_r(x_0) - \phi_1(x_0)) (\tilde{m}_r(x_0) - \phi_2(x_0)) > 0.
$$
Then $\tilde{m}'_r(x)>0$ on $(x_0, +\infty)$(equivalently, $\tilde{m}_r(x)>\phi_1(x)$). Otherwise, denote 
$$
x_1:= \min \{ x > x_0: \tilde{m}'_r(x) =0 \}.
$$
we have $\tilde{m}'_r(x_1)=0$, and consequently $\tilde{m}_r(x_1)= \phi_1(x_1)$. Using the fact that $\tilde{m}_r(x_0) > \phi_1(x_0)$, we have
$$
0 < \int_{x_0}^{x_1} \tilde{m}'_r(x) dx = \tilde{m}_r(x_1) - \tilde{m}_r(x_0) < \phi_1(x_1) - \phi_1(x_0) \leq 0.
$$
Hence a contraction. On the above, the first inequality is because of $\tilde{m}'_r(x)>0$ for all $x \in (x_0, x_1)$. Now for $x > x_0$
$$
\tilde{m}'_r(x) \geq (\tilde{m}_r(x)-\phi_1(x)) \tilde{m}_r(x) \geq (\tilde{m}_r(x_0)-\phi_1(x_0)) \tilde{m}_r(x_0):=C_1 >0.
$$
Consequently, $\tilde{m}_r(x) \to +\infty$ when $x \to +\infty$. From \eqref{eq:riccati}, we get 
$$
\left( \frac{1}{\tilde{m}_r(x) } \right)' = - \frac{\tilde{m}'_r(x)}{\tilde{m}^2_r(x)} =  -1 + \frac{2\mu(x)}{\sigma^2(x) \tilde{m}_r(x)} + \frac{2r}{\sigma^2(x) \tilde{m}^2_r(x)},
$$
As when $x \to +\infty$, $\frac{2r}{\sigma^2(x)}, \frac{\mu(x)}{\sigma^2(x)}$ are bounded, $\tilde{m}_r(x) \to +\infty$, it follows that there exists some $0<C_2<1$ and some constant $C_3$, such that
$$
 \frac{1}{\tilde{m}_r(x)} \leq -C_2 x + C_3,
$$
and is strictly negative for $x$ large enough. This is a contradiction to $\tilde{m}_r(x) \to +\infty$.

Case 2: If there exists $x_0 \in (0, \infty)$, such that $\tilde{m}_r(x_0)=\phi_1(x_0)$. As $\phi_1'(x_0) <0$, $\tilde{m}'_r(x_0)=0$, there exists $\delta>0$ small enough, such that 
$$
\tilde{m}_r(x_0+\delta) - \phi_1(x_0+\delta) >0.
$$
Hence we return to Case 1.

Combining Case 1 and Case 2, we draw the conclusion.
\end{proof}
Given that $\tilde{m}_r(x) < \phi_1(x)$ for all $x>0$, and when this is combined with $\tilde{m}_r(x) >0 > \phi_2(x)$,it follows that  $\tilde{m}'_r(x)<0$, indicating that the function $x \mapsto \tilde{m}_r(x)$ decreasing.
\end{proof}

Next, we move to the Condition (b) in Proposition \ref{prop4.2}. 

\begin{proposition}
    If Condition \textbf{C-(ii)} holds, then Condition (b) in Proposition \ref{prop4.2} also holds.
\end{proposition}
\begin{proof}
    By Condition $(b'')$ in Remark \ref{cond-process}, and the definition of $\tilde m_r(x)$ in the previous proposition, it suffices to show that $r \mapsto \tilde m_r(x)$ is increasing, that is for any $r_1>r_2$, $\tilde m_{r_1}(x)\geq\tilde m_{r_2}(x)$ for all $x\geq 0$. Note that $\tilde m_{r_1}$ and $\tilde m_{r_2}$ satisfy the Ricatti equations
    \begin{equation}
        \begin{aligned}
            \tilde m_{r_1}'(x)=\tilde m_{r_1}^2(x)-\frac{2\mu(x)}{\sigma^2(x)}\tilde m_{r_1}(x)-\frac{2r_1}{\sigma^2(x)},
            \\
            \tilde m_{r_2}'(x)=\tilde m_{r_2}^2(x)-\frac{2\mu(x)}{\sigma^2(x)}\tilde m_{r_2}(x)-\frac{2r_2}{\sigma^2(x)},
        \end{aligned}
    \end{equation}
    Define $a(x):=\tilde m_{r_1}(x)-\tilde m_{r_2}(x)$ and $b(x)=\tilde m_{r_1}(x)+\tilde m_{r_2}(x)-\frac{2\mu(x)}{\sigma^2(x)}$ and $c(x)=\frac{2r_1(x)}{\sigma^2(x)}$, then
    $$
    a'(x)=a(x)b(x)-c(x).
    $$
    It is a linear equation and we can solve it explicitly. Define 
    $$
    f(x):=\exp\Big(-\int_{\epsilon}^xb(y)\d y\Big),
    $$
    for some $\epsilon>0$. Since $\tilde m_{r_i}(x)<\phi_1(x)$ for $i=1,2;$ and $\limsup_{x\to\infty}\frac{\mu(x)+1}{\sigma^2(x)}<\infty$, we know that $f(x)$ is well-defined on $x\in(0,\infty)$. Further, since $f(x)\geq 0$ and $c(x)\geq 0$ for all $x\in(0,\infty)$, $\int_x^\infty f(y)c(y)\d y$ is well-defined and takes value in $[0,+\infty]$. Then, by $\big(a(x)f(x)\big)'=a'(x)f(x)-a(x)f(x)b(x)=-f(x)c(x)$,
    we have
    $$
    a(x)f(x)=\lim_{y\to\infty }a(y)f(y)+\int_x^\infty f(y)c(y)\d y.
    $$
    We claim that $\lim_{y\to\infty }a(y)f(y)=0$, and this leads to our desired result $a(x)\geq 0.$ Referring back to the definition $\varphi_r(x)$ in Remark \ref{cond-process} and noting that $m_r(x)=\frac{\varphi_r'(x)}{\varphi_r(x)}=\big(\log m_r(x)\big)'$, we can derive the following relationship,
    $$
    \varphi_r(x)=\exp\Big(\int_\epsilon^xm_r(y)\d y\Big)=\exp\Big(\int_\epsilon^x-\tilde m_r(y)\d y\Big),\quad\varphi'_r(x)=-\tilde m_r(x)\exp\Big(-\int_\epsilon^x\tilde m_r(y)\d y\Big).
    $$
    Note that 
    \begin{align*}
    |a(x)f(x)|&=\bigg|\Big(\tilde m_{r_1}(x)-\tilde m_{r_2}(x)\Big)\exp\Big(-\int_{\epsilon}^x\Big(\tilde m_{r_1}(y)+\tilde m_{r_2}(y)-\frac{2\mu(y)}{\sigma^2(y)}\Big)\d y\Big)\bigg|
    \\
    &\leq \Big(\tilde m_{r_1}(x)+\tilde m_{r_2}(x)\Big)\exp\Big(-\int_{\epsilon}^x\Big(\tilde m_{r_1}(y)+\tilde m_{r_2}(y)-\frac{2\mu(y)}{\sigma^2(y)}\Big)\d y\Big)
    \\
    &=-\exp\Big(\int_{\epsilon}^x\frac{2\mu(y)}{\sigma^2(y)}\d y\Big)\Big(\varphi_{r_1}'(x)\varphi_{r_2}(x)+\varphi_{r_1}(x)\varphi_{r_2}'(x)\Big).
    \end{align*}
    Recall that $\varphi_r(x)$ solves the linear ODE
    $$
    \frac{1}{2} \sigma^2(x) \varphi''(x) + \mu(x) \varphi_r'(x) = r \varphi_r(x),
    $$
    Since $I_1=I_2=+\infty$, we can apply \cite[Theorem 2 \& 3]{Cecchi1989} to deduce that 
    $$
    \varphi_r(x)\to 0,\quad\varphi_r'(x)\exp\Big(\int_{\epsilon}^x\frac{2\mu(y)}{\sigma^2(y)}\d y\Big)\to 0\quad\text{ as }x\to +\infty.
    $$
    This proves that $a(x)f(x)\to 0$ and thus completes our proof.
    
\end{proof}

The following three examples illustrate that our conditions on the underlying model can be satisfied by many popular stochastic processes in financial applications. 

\noindent
\textbf{Example-1}: If $X_t$ is a geometric Brownian motion that $dX_t=\mu X_tdt+\sigma X_tdW_t$ with $\mu>0$ and $\sigma>0$, and hence the condition \textbf{C-(i)} holds. 
We can calculate $\phi_1(x)=\frac{\mu}{\sigma^2 x} + \sqrt{\frac{\mu^2 + 2r \sigma^2}{\sigma^4 x^2}}$ and it is clear that $\phi_1(x)$ is decreasing in $x$. In addition, $m(y) = \frac{1}{\sigma^2 y^2} (\frac{y}{\epsilon})^{\frac{2\mu}{\sigma^2}}$ and 
$u(x)= \int_\epsilon^x (\frac{y}{\epsilon})^{-\frac{2\mu}{\sigma^2}} \d y$. Consequently, 
$$I_1:=\int_\epsilon^{\infty} x^{-\frac{2\mu}{\sigma^2}} \int_{\epsilon}^x  \frac{1}{\sigma^2 y^2} y^{\frac{2\mu}{\sigma^2}} \d y \d x.$$ 
When $2\mu=\sigma^2$, 
$$
I_1=\frac{1}{\sigma^2} \frac{(\ln y)^2}{2} \bigg|_{1}^{\infty}=\infty.
$$
When $2\mu \neq \sigma^2$, 
$$
I_1= \frac{1}{2\mu-\sigma^2}\int_{\epsilon}^x \left[ x^{-1} - x^{-\frac{2\mu}{\sigma^2}} \epsilon^{\frac{2\mu}{\sigma^2}-1} \right] dx = +\infty.
$$
On the other hand,
$$
I_2:=\int_\epsilon^{\infty} u(x)m(x) \d x= \int_\epsilon^{\infty}\left( \int_\epsilon^x y^{-\frac{2\mu}{\sigma^2}} \d y \right) \frac{1}{\sigma^2 x^2} x^{\frac{2\mu}{\sigma^2}} \d x.
$$
When $2\mu = \sigma^2$,
$$
I_2=\frac{1}{\sigma^2} \frac{(\ln y)^2}{2} \bigg|_{1}^{\infty}=+\infty.
$$
When $2\mu \neq \sigma^2$, we have
$$
I_2= \frac{1}{\sigma^2-2\mu}\int_{\epsilon}^\infty \left[ x^{-1} - x^{\frac{2\mu}{\sigma^2}-2} \epsilon^{-\frac{2\mu}{\sigma^2}+1} \right] dx = +\infty.
$$
To conclude, we justify Condition \textbf{C-(ii)}-$2$.

\noindent
\textbf{Example-2}:  If $X_t$ is a general Bessel process with degree $\nu$ and $n=2 \nu+2$, in this case 
$$
\d X_t=\frac{n-1}{2X_t}\d t+\d W_t,
$$
with $\sigma(x)=1, \mu(x)=\frac{n-1}{2x}$. Then $\phi_1(x)=\frac{n-1}{2x}+\sqrt{\big(\frac{n-1}{2x}\big)^2+2r}$ is clearly decreasing with respect to $x$.
Further, $m(y) := e^{(n-1)\ln(\frac y \epsilon)}=\big(\frac y\epsilon\big)^{n-1} $, and
$$u(x)= \int_\epsilon^x e^{-(n-1)\ln(\frac y \epsilon)} \d y=\int_\epsilon^x\Big(\frac{y}\epsilon\Big)^{1-n}\d y=\begin{cases}
    \frac{x^{2-n}}{(2-n)\epsilon^{1-n}}-\frac{\epsilon}{2-n},&n\neq 2;\\
    \epsilon\ln(x/\epsilon),&n=2.
\end{cases}$$ 
Consequently, 
\begin{align*}
I_1&=\int_\epsilon^{\infty} u'(x)\Big(\int_\epsilon^x m(y) \d y\Big) \d x=\int_\epsilon^{\infty} \Big(\frac{x}\epsilon\Big)^{1-n}\Big(\int_\epsilon^x \Big(\frac{y}\epsilon\Big)^{n-1} \d y\Big) \d x
\\
&=\begin{cases}
    \int_\epsilon^{\infty} \big(x/\epsilon\big)^{1-n}\cdot\frac{\epsilon}{n}\Big((x/\epsilon)^n-1\Big) \d x=+\infty,&n\neq 0;
    \\
    \int_\epsilon^{\infty} x\cdot\ln(x/\epsilon) \d x=+\infty,&n=0,
\end{cases}
\end{align*}
and
\begin{align*}
    I_2&=\int_\epsilon^{\infty} u(x)m(x) \d x=\int_\epsilon^{\infty}u(x)\big(\frac x\epsilon\big)^{n-1}\d x
    \\
    &=\begin{cases}
    \int_\epsilon^{\infty}\big(\frac{x^{2-n}}{(2-n)\epsilon^{1-n}}-\frac{\epsilon}{(2-n)}\big)\big(x/\epsilon\big)^{n-1}\d x=+\infty,&n\neq 2;\\
    \int_\epsilon^{\infty}\ln(x/\epsilon)\cdot x\d x=+\infty,&n=2.
    \end{cases}
\end{align*}
Thus Condition \textbf{C-(ii)-$2$} holds.

\noindent \textbf{Example-3}: We can also consider the next example of diffusion process satisfying 
$$
d X_t = \mu X_t \d t + \sigma\sqrt{X_t} \d W_t.
$$
In this case, we have $\mu(x)=\mu x$ and $\sigma(x) =\sigma  \sqrt{x}$. There is no explicit solution to $\varphi_r(x)$ and $H$, hence it is difficult to directly verify the two conditions in Proposition \ref{prop4.2}. We can nevertheless directly calculate that $\phi_1(x)=1 + \sqrt{1+ 2r \frac{1}{x}}$, and it is clear that $\phi_1$ is decreasing in $x$ and increasing in $r$. In addition, $m(y) = \frac{1}{\sigma^2y} e^{\frac{2\mu(y-\epsilon)}{\sigma^2}} $,
$u(x)= \int_\epsilon^x e^{-\frac{2\mu(y-\epsilon)}{\sigma^2}} \d y$. Consequently, 
\begin{align*}
I_1&=\int_\epsilon^{\infty} u'(x)\Big(\int_\epsilon^ym(y)\d y \Big)\d x=\int_\epsilon^{\infty} e^{-\frac{2\mu(x-\epsilon)}{\sigma^2}}  \Big(\int_\epsilon^x \frac{1}{y} e^{\frac{2\mu(y-\epsilon)}{\sigma^2}} \d y \Big) \d x
\\
&=\int_\epsilon^{\infty} \Big(\int_\epsilon^x \frac{1}{y} e^{\frac{2\mu(y-x)}{\sigma^2}} \d y  \Big)\d x=\int_\epsilon^{\infty} \frac{1}{y} \Big(\int_y^\infty  e^{\frac{2\mu(y-x)}{\sigma^2}} \d x\Big)  \d y=\int_\epsilon^{\infty} \frac{\sigma^2}{2\mu y} \d y=+\infty,
\end{align*}
and
\begin{align*}
I_2&=\int_\epsilon^{\infty} u(x)m(x) \d x=\int_\epsilon^{\infty} \Big(\int_\epsilon^x e^{-\frac{2\mu(y-\epsilon)}{\sigma^2}} \d y\Big) \frac{1}{x} e^{\frac{2\mu(x-\epsilon)}{\sigma^2}} \d x
\\
&=\int_\epsilon^{\infty} \Big(\int_\epsilon^x e^{-\frac{2\mu(y-x)}{\sigma^2}} \d y\Big) \frac{1}{x} \d x\geq\int_\epsilon^{\infty} \frac{x-\epsilon}{x} \d x =+\infty,
\end{align*}
which leads to Condition \textbf{C-(ii)-$2$.}}


\vspace{0.1in}

\noindent\textbf{Discussion on condition  \textbf{C-(iii)}}

{   Condition \textbf{C-(iii)} on the attitude function $\phi(x)$ is imposed to guarantee that the smallest equilibrium stopping policy is the optimal equilibrium. Here are some typical selections for the attitude functions that meet the specified condition  \textbf{C-(iii)}.} If $\phi(x)=\ln x$ or $\phi(x)=\frac{x^p}{p}$ with $0 < p\leq 1$, the condition \textbf{C-(iii)} holds trivially. If $\phi(x)$ is a strictly increasing {  differentiable} convex function, the condition \textbf{C-(iii)} also holds. However, the function $\phi(x)=-e^{-px}$ with $p>0$ does not satisfy the condition \textbf{C-(iii)}.

\subsection{Main results}

We next introduce an optimality criterion for an equilibrium in a general setting. For any $R\in\mathcal{E}$, let us define
\begin{align*}
V(x,R):=\phi(g(x))\vee J(x,R),\quad x\geq 0. 
\end{align*}

\begin{definition}\label{defoptV} $R^*\in\mathcal{E}$ is called an optimal equilibrium, if for any $R\in\mathcal{E}$, we have
\begin{align*}
V(x,R^*)\geq V(x,R),\quad x\geq 0.
\end{align*}
\end{definition}

The following result characterizes all one-barrier equilibria and confirms that the smallest equilibrium is the optimal equilibrium.

\begin{theorem}\label{small}
Assume that conditions \textbf{C-(i)} to \textbf{C-(iii)} hold. There exists some $a^*\geq 0$ such that the stopping region $[0,a]$ is a time-consistent equilibrium if and only if $a\geq a^*$. In addition, the equilibrium $[0,a^*]$ is the optimal equilibrium.    
\end{theorem}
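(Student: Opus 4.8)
The plan is to reduce the equilibrium condition for a barrier region $[0,a]$ to the sign of a single scalar quantity evaluated at the barrier, then to convert that pointwise sign into the global inequalities needed for both the ``iff'' characterization and the optimality, via a Chebyshev-type correlation inequality in the discount variable.

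First I would fix $a\in[0,K]$ and record, from $\Theta([0,a])=S_{[0,a]}\cup[0,a]$, that $[0,a]$ is an equilibrium exactly when $J(x,[0,a])\ge\phi(g(x))$ for every $x>a$. For $x>a$ the process is stopped at the barrier, where $X^x_{\tau_a^x}=a$ and $g(a)=K-a$, so writing $\E[e^{-r\tau_a^x}]=e^{H(r,a,x)}$,
\[
\Psi(a,x):=J(x,[0,a])-\phi(g(x))=\int_0^\infty \phi\big((K-a)e^{H(r,a,x)}\big)\,dF_\rho(r)-\phi\big((K-x)^+\big).
\]
I would note $\Psi(a,a)=0$ (since $H(r,a,a)=0$), observe that for $x\ge K$ one has $\Psi(a,x)>0$ trivially, and, using $\partial_x H(r,a,x)=m_r(x)$ from Remark \ref{cond-process}, compute the one-sided barrier derivative
\[
\partial_x\Psi(a,x)\big|_{x=a^+}=\phi'(K-a)\Big[1+(K-a)\int_0^\infty m_r(a)\,dF_\rho(r)\Big]=:\phi'(K-a)\,G(a).
\]
Since $\phi'>0$, a negative $G(a)$ forces $\Psi(a,\cdot)<0$ just above the barrier, so $G(a)\ge0$ is necessary for equilibrium. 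Writing $w(r,a):=1+(K-a)m_r(a)$ and using $\int_0^\infty dF_\rho(r)=1$ gives the convenient form $G(a)=\int_0^\infty w(r,a)\,dF_\rho(r)$.

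Next I would establish the threshold. Differentiating $G$ and using $m_r<0$ together with $\partial_a m_r(a)\ge0$ (condition \textbf{C-(ii)}(a), equivalently $(a'')$) yields $G'(a)>0$; since $G(K)=1>0$, the value $a^*:=\inf\{a\in[0,K]:G(a)\ge0\}$ is well defined in $[0,K]$ and $G(a)\ge0\iff a\ge a^*$. To upgrade necessity to sufficiency I would differentiate in the barrier, using $\partial_a H(r,a,x)=-m_r(a)$:
\[
\partial_a\Psi(a,x)=-\int_0^\infty \underbrace{\phi'\big((K-a)e^{H(r,a,x)}\big)e^{H(r,a,x)}}_{=:\psi(r)}\,w(r,a)\,dF_\rho(r).
\]
Since $\Psi(x,x)=0$, integrating in the barrier gives, for $x\le K$, the representation $\Psi(a,x)=\int_a^x \E_\rho[\psi(\rho)w(\rho,a')]\,da'$, where $\E_\rho[\cdot]=\int_0^\infty(\cdot)\,dF_\rho(r)$. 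The crux is that the integrand is nonnegative for $a'\ge a^*$: both $r\mapsto\psi(r)$ and $r\mapsto w(r,a')$ are nonincreasing in $r$ --- the former because $e^{H(r,a',x)}=\E[e^{-r\tau^x_{a'}}]$ decreases in $r$ while $y\mapsto\phi'(y)y$ increases (condition \textbf{C-(iii)}), the latter because $m_r(a')$ decreases in $r$ (condition \textbf{C-(ii)}(b)). The Chebyshev correlation inequality for comonotone functions then gives $\E_\rho[\psi w]\ge\E_\rho[\psi]\,\E_\rho[w]=\E_\rho[\psi]\,G(a')\ge0$ for $a'\ge a^*$. Hence $\Psi(a,x)\ge0$ for all $x>a$ whenever $a\ge a^*$, so $[0,a]$ is an equilibrium; for $a<a^*$ the barrier derivative is negative and $[0,a]$ fails, proving the first assertion.

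Finally, for optimality I would compare $V(x,[0,a^*])$ and $V(x,[0,a])$ for an arbitrary equilibrium $a\ge a^*$, splitting on $x$. For $x\le a^*$ both regions stop and the values coincide. For $a^*<x\le a$ the larger region stops, giving $\phi(g(x))$, while the equilibrium property of $[0,a^*]$ gives $J(x,[0,a^*])\ge\phi(g(x))$, so $V(x,[0,a^*])\ge V(x,[0,a])$. For $x>a$ both continue and I need $J(x,[0,a^*])\ge J(x,[0,a])$; this follows from the same computation, since $\partial_a J(x,[0,a])=\partial_a\Psi(a,x)=-\E_\rho[\psi w]\le0$ for barriers in $[a^*,K]$, so $a\mapsto J(x,[0,a])$ is nonincreasing there. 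Combining the three cases shows $V(x,[0,a^*])\ge V(x,[0,a])$ for all $x\ge0$, so $[0,a^*]$ is the optimal equilibrium. The main obstacle is precisely this global sign control of $\partial_a\Psi$: the barrier-level condition $G(a')\ge0$ is only a pointwise (integral-average) statement, and it is exactly the interaction of \textbf{C-(iii)} (making $\psi$ comonotone with $w$ in $r$) and \textbf{C-(ii)}(b) (making $w$ monotone in $r$) that lets the correlation inequality promote it to $\E_\rho[\psi w]\ge0$; the differentiation under the integral sign and the monotonicity of $y\mapsto\phi'(y)y$ along the relevant range are the routine points left to verify.
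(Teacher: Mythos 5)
Your proposal is correct and follows essentially the same route as the paper's proof: the threshold $a^*$ is determined by the sign of the one-sided barrier derivative (your $G(a)$ is just $(K-a)$ times the paper's $G(a)+\tfrac{1}{K-a}$, made monotone by \textbf{C-(ii)}(a)), and the global statements are obtained by showing $a\mapsto J(x,[0,a])$ is nonincreasing on $[a^*,K]$ via the Harris--FKG/Chebyshev correlation inequality applied to exactly the same two $r$-monotone factors (from \textbf{C-(ii)}(b) and \textbf{C-(iii)}). The only cosmetic difference is that you recover sufficiency by integrating $\partial_{a'}\Psi$ from $a$ to $x$, whereas the paper states the equivalent chain $\phi(g(x))=\Lambda(x,x)<\Lambda(x,a)<\Lambda(x,a^*)$ directly.
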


\begin{proof}
For a given $0\leq a\leq K$ and any $x\geq a$, let us define
\begin{align}\label{Lambda_eq}
\Lambda(x,a):=J(x,(0,a])&=\int_0^\infty \phi\left(\mathbb{E}[e^{-r\tau_a^x} (K-X^x_{\tau_a^x})^+]\right)dF_\rho(r)\\
&=\int_0^\infty \phi\left((K-a)\mathbb{E}[e^{-r\tau_a^x}] \right)dF_\rho(r).
\end{align} 
Note that $\E[e^{-r\tau_a^x}]$ is decreasing in $x$, $x\mapsto\phi'(x)x>0$ is increasing and $\frac{\partial}{\partial x}H(r,a,x)=\frac{\varphi_r(x)'}{\varphi_r(x)}<0$ is increasing in $x$ from Remark \ref{cond-process} after condition \textbf{C-(ii)}, therefore 
\begin{align*}
x\mapsto \phi'\left( (K-a)\mathbb{E}[e^{-r\tau_a^x}] \right) (K-a) \mathbb{E}[e^{-r\tau_a^x}]\frac{\partial}{\partial x}H(r,a,x)\text{ is increasing}.
\end{align*}
It follows that the right partial derivative of $\Lambda$ with respect to $x$ at $x=a$ is given by (note that $\Lambda(x,a)$ is not defined for $x < a$)
\begin{align}
\lim_{x\downarrow a}\frac{\partial\Lambda(x,a)}{\partial x}&=\lim_{x\downarrow a}\int_0^\infty\phi'\left( (K-a)\mathbb{E}[e^{-r\tau_a^x}] \right) (K-a) \mathbb{E}[e^{-r\tau_a^x}]\frac{\partial}{\partial x}H(r,a,x)dF_\rho(r)\notag\\
&=\int_0^\infty \phi'\left( (K-a) \right) (K-a) \lim_{x\downarrow a} \frac{\partial}{\partial x}H(r,a,x)dF_\rho(r).\label{com-ineq}
\end{align}
where the final equality holds by Monotone Convergence Theorem. Let us denote
\begin{align}\label{G-a}
G(a):=\int_0^\infty \lim_{x\downarrow a}\frac{\partial }{\partial x}H(r,a,x)dF_\rho(r).
\end{align}
By the condition \textbf{C-(ii)}-(a), $G(a)$ is increasing in $a$. In addition, from Remark \ref{cond-process}, we have that
\begin{align*}
G: a \mapsto \int_0^{\infty} \frac{\varphi_r' (a)}{\varphi_r(a)} d F_{\rho}(r)
\end{align*}
is clearly continuous. Therefore, it holds that 
\begin{align*}
G(a)-\frac{(K-a)'}{K-a}=G(a)+\frac{1}{K-a}
\end{align*}
is strictly increasing in $a\in {  [0,K)}$. Moreover, $\lim_{a\rightarrow K} G(a)+\frac{1}{K-a}>0$. 

Let us define $a^*=0$ if $G(a)+\frac{1}{K-a}>0$ for all $a\in {  [0,K)}$ and define $a^*=\hat{a}$ if there exists a unique root $\hat{a}\in [0,K)$ such that $G(\hat{a})+\frac{1}{K-\hat{a}}=0$. It then follows that 
\begin{align}\label{ineq-g}
G(a)>-\frac{1}{K-a}
\end{align}
if and only if $a>a^*$. Then, by \eqref{com-ineq}, we have that 
\begin{align*}
\lim_{x\downarrow a}\frac{\partial\Lambda(x,a)}{\partial x}\geq \lim_{x\downarrow a}(\phi(K-x))'
\end{align*} 
if and only if $a\geq a^*$. As a result, any region $[0,a]$ for $a<a^*$ cannot be an equilibrium.

Now, for any $a>a^*$ and any $x\geq a$, it holds that
\begin{align*}
\frac{\partial \Lambda(x,a)}{\partial a}=\int_0^\infty \phi'\left( (K-a)\mathbb{E}[e^{-r\tau_a^x}] \right) (K-a) \mathbb{E}[e^{-r\tau_a^x}] \left(-\frac{1}{K-a}+\frac{\partial }{\partial a}H(r,a,x)\right)dF_\rho(r).
\end{align*}
By conditions \textbf{C-(ii)}-(b) and \textbf{C-(iii)}, we can derive that  
\begin{align*}
    \phi'\left( (K-a)\mathbb{E}[e^{-r\tau_a^x}] \right) (K-a) \mathbb{E}[e^{-r\tau_a^x}] \text { is a decreasing function with respect to } r,
\end{align*}
and 
\begin{align*}
    \left(-\frac{1}{K-a}+\frac{\partial }{\partial a}H(r,a,x)\right) \text { is an increasing function with respect to } r.
\end{align*}
Therefore, we have 
\begin{align*}
\frac{\partial \Lambda(x,a)}{\partial a}&\leq \left(\int_0^\infty \phi'\left( (K-a)\mathbb{E}[e^{-r\tau_a^x}] \right) (K-a) \mathbb{E}[e^{-r\tau_a^x}] dF_\rho(r)\right)\\
&\times\left(-\frac{1}{K-a}+\int_0^\infty\frac{\partial}{\partial a}H(r,a,x)dF_\rho(r)\right)\\
&< \left(\int_0^\infty \phi'\left( (K-a)\mathbb{E}[e^{-r\tau_a^x}] \right) (K-a) \mathbb{E}[e^{-r\tau_a^x}] dF_\rho(r)\right)
\\
&\qquad\times\left(\int_0^\infty\lim_{x\downarrow a} \frac{\partial}{\partial x}H(r,a,x)dF_\rho(r)+\int_0^\infty\frac{\partial}{\partial a}H(r,a,x)dF_\rho(r) \right).
\end{align*}
where the first inequality comes from Harris-FKG inequality and the second inequality follows from \eqref{ineq-g} for $a>a^*$.

Recall from Remark \ref{cond-process} that $\E [e^{-r \tau_a^x}]=\frac{\varphi_{r}(x)}{\varphi_{r}(a)}$. It is clear to see that $\frac{\partial}{\partial a} H(r, a, x)$ is independent of $x$. Moreover, as $H(r,a,x)=\ln (\varphi_{r}(x))-\ln(\varphi_{r}(a))$, it is trivial to see that $\lim_{x\downarrow a}\frac{\partial}{\partial x}H(r,a,x)=-\lim_{x\downarrow a}\frac{\partial}{\partial a} H(r,a,x)$. We can then deduce that
\begin{align*}
\frac{\partial \Lambda(x,a)}{\partial a}&< \left(\int_0^\infty\phi'\left( g(a)\mathbb{E}[e^{-r\tau_a^x}] \right) g(a) \mathbb{E}[e^{-r\tau_a^x}] dF_\rho(r)\right)
\\
&\qquad \times\left(-\int_0^\infty\lim_{x\downarrow a}\frac{\partial}{\partial a}H(r,a,x)dF_\rho(r)+\int_0^\infty\frac{\partial}{\partial a}H(r,a,x)dF_\rho(r) \right)\\
&= 0.
\end{align*}
As a result, we conclude that, for any $a>a^*$ and any $x>a$,  
\begin{align}
\phi(g(x))=\Lambda(x,x)<\Lambda(x,a)< \Lambda(x,a^*).
\end{align}
The two inequalities then yield that for any $a\geq a^*$, $[0,a]$ is an equilibrium, and the second inequality verifies that the smallest equilibrium $[0,a^*]$ is indeed the optimal equilibrium.
\end{proof}
{ Let us revisit the previous two examples (Geometric Brownian motion and Bessel process) in the discussion on condition \textbf{C-(ii)} in Section \ref{sec4.1}. In both cases, we can derive the optimal equilibrium explicitly.}

\noindent\textbf{Example-1} (Continued) If $X_t$ is a geometric Brownian motion $dX_t=\mu X_tdt+\sigma X_tdW_t$ with $\mu>0$ and $\sigma>0$ and in addition, we assume that the attitude function $\phi(x)$ satisfies the condition \textbf{C-(iii)}. { Let us define
\begin{equation}\label{eq:frho}
	f(r) := \sqrt{\Big(\frac{\mu}{\sigma^2}-\frac12\Big)^2+\frac{2r}{\sigma^2}}+\frac{\mu}{\sigma^2}-\frac12 {  \geq } 0.
\end{equation}
By formula (2.0.1) on page 628 of \cite{BorodinSalminen}, we have that $\varphi_{r}(x)=\left( \frac{1}{x} \right)^{f(r)}$ and $H(r,a,x)=\ln(\mathbb{E}[e^{-r \tau_a^x}])=f(r)(\ln a-\ln x)$. }
From the proof of Theorem \ref{small}, we get that $G(a)=\int_0^\infty \lim_{x\downarrow a}\frac{\partial }{\partial x}H(r,a,x)dF_\rho(r)=-\frac{\int_0^\infty f(r)dF_\rho(r)}{a}$, and the unique solution to the equation 
$$G(a)+\frac{1}{K-a}=\frac{1}{K-a}-\frac{\int_0^\infty f(r)dF_\rho(r)}{a}=0,$$ 
is 
\begin{align}\label{opt-ex-1}
a^*=\frac{\int_0^\infty f(r)dF_\rho(r)}{1+ \int_0^\infty f(r)dF_\rho(r)}K\in (0,K).
\end{align}
The optimal equilibrium in this model is explicitly characterized by $[0,a^*]$ according to Theorem \ref{small} where $a^*$ is given in \eqref{opt-ex-1}.
\ \\

\noindent
\textbf{Example-2} (Continued) If $X_t$ is a general Bessel process with degree $\nu$ and $n=2 \nu+2$ and in addition, the attitude function $\phi(x)$ is assumed to satisfy the condition \textbf{C-(iii)}.
{ By the formula (2.0.1) on page $404$ of \cite{BorodinSalminen}, we have that $\varphi_{r}(x)=x^{-\nu} K_{\nu} (x \sqrt{2 r})$ and 
$$
H(r, a, x)= -\nu \ln x + \ln K_{\nu} (x \sqrt{2 r}) +  \nu \ln a - \ln K_{\nu} (a \sqrt{2 r}),
$$
where $K_{\nu}$ is the modified Bessel function of the second kind with degree $\nu$.}
We have that 
\begin{align*}
G(a)=\int_0^\infty \lim_{x\downarrow a}\frac{\partial }{\partial x}H(r,a,x)dF_\rho(r)=-\int_0^\infty\frac{K_{\nu+1} (a \sqrt{2 r})}{K_{\nu} (a \sqrt{2 r})} \sqrt{2r}\,dF_\rho(r).
\end{align*}
Therefore, it holds that
\begin{align*}
\lim_{a\rightarrow 0}\left(-\int_0^\infty\frac{K_{\nu+1} (a \sqrt{2 r})}{K_{\nu} (a \sqrt{2 r})} \sqrt{2r}\,dF_\rho(r)+\frac{1}{K-a}\right)=-\infty,
\\
\lim_{a\rightarrow K}\left(-\int_0^\infty\frac{K_{\nu+1} (a \sqrt{2 r})}{K_{\nu} (a \sqrt{2 r})} \sqrt{2r}\,dF_\rho(r)+\frac{1}{K-a}\right)=\infty.
\end{align*}
As $G(a)+\frac{1}{K-a}$ is strictly increasing, there exists a unique solution $a^*$ to the equation $G(a)+\frac{1}{K-a}=0$. By Theorem \ref{small}, the optimal equilibrium is then explicitly characterized by $[0,a^*]$.

In particular, for the case $n=3$ that $X_t=\sqrt{(W_t^1)^2+(W_t^2)^2+(W_t^3)^2}$ where $(W^1,W^2,W^3)$ is a three-dimensional Brownian motion, we can compute $G(a)=-\int_0^\infty\sqrt{2r}\,dF_\rho(r)-\frac{1}{a}$. Therefore, the unique solution to the equation 
$$G(a)+\frac{1}{K-a}=-\int_0^\infty\sqrt{2r}\,dF_\rho(r)-\frac{1}{a}+\frac{1}{K-a}=0$$
admits the explicit form that
\begin{align}\label{opt-ex-2}
a^*=\frac{\int_0^\infty\sqrt{2r}\,dF_\rho(r)K-2+\sqrt{4+\left(\int_0^\infty\sqrt{2r}\,dF_\rho(r)\right)^2 K^2}}{2\int_0^\infty\sqrt{2r}\,dF_\rho(r)}\in (0,K).
\end{align}

\begin{remark}\label{rem-equiv}
In both examples when $X_t$ is a geometric Brownian motion with $\mu>0$ and $X_t$ is a general Bessel process, when the attitude function $\phi(x)$ satisfies the condition \textbf{C-(iii)}, the thresholds $a^*$ of the optimal equilibrium in \eqref{opt-ex-1} and \eqref{opt-ex-2} can be explicitly characterized that highly depend on the distribution of random discount rate $\rho$, i.e., the level of diversity in disagreed discount rates within the group. On the other hand, it is also interesting to observe that $a^*$ in both \eqref{opt-ex-1} and \eqref{opt-ex-2} under different attitude function $\phi(x)$ satisfying the condition \textbf{C-(iii)} always coincide with the one under the linear aggregation attitude function $\phi(x)=x$. In fact, in the proof of Theorem \ref{small}, the definition of $a^*$ in the optimal equilibrium is also invariant with respect to the choice of attitude function $\phi(x)$ as the operator $G$ in \eqref{G-a} is independent of $\phi(x)$. 

In our context, the linear attitude function $\phi(x)=x$ indicates that the central planner aggregates different discount rates with the same weight and has neutral attitude towards different levels of discount rate in the group. Therefore, from our result in Theorem \ref{small}, as along as the attitude function $\phi(x)$ satisfies the condition \textbf{C-(iii)} (for example, $\phi(x)=\frac{x^p}{p}$ for $p<1$ and $p\neq 0$ or $\phi(x)=\ln x$), the derived optimal equilibrium is equivalent to the one under neutral attitude function $\phi(x)=x$.

\end{remark}

\subsection{Connection between the optimal equilibrium and weak equilibrium}

\cite{bayraktar21, bayraktar22} studied different concepts of time-consistent equilibrium and particularly discussed the relationship between the optimal mild equilibrium and the weak equilibrium, where they used ``mild equilibrium" to refer for our definition of the equilibrium in Definition \ref{def:equilibrium}.  In this subsection, we aim to establish the connection in our context under the aggregation preference. First, let us recall the definition of weak equilibrium firstly proposed in \cite{ChristensenL18}, and further studied in \cite{ChristensenL20}, \cite{bayraktar22}.

\begin{definition} \label{def:weak}
A closed set $R \in [0, \infty)$ is said to be a weak equilibrium, if
\begin{equation} \label{eq:weak}
\left\{
\begin{aligned}
 &J(x,R) \geq \phi(g(x)),  & \mbox{ if } & x \notin R,  \\
&\liminf_{\epsilon \searrow 0} \frac{\phi(g(x)) -  \int_0^\infty \phi\left(\E^x\left[e^{-r \tau^{\epsilon}_R}g(X_{\tau^{\epsilon}_R})\right]\right)dF_\rho(r)}{\epsilon} \geq 0, & \mbox{ if } & x \in R.
\end{aligned}
 \right.
\end{equation}
where on the above $\tau^{\epsilon}_R:= \inf \{t \geq \epsilon: X_t \in R \}$, and $\E^{x} \left[ \cdot \right] : =\E \left[ \cdot | X_0 =x \right]$. {  In what follows, we shall also denote $\tau_R:= \inf \{t > 0: X_t \in R \}$. }
\end{definition}

By definition, any weak equilibrium is a mild equilibrium. We next justify in the following proposition that under extra conditions on the aggregation attitude function $\phi(x)$, the optimal mild equilibrium is also a weak equilibrium, which is consistent with the result in \cite{bayraktar22}.

\begin{proposition}\label{prop-weak}
Assume that conditions \textbf{C-(i)} to \textbf{C-(iii)} hold, and in addition assume that $\phi''(x) \leq 0$ for all $x \geq 0$. Then the optimal mild equilibrium $[0,a^*]$ is also a weak equilibrium.
\end{proposition}
\begin{proof}
We verify directly Definition \ref{def:weak}, and it is enough to verify the second inequality in \eqref{eq:weak}. When $a^*=0$, the condition $x \in R$ becomes $x=0$, and 
$$
 \int_0^\infty \phi\left(\E^x\left[e^{-r \tau^{\epsilon}_R}g(X_{\tau^{\epsilon}_R})\right]\right)dF_\rho(r) \leq \phi(g(0)),
$$
as $g$ and $\phi$ are respectively decreasing and increasing, and $X$ is a non-negative process. Hence \eqref{eq:weak} is automatically valid. In the following, we shall only consider the case $a^*>0$, where $a^*$ is characterized as the unique solution of equation
$$
\int_0^\infty  \left( g(a)\frac{\varphi_r'(a)}{\varphi_r(a)} + 1 \right)  dF_\rho(r)=0.
$$
For any $t, x \geq 0$, define 
$$
v_r(t,x) :=  e^{-rt} \E^x \left[e^{-r \tau_{[0,a^*]}}g(X_{\tau_{[0,a^*]}})\right].
$$
It is clear that
$$
v_r(t,x) = \left\{
\begin{aligned}
& e^{-rt} g(x) & \mbox{ if } x \leq a^*, \\
& e^{-rt} g(a^*) \frac{\varphi_{r} (x) }{\varphi_r (a^*)} & \mbox{ if } x > a^*.
\end{aligned}
 \right.
$$
By straightforward calculations, we can get
$$
\partial_t v_r(t, x) = \left\{
\begin{aligned}
& -r e^{-rt} g(x) & \mbox{ if } x \leq a^*, \\
& -r e^{-rt} g(a^*) \frac{ \varphi_{r} (x) }{\varphi_r (a^*)} & \mbox{ if } x > a^*,
\end{aligned}
 \right. \ \ \ \  \partial_x v_r(t, x) = \left\{
\begin{aligned}
& e^{-rt} g'(x) = -e^{-rt}  & \mbox{ if } x \leq a^*, \\
& e^{-rt} g(a^*) \frac{ \varphi_{r}' (x) }{\varphi_r (a^*)} & \mbox{ if } x > a^*,
\end{aligned}
 \right.
$$
and
$$
\partial^{2}_{xx} v_{r}(t,x) = \left\{
\begin{aligned}
& e^{-rt} g''(x) = 0 & \mbox{ if } x \leq a^*, \\
& e^{-rt} g(a^*) \frac{ \varphi_{r}'' (x) }{\varphi_r (a^*)} & \mbox{ if } x > a^*.
\end{aligned}
 \right.
$$
Noticing that 
$$
\E^x \left[ v_r(\epsilon, X_{\epsilon}) \right] = \E^x \left[ e^{-r \epsilon} \E^{X_{\epsilon}} \left[e^{-r \tau_{[0,a^*]}}g(X_{\tau_{[0,a^*]}})\right] \right] =  \E^x \left[ e^{-r \tau^{\epsilon}_{[0,a^*]}} g(X_{\tau^{\epsilon}_{[0,a^*]}}) \right],
$$
and by the second inequality of \eqref{eq:weak}, it is sufficient to prove
$$
\liminf_{\epsilon \searrow 0} \frac{\phi(g(x)) -  \int_0^\infty \phi \left( \E^x \left[ v_r(\epsilon, X_{\epsilon}) \right] \right) dF_\rho(r)}{\epsilon} \geq 0.
$$
{ Using the It\^o's formula involving the local time integral as in Lemma 2.15 of \cite{bayraktar22} (see also \cite{Pesk07}) with $x_0=a^*$}, we obtain that
$$
\begin{aligned}
& { \E^x} \left[ v_r(\epsilon \wedge \tau_{\partial B(x,h)}, X_{\epsilon \wedge \tau_{\partial B(x,h)}}) - v_r(0,x) \right]  \\
= \ & \E^{x} \left[ \int_0^{\epsilon \wedge \tau_{\partial B(x,h)}} \frac12 ( \cL v_r(s, X_s-) + \cL v_r(s, X_s+) ) d s \right] \\
~~~~~~~~~~~~~~ & +  \E^{x} \left[ \frac12 \int_0^{\epsilon \wedge \tau_{\partial B(x,h)}} (\partial_x v_r(s, a^*+) - \partial_x v_r(s, a^*-)) d L_s^{a^*} \right],
\end{aligned}
$$
where on the above, $B(x,h)$ is the set of points within distance $h$ from $x$, {  $\partial B(x,h)$ is its boundary,} $L_s^x$ is the local time at point $x$ up to time $t$, and $\cL$ is the operator defined as
$$
\cL v(t,x) := \partial_t v(t,x) + \mu(x) \partial_x v(t,x) + \frac12 \sigma^2(x) \partial^2_{xx} v(t,x),
$$
for any function $v\in \mathcal{C}^{1,2}([0,\infty)^2)$.
Using the previous expressions, we get for $x > a^*$,
$$
\cL v_r(t, x) = e^{-rt} \frac{g(a^*)}{\varphi_r (a^*)} \left( -r \varphi_{r} (x) + \mu(x) \varphi_{r}' (x) + \frac12 \sigma^2(x) \varphi_{r}'' (x) \right) = 0,
$$
and for $x \leq a^*$, 
$$
\cL v_r(t, x) = e^{-rt} (-r g(x) - \mu(x)) \leq 0.
$$
Hence we have for $x \leq a^*$,
$$
\begin{aligned}
& { \E^x} \left[ v_r(\epsilon \wedge \tau_{\partial B(x,h)}, X_{\epsilon \wedge \tau_{\partial B(x,h)}}) - g(x) \right]  \\
\leq \ & \E^{x} \left[ \frac12 \int_0^{\epsilon \wedge \tau_{\partial B(x,h)}} (\partial_x v_r(s, a^*+) - \partial_x v_r(s, a^*-)) d L_s^{a^*} \right] \\
= \ &\frac12 \left( g(a^*) \frac{\varphi_r'(a^*)}{\varphi_r(a^*)} + 1 \right)  \E^{x} \left[  \int_0^{\epsilon \wedge \tau_{\partial B(x,h)}} e^{-r s} d L_s^{a^*}  \right].
\end{aligned}
$$
Now, it holds that 
\begin{equation}\label{equtemp}
\begin{aligned}
& \liminf_{\epsilon \searrow 0} \frac{1}{\epsilon} \left\{\phi(g(x)) -  \int_0^\infty \phi \left( \E^x \left[ v_r(\epsilon, X_{\epsilon}) \right] \right) dF_\rho(r) \right\} \\
= \ &\liminf_{\epsilon \searrow 0} \frac{1}{\epsilon}    \int_0^\infty \left( \phi(g(x)) - \phi \left( \E^x \left[ v_r(\epsilon \wedge \tau_{\partial B(x,h)}, X_{\epsilon \wedge \tau_{\partial B(x,h)}}) \right] \right) \right)dF_\rho(r)    \\
 \geq \ & \liminf_{\epsilon \searrow 0} \frac{1}{\epsilon}    \int_0^\infty \left( \phi(g(x)) - \phi \left( \frac12 \left( g(a^*) \frac{\varphi_r'(a^*)}{\varphi_r(a^*)} + 1 \right)  \E^{x} \left[  \int_0^{\epsilon \wedge \tau_{\partial B(x,h)}} e^{-r s} d L_s^{a^*}  \right] + g(x) \right) \right)dF_\rho(r),
\end{aligned}
\end{equation}
where the first equation holds because $\phi$ is locally Lipschitz around $g(x)$ and { the following approximation holds that
$$
\E^x \left[ v_r(\epsilon \wedge \tau_{\partial B(x,h)}, X_{\epsilon \wedge \tau_{\partial B(x,h)}}) \right]=\E^x \left[ v_r(\epsilon, X_{\epsilon}) \right] +o(\epsilon^k)$$ 
for all $k\geq 0$ by following the similar arguments of Lemma 3.8 in \cite{bayraktar22}. The rationale behind this approximation is to apply some localization argument to restrict $X$ within a bounded ball $B(x,h)$.}

If $x<a^*$, define $h:= \frac{a^*-x}{2}$, we have that $ \int_0^{\epsilon \wedge \tau_{\partial B(x,h)}} e^{-rs} d L_s^{a^*} =0$, $\P^x$-a.s. Hence
$$
 \liminf_{\epsilon \searrow 0} \frac{1}{\epsilon} \left\{\phi(g(x)) -  \int_0^\infty \phi \left( \E^x \left[ v_r(\epsilon, X_{\epsilon}) \right] \right) dF_\rho(r) \right\} =0.
$$
If $x=a^*$, applying It\^o's formula on $[0,\epsilon \wedge \tau_{\partial B(x,h)}]$ and function $h(t,y) := e^{-rt} |y-x|$, we have
$$
\begin{aligned}
& \E^{x} \left[  \int_0^{\epsilon \wedge \tau_{\partial B(x,h)}} e^{-r s} d L_s^{a^*}  \right]  \\
= \ & \E^{x} \left[ e^{-r(\epsilon \wedge \tau_{\partial B(x,h)})} |X_{\epsilon \wedge \tau_{\partial B(x,h)}}-x | \right]  \\
~~~~~~~~~~~~~&- \E^x \left[ \int_0^{\epsilon \wedge \tau_{\partial B(x,h)}}  \mbox{sgn}(X_s -x) \mu(X_s) e^{-rs} ds  \right] +  \E^x \left[ \int_0^{\epsilon \wedge \tau_{\partial B(x,h)}} r e^{-rs} |X_s-x| ds  \right]  \\
=\ & \underbrace{ \E^{x} \left[ |X_{\epsilon \wedge \tau_{\partial B(x,h)}}-x | \right]  - \E^x \left[ \int_0^{\epsilon \wedge \tau_{\partial B(x,h)}}  \mbox{sgn}(X_s -x) \mu(X_s) ds  \right]}_{I_1}
\\
& + \underbrace{ \E^{x} \left[ (e^{-r(\epsilon \wedge \tau_{\partial B(x,h)})}-1) |X_{\epsilon \wedge \tau_{\partial B(x,h)}}-x | \right] }_{I_2}  - \underbrace{\E^x \left[ \int_0^{\epsilon \wedge \tau_{\partial B(x,h)}}  \mbox{sgn}(X_s -x) \mu(X_s) (e^{-rs}-1) ds  \right]}_{I_3} 
\\
&+ \underbrace{  \E^x \left[ \int_0^{\epsilon \wedge \tau_{\partial B(x,h)}} r e^{-rs} |X_s-x| ds  \right] }_{I_4}.
\end{aligned}
$$

Following the arguments in the proof of Lemma 3.9 of \cite{bayraktar22}, we can obtain that $I_2, I_3, I_4$ are all of $o(\epsilon)$, uniformly in $r$. Therefore, it holds that 
$$
\begin{aligned}
& \liminf_{\epsilon \searrow 0} \frac{1}{\epsilon}    \int_0^\infty \left( \phi(g(a^*)) - \phi \left( \frac12 \left( g(a^*) \frac{\varphi_r'(a^*)}{\varphi_r(a^*)} + 1 \right)  \E^{x} \left[  \int_0^{\epsilon \wedge \tau_{\partial B(x,h)}} e^{-r s} d L_s^{a^*}  \right] + g(a^*) \right) \right)dF_\rho(r) \\
& \ =  \liminf_{\epsilon \searrow 0} \frac{1}{\epsilon}    \int_0^\infty \left( \phi(g(a^*)) - \phi \left( \frac12 \left( g(a^*) \frac{\varphi_r'(a^*)}{\varphi_r(a^*)} + 1 \right) I_1 + g(a^*) \right) \right)dF_\rho(r) \\
& \ \geq  \liminf_{\epsilon \searrow 0} \frac{1}{\epsilon}   \int_0^\infty   \phi'(g(a^*))\cdot  \left(-\frac12 \left( g(a^*) \frac{\varphi_r'(a^*)}{\varphi_r(a^*)} + 1 \right) I_1\right)  dF_\rho(r)
\\
&\ =\liminf_{\epsilon \searrow 0} -\frac{1}{2\epsilon}  \phi'(g(a^*))I_1\cdot\int_0^\infty  \left( g(a^*)\frac{\varphi_r'(a^*)}{\varphi_r(a^*)} + 1 \right)  dF_\rho(r)=0,
\end{aligned}
$$
where the first line holds using the similar arguments in \eqref{equtemp}, the third line is due to the concavity of $\phi$ and the last line holds by the definition of $a^*$. Therefore, we verify the second inequality of \eqref{eq:weak} for all $x\in [0,a^*]$ and thus complete the proof of the proposition.
\end{proof}


\section{Impacts on the Optimal Equilibrium by Attitude Function and Diversity Distribution}\label{sec:impact}

Generally speaking, if the aggregation attitude function $\phi(x)$ does not satisfy the assumption \textbf{C-(iii)} in the previous section, the smallest equilibrium may not necessarily be the optimal equilibrium. That is, the characterization of the optimal equilibrium may sensitively depends on the choice of the attitude function $\phi(x)$ as well as the diversity distribution of $\rho$. In particular, we will consider a simple form of $\phi(x)=\min(x,\alpha)$ for some $\alpha\in(0,1)$ in this section, which clearly does not fulfill the assumption \textbf{C-(iii)}. With this choice of $\phi(x)=\min(x,\alpha)$, we will present several examples to explicitly illustrate how $\phi(x)$ and the diversity distribution of $\rho$ may affect the optimal equilibrium. New to the literature, we will show some concrete examples in which the optimal equilibrium may no longer coincide with the smallest equilibrium or may not exist.

To ease the notation in following examples, let us now extend the definition of $\Lambda(x,a)$ to all $x\in\mathbb{R}$ such that $\Lambda(x,a)=\phi(g(x))$ for $x<a$. As a result, to characterize the optimal equilibrium attaining the optimal $V(x,R)$ in Definition \ref{defoptV}, it is equivalent to find the global maximum of $\Lambda(x,a)$ over $a>0$ uniformly for all $x$.

\subsection{Geometric Brownian motion}
In this subsection, let us consider the underlying process $X_t$ as a geometric Brownian motion that satisfies
\begin{equation*}
    dX_t=\mu X_tdt+\sigma X_tdW_t,
\end{equation*}
with $\mu>0$ and $\sigma>0$. We still consider the Put payoff function $g(x)=(1-x)^+$ with the strike price $K=1$. Recall the definition of $f(r)$ in \eqref{eq:frho}.

\noindent\textbf{Characterization of the equilibrium.} 
First,  although the attitude function $\phi(x)$ does not satisfy the assumption  \textbf{C-(iii)},  we will show that under some other conditions, there still exists an $a^*$ such that $R=[0,a]$ is an equilibrium if and only if $a\geq a^*$. Let $\rho^*:=\sup\{r:r\in\text{supp}(\rho)\}$.

\begin{proposition} \label{prop:eq}
If the underlying process $X_t$ is a geometric Brownian motion, the stopping policy $R$ is an equilibrium if and only if $R=[0,a]$ for some $a\geq a^*$ where $a^*$ is defined by
\begin{equation*}
    a^*=\begin{cases}
        \frac{\int_0^\infty f(r)\,dF_\rho(r)}{\int_0^\infty f(r)\,dF_\rho(r)+1},&\text{when}\ 1-\alpha\leq\frac{\int_0^\infty f(r)\,dF_\rho(r)}{\int_0^\infty f(r)\,dF_\rho(r)+1},
        \\
         1-\alpha,&\text{when}\ \frac{\int_0^\infty f(r)\,dF_\rho(r)}{\int_0^\infty f(r)\,dF_\rho(r)+1} < 1-\alpha\leq\frac{f(\rho^*)}{f(\rho^*)+1},
        \\
        \gamma,&\text{when}\ 1-\alpha>\frac{f(\rho^*)}{f(\rho^*)+1},
    \end{cases}
\end{equation*}
and $\gamma$ is the smaller root in $[0,1]$ of the equation
\begin{equation*}
    (1-a)a^{f(\rho^*)} = \alpha(1-\alpha)^{f(\rho^*)}.
\end{equation*}
\end{proposition}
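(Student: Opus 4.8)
The plan is to use the explicit law of the hitting time for geometric Brownian motion recorded in Example-1, namely $\E[e^{-r\tau_a^x}]=(a/x)^{f(r)}$ for $x\ge a$, which turns the equilibrium problem into an analysis of
\begin{equation*}
\Lambda(x,a)=\int_0^\infty \min\big((1-a)(a/x)^{f(r)},\,\alpha\big)\,dF_\rho(r),\qquad x\ge a.
\end{equation*}
Since $\Theta(R)=S_R\cup R$, the region $[0,a]$ is an equilibrium exactly when $\Lambda(x,a)\ge \phi(g(x))=\min(1-x,\alpha)$ for every $x\in(a,1)$ (for $x\ge 1$ it is automatic because $g(x)=0$). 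Two elementary facts drive everything: for fixed $x>a$ the integrand is decreasing in $r$ (as $f$ is increasing and $a/x<1$), and $\Lambda(x,a)\le\alpha$. I would split at the kink $1-\alpha$ of $\phi\circ g$. For $a\ge 1-\alpha$ the cap is never active, so $\Lambda(x,a)=(1-a)\int_0^\infty(a/x)^{f(r)}dF_\rho(r)$ and $\phi(g(x))=1-x$ on $(a,1)$; this is exactly the linear problem $\phi=\mathrm{id}$ (which satisfies \textbf{C-(iii)}), so Theorem \ref{small} and Example-1 (continued) give directly that $[0,a]$ is an equilibrium iff $a\ge\bar f/(\bar f+1)$, where $\bar f:=\int_0^\infty f(r)\,dF_\rho(r)$, recovering \eqref{opt-ex-1} with $K=1$.

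The substantive regime is $a<1-\alpha$. On $(a,1-\alpha]$ we have $\phi(g(x))=\alpha$, and since $\Lambda\le\alpha$ the equilibrium inequality forces $\Lambda(x,a)\equiv\alpha$ there, i.e. every term must be capped. By monotonicity in $r$ and in $x$ the binding term is $r=\rho^*$ evaluated at $x=1-\alpha$, which yields the single condition
\begin{equation*}
(\star)\qquad (1-a)a^{f(\rho^*)}\ \ge\ \alpha(1-\alpha)^{f(\rho^*)}\iff \psi(a)\ge\psi(1-\alpha),\qquad \psi(a):=(1-a)a^{f(\rho^*)} .
\end{equation*}
Necessity of $(\star)$ is immediate: if it fails, a neighbourhood of $\rho^*$ in $\mathrm{supp}(\rho)$, which carries positive $F_\rho$-mass, is uncapped at $x=1-\alpha$, so $\Lambda(1-\alpha,a)<\alpha=\phi(g(1-\alpha))$ and $[0,a]$ is not an equilibrium.

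The step I expect to be the main obstacle is the sufficiency of $(\star)$, since it requires controlling the remaining linear region $x\in(1-\alpha,1)$, where $\Lambda$ is a genuine mixture of capped and uncapped terms and is awkward to evaluate. I would avoid computing that mixture by using the pointwise lower bound obtained from replacing every $f(r)$ by $f(\rho^*)$,
\begin{equation*}
\Lambda(x,a)\ \ge\ \min\big((1-a)(a/x)^{f(\rho^*)},\,\alpha\big),
\end{equation*}
valid by the monotonicity in $r$. Because $1-x<\alpha$ on $(1-\alpha,1)$, this bound exceeds $1-x$ precisely when $(1-a)(a/x)^{f(\rho^*)}\ge 1-x$, i.e. when $\psi(a)\ge\psi(x)$ for the very same $\psi$. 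Thus both the capped region $(a,1-\alpha]$ and the linear region $(1-\alpha,1)$ collapse to inequalities on the single function $\psi$, and the whole of the regime $a<1-\alpha$ is governed by $(\star)$.

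It remains to read off $a^*$ from the shape of $\psi$. A direct computation gives $\psi'(a)=a^{f(\rho^*)-1}\!\left[f(\rho^*)-(1+f(\rho^*))a\right]$, so $\psi$ is unimodal on $[0,1]$ with peak at $f(\rho^*)/(f(\rho^*)+1)$, and $\{a:\psi(a)\ge\psi(1-\alpha)\}$ is the interval between the two roots of $\psi(\cdot)=\psi(1-\alpha)$, one of which is $1-\alpha$ itself. Combining with the case $a\ge 1-\alpha$: if $1-\alpha\le\bar f/(\bar f+1)$ then $1-\alpha$ sits left of the peak, the nonlinear regime is vacuous, and $a^*=\bar f/(\bar f+1)$; if $\bar f/(\bar f+1)<1-\alpha\le f(\rho^*)/(f(\rho^*)+1)$ then $1-\alpha$ is still left of the peak so $(\star)$ fails for every $a<1-\alpha$ while all $a\ge 1-\alpha$ are equilibria, giving $a^*=1-\alpha$; and if $1-\alpha>f(\rho^*)/(f(\rho^*)+1)$ then $1-\alpha$ lies right of the peak, the other root $\gamma<1-\alpha$ becomes binding, the equilibria are exactly $[\gamma,1]$, and $a^*=\gamma$ is the smaller root of $(1-a)a^{f(\rho^*)}=\alpha(1-\alpha)^{f(\rho^*)}$. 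In each case the equilibria form the up-set $\{a\ge a^*\}$, which is the assertion. (When $\rho^*=\infty$ the third case is vacuous, since $f(\rho^*)/(f(\rho^*)+1)=1\ge 1-\alpha$, and the same argument applies.)
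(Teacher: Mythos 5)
Your proof is correct and follows essentially the same route as the paper's: the same splitting of the test points $x$ at the kink $1-\alpha$, the same $\epsilon$-neighbourhood-of-$\rho^*$ argument for necessity at $x=1-\alpha$, the same lower bound $\Lambda(x,a)\ge\min\{(1-a)(a/x)^{f(\rho^*)},\alpha\}$ for sufficiency, and the same unimodal function $(1-a)a^{f(\rho^*)}$. The only difference is organisational --- you extract a single governing inequality $(\star)$ for the whole regime $a<1-\alpha$ and read off the three cases at the end, whereas the paper cases on the position of $1-\alpha$ relative to $f(\rho^*)/(f(\rho^*)+1)$ from the outset --- which is a slightly cleaner packaging of the identical argument.
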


\begin{proof}
Adopting the same notation in \eqref{Lambda_eq}, we have that
\begin{equation*}
    \Lambda(x,a) = \int_0^\infty\min\Big\{(1-a)^+\Big(\frac a x\Big)^{f(r)},\alpha\Big\}dF_\rho(r)=\int_0^\infty\min\Big\{(1-a)\Big(\frac a x\Big)^{f(r)},\alpha\Big\}dF_\rho(r)
\end{equation*}
for all $0\leq a\leq1$, and $x\geq a$. We have the following two distinct cases.

\noindent\textbf{Case 1}: $1-\alpha\leq\frac{f(\rho^*)}{f(\rho^*)+1}$.

By definition, $R=[0,a]$ is an equilibrium if and only if $\Lambda(x,a)\geq\phi(g(x))=\min\{(1-x)^+,\alpha\}$ for all $x\geq a$. When $a<1-\alpha$, we know that there exists $\epsilon>0$ such that when $f(r)\geq f(\rho^*)-\epsilon$,
\begin{equation*}
    (1-a)a^{f(r)}\leq \alpha(1-\alpha)^{f(r)}.
\end{equation*}
It therefore follows that
\begin{align*}
    \Lambda(1-\alpha,a)&=\int_0^\infty\min\Big\{(1-a)\Big(\frac a {1-\alpha}\Big)^{f(r)},\alpha\Big\}\mathbbm{1}\{f(r)\geq f(\rho^*)-\epsilon\}dF_\rho(r)
    \\
    &\qquad+\int_0^\infty \min\Big\{(1-a)\Big(\frac a {1-\alpha}\Big)^{f(r)},\alpha\Big\}\mathbbm{1}\{f(r)<f(\rho^*)-\epsilon\}dF_\rho(r)
    \\
    &\leq (1-a)\Big(\frac a {1-\alpha}\Big)^{f(\rho^*)-\epsilon}\times\P(f(\rho)\geq f(\rho^*)-\epsilon)+\alpha \P(f(\rho)<f(\rho^*)-\epsilon)
    \\
    &<\alpha\P(f(\rho)\geq f(\rho^*)-\epsilon)+\alpha \P(f(\rho)<f(\rho^*)-\epsilon)\\
    &=\alpha=\phi(g(1-\alpha)),
\end{align*}
where the last inequality holds because $\P(f(\rho)\geq f(\rho^*)-\epsilon)>0$ for all $\epsilon>0$. As a result, $[0,a]$ is not an equilibrium when $a<1-\alpha$. 

On the other hand, if $a\geq 1-\alpha$, we have $x\geq a\geq 1-\alpha$ and 
\begin{equation*}
    \Lambda(x,a) = \int_0^\infty\frac{(1-a)a^{f(r)}}{x^{f(r)}}dF_\rho(r)\text{ and }\phi(g(x)) = (1-x)^+.
\end{equation*}
Using the same calculation in the proof of Theorem \ref{small}, we know that if $a\geq \frac{\int_0^\infty f(r)\,dF_\rho(r)}{\int_0^\infty f(r)\,dF_\rho(r)+1}$ and $x\geq 1$, $\Lambda(x,a)\geq 0 = \phi(g(x))$. Moreover, if $a\geq \frac{\int_0^\infty f(r)\,dF_\rho(r)}{\int_0^\infty f(r)\,dF_\rho(r)+1}$ and $x\leq 1$, then $\int_0^\infty\frac{(1-a)a^{f(r)}}{x^{f(r)}}dF_\rho(r)$ is non-increasing with respect to $a$ and 
\begin{equation*}
    \Lambda(x,a)\geq \Lambda (x,x) = (1-x) = \phi(g(x)).
\end{equation*}
Therefore $[0,a]$ is an equilibrium if $a\geq \max\big\{1-\alpha,\frac{\int_0^\infty f(r)\,dF_\rho(r)}{\int_0^\infty f(r)\,dF_\rho(r)+1}\big\}$. Meanwhile, if $a<\frac{\int_0^\infty f(r)\,dF_\rho(r)}{\int_0^\infty f(r)\,dF_\rho(r)+1}$, we can calculate
\begin{equation*}
   \lim_{x\downarrow a} \frac{\partial }{\partial x}\Lambda(x,a)= \lim_{x\downarrow a}\int_0^\infty - \frac{f(r)(1-a)a^{f(r)}}{x^{f(r)+1}}d F_\rho(r)=\frac{(1-a)}{a}\cdot\int_0^\infty - f(r)dF_\rho(r)<-1=\frac{\d}{\d x}\phi(g(x)).
\end{equation*}
Together with the fact that $\Lambda(a,a)=\phi(g(a))$, there exists some $x$ close to $a$ such that $\Lambda(x,a)<\phi(g(x))$. Hence, $[0,a]$ is not an equilibrium. In summary, when $1-\alpha\leq\frac{f(\rho^*)}{f(\rho^*)+1}$, we have
\begin{equation*}
    a^*=\begin{cases}
        \frac{\int_0^\infty f(r)\,dF_\rho(r)}{\int_0^\infty f(r)\,dF_\rho(r)+1},&\text{ when }1-\alpha\leq\frac{\int_0^\infty f(r)\,dF_\rho(r)}{\int_0^\infty f(r)\,dF_\rho(r)+1},
        \\
         1-\alpha,&\text{ when }\frac{\int_0^\infty f(r)\,dF_\rho(r)}{\int_0^\infty f(r)\,dF_\rho(r)+1} < 1-\alpha\leq\frac{f(\rho^*)}{f(\rho^*)+1}.
    \end{cases}
\end{equation*}

\noindent\textbf{Case 2}:  $1-\alpha>\frac{f(\rho^*)}{f(\rho^*)+1}$.

Let us first verify that $\gamma$ is well defined. Consider the function $h(a):=(1-a)a^{f(\rho^*)}$, it holds that $h'(a)=(f(\rho^*)-(f(\rho^*)+1)a)a^{f(\rho^*)-1}$. Therefore, $h(a)$ is increasing when $0\leq a<\frac{f(\rho^*)}{1+f(\rho^*)}$ and decreasing when $\frac{f(\rho^*)}{f(\rho^*)+1}<a\leq 1$ and $h(0)=h(1)=0$. It is not hard to see that $1-\alpha$ is a root of the equation
\begin{equation*}
     (1-\gamma)\gamma^{f(\rho^*)} = \alpha(1-\alpha)^{f(\rho^*)},
\end{equation*}
and our choice of $\gamma$ satisfies
\begin{equation}\label{ineq:lambda}
    (1-a)a^{f(\rho^*)}\begin{cases}
    < \alpha(1-\alpha)^{f(\rho^*)},&\text{when }0\leq a<\gamma,\text{ or }1-\alpha<a\leq 1,
    \\
    > \alpha(1-\alpha)^{f(\rho^*)},&\text{when }\gamma<a<1-\alpha,
    \\
    =  \alpha(1-\alpha)^{f(\rho^*)},&\text{when }a=\gamma,\text{ or }a=1-\alpha.
    \end{cases}
\end{equation}
If $a<\gamma$, similarly, we have that
\begin{equation*}
    \begin{aligned}
    \Lambda(1-\alpha,a)&\leq (1-a)\Big(\frac a {1-\alpha}\Big)^{f(\rho^*)-\epsilon}\P(f(\rho)\leq f(\rho^*)-\epsilon)+\alpha \P(f(\rho)>f(\rho^*)-\epsilon)
    \\
    &<\alpha\P(f(\rho)\leq f(\rho^*)-\epsilon)+\alpha \P(f(\rho)>f(\rho^*)-\epsilon)=\alpha=\phi(g(1-\alpha)),
    \end{aligned}
\end{equation*}
where the last inequality holds because $(1-a)a^{f(\rho^*)}<\alpha(1-\alpha)^{f(\rho^*)}$ when $a<\gamma$ by \eqref{ineq:lambda} and it follows that $[0,a]$ is not an equilibrium if $a<1-\gamma$. 

If $a\geq\gamma$ and $x\geq1$, it is trivial to see that  $\phi(g(x))=0$ and $\Lambda(x,a)\geq 0 = \phi(g(x))$. For $x\in[a,1-\alpha)$, we have
\begin{equation*}
\frac{(1-a)a^{f(r)}}{x^{f(r)}}\geq \frac{(1-a)a^{f(\rho^*)}}{x^{f(\rho^*)}}>\frac{\alpha(1-\alpha)^{f(\rho^*)}}{(1-\alpha)^{f(\rho^*)}}=\alpha,
\end{equation*}
for any $r$ inside the support of $\rho$ and the last inequality holds because $x<1-\alpha$ and \eqref{ineq:lambda}. Therefore,
\begin{equation*}
\Lambda(x,a)= \int_0^\infty\min\Big\{\frac{(1-a)a^{f(r)}}{x^{f(r)}},\alpha\Big\}dF_\rho(r)\geq \alpha = \min\{1-x,\alpha\}=\phi(g(x)).
\end{equation*}
Finally, if $x\in[1-\alpha,1]$, we have $\phi(g(x))=1-x$ and 
\begin{equation*}
\Lambda(x,a)=  \int_0^\infty\min\Big\{\frac{(1-a)a^{f(r)}}{x^{f(r)}},\alpha\Big\}dF_\rho(r)\geq  \min\Big\{\frac{(1-a)a^{f(\rho^*)}}{x^{f(\rho^*)}},\alpha\Big\}.
\end{equation*}
As $x\geq 1-\alpha$, it suffices to show that 
\begin{equation}\label{ineq1}
(1-a)a^{f(\rho^*)}\geq (1-x)x^{f(\rho^*)}.
\end{equation}
Note that $h(\tilde a):=(1-\tilde a)\tilde a^{f(\rho^*)}$ is decreasing when $\tilde a>1-\alpha$, therefore, if $a>1-\alpha$ then \eqref{ineq1} holds due to $x\geq a$. If $\gamma\leq a<1-\alpha$, by \eqref{ineq:lambda}, we know that $(1-a)a^{f(\rho^*)}\geq\alpha(1-\alpha)^{f(\rho^*)}\geq (1-x)x^{f(\rho^*)}.$ and \eqref{ineq1} still holds, which completes the proof that $[0,a]$ is an equilibrium when $a\geq \gamma$. In summary, if $1-\alpha>\frac{f(\rho^*)}{f(\rho^*)+1}$, we have $a^*=\gamma$ and our conclusion follows.
\end{proof}

Based on the characterization of all equilibria, let us turn to the discussion on the existence of the optimal equilibrium. As new contributions to the literature, we will show in next three examples respectively that under different choices of $\alpha$ and distributions of $\rho$ (more precisely $f(\rho)$), the optimal equilibrium may: i) exist and equal to the smallest equilibrium in which the threshold $a^*$ now depends on $\alpha$, i.e., the optimal equilibrium region depends on the aggregation attitude function $\phi(x)$; ii) exist but not equal to the smallest equilibrium, and the optimal equilibrium threshold $a^{**}$ depends on the attitude function $\phi(x)$; and iii) not exist.

\ \\
\noindent\textbf{Example 1.} An illustrative example to show the impact on the optimal equilibrium by the aggregation attitude, i.e. the value of $\alpha$. Suppose that $1-\alpha\leq \frac{f(\rho^*)}{f(\rho^*)+1}$.

By Proposition \ref{prop:eq}, we get that $a^* = \max\Big\{1-\alpha,\frac{\int_0^\infty f(r)dF_\rho(r)}{\int_0^\infty f(r)dF_\rho(r)+1}\Big\}$ and for any $a\geq a^*$, we have
\begin{equation*}
    \Lambda(x,a)=\int_0^\infty\frac{(1-a)a^{f(r)}}{x^{f(r)}}d F_\rho(r).
\end{equation*}
Following the argument in the proof of Theorem \ref{small}, if $a\geq \frac{\int_0^\infty f(r)dF_\rho(r)}{\int_0^\infty f(r)dF_\rho(r)+1}$, we have that $\Lambda(x,a)$ is non-increasing in $a$ and it holds that
\begin{equation*}
\Lambda(x,a)\leq \Lambda(x,a^*),
\end{equation*}
for any $x\geq a\geq a^*$. Therefore, the smallest equilibrium $[0,a^*]$ is still the optimal equilibrium, which is similar to the main result in Theorem \ref{small}.

\begin{remark}
In this case, we note that the optimal equilibrium boundary $a^*$ is decreasing in $\alpha$. In our choice of $\phi(x)=\min(x,\alpha)$, the value $\alpha$ depicts the range of diverse attitudes that the social planner takes into account in the aggregation. Small $\alpha$ reflects that the social planner focuses more on the side of larger discount rates, i.e., the more impatient group members such as old generations in the total population. Our example illustrates that if $\alpha$ is smaller, the optimal equilibrium stopping region $[0,a^*]$ is larger and it is more likely that the social planner will quit from the investment and redeem the immediate payoff for the benefit of the impatient group members.    

\end{remark}

\noindent\textbf{Example 2.} An illustrative example to show the impact on the optimal equilibrium by the diversity distribution, i.e. the value of $f(\rho^*)$. Suppose that $1-\alpha > \frac{f(\rho^*)}{1+f(\rho^*)}$, $\P(f(\rho)=0)=\P(f(\rho)=f(\rho^*))=\frac12$.

By Proposition \ref{prop:eq}, we get that $a^*=\gamma$ where $\gamma$ is the smallest root in $[0,1]$ for the equation
\begin{equation*}
    (1-a)a^{f(\rho^*)}=\alpha\times(1-\alpha)^{f(\rho^*)}.
\end{equation*}
Then for all $x\geq a\geq \gamma$, we have that
\begin{equation*}
    \Lambda(x,a) = \frac12 \min \Big\{ 1-a, \alpha \Big\} + \frac12 \min \Big\{ (1-a) \Big(\frac{a}{x}\Big)^{f(\rho^*)} , \alpha \Big\},
\end{equation*}
and $\gamma<1-\alpha$. If $a\geq 1-\alpha$, it follows that
\begin{equation*}
    \Lambda(x,a)= \frac{1-a}2 + \frac {(1-a)a^{f(\rho^*)}}{2x^{f(\rho^*)}}
\end{equation*}
is decreasing in $a\in[1-\alpha,1]$ for any $x\geq a$. When $\gamma\leq a\leq1-\alpha$, we have that 
\begin{equation*}
    \Lambda(x,a)= \frac{1}2 \alpha + \frac1 2 \min\Big\{\frac {(1-a)a^{f(\rho^*)}}{x^{f(\rho^*)}},\alpha\Big\},
\end{equation*}
which achieves the maximum at $\frac{f(\rho^*)}{f(\rho^*)+1} $ for any $x\geq a$. Therefore, for any $x>\frac{f(\rho^*)}{f(\rho^*)+1}$, if $a\leq x$, then $\Lambda(x,a)\leq\Lambda(x,\frac{f(\rho^*)}{f(\rho^*)+1})$, if $a>x$, then $\Lambda(x,a)=\phi(g(x))\leq\Lambda(x,\frac{f(\rho^*)}{f(\rho^*)+1})$ as $[0, \frac{f(\rho^*)}{f(\rho^*)+1} ]$ is an equilibrium. For $\gamma\leq a<x\leq \frac{f(\rho^*)}{f(\rho^*)+1}$, we have that
\begin{equation*}
    \begin{aligned}
    \Lambda(x,a)=&\,\frac12 \min \Big\{ 1-a, \alpha \Big\} + \frac12 \min \Big\{ (1-a) \Big(\frac{a}{x}\Big)^{f(\rho^*)} , \alpha \Big\}
    \\
    \leq&
    \, \alpha  + \frac12 \min \Big\{ (1-x) \Big(\frac{x}{x}\Big)^{f(\rho^*)} , \alpha\Big\}=\phi(g(x))=\Lambda\Big(x,\frac{f(\rho^*)}{f(\rho^*)+1}\Big).
    \end{aligned}
\end{equation*}
Further, for any $x\leq\min\{a,\frac{f(\rho^*)}{f(\rho^*)+1}\}$, we have $\Lambda(x,a)=\Lambda(x,\frac{f(\rho^*)}{f(\rho^*)+1})=\phi(g(x))$. In conclusion, for any $x\geq 0$ and $a\geq \gamma$, we have
\begin{equation*}
\Lambda(x,a)\leq \Lambda\Big(x,\frac{f(\rho^*)}{f(\rho^*)+1}\Big).
\end{equation*}
That is, $\big[0,\frac{f(\rho^*)}{f(\rho^*)+1}\big]$ is an optimal equilibrium, but it is not the smallest equilibrium.

\begin{remark}
In this example, we note that the optimal equilibrium boundary $\frac{f(\rho^*)}{1+f(\rho^*)}$ is increasing in $\rho^*$ as $f(r)$ is increasing in $r$. Our theoretical finding again matches with the real life situation because the larger $\rho^*$ indicates the higher level of the most impatience among group members. The wider range of discount rates forces the social planner to take into account these more impatient members into the decision making and quit the investment more likely, which is consistent with the larger optimal equilibrium stopping region given the larger value of $\rho^*$.

More importantly, it is illustrated that the optimal equilibrium $\big[0,\frac{f(\rho^*)}{f(\rho^*)+1}\big]$ can differ from the smallest equilibrium $[0,\gamma]$ in our time inconsistent stopping problem under the particular choice of $\phi(x)=\min(x,\alpha)$ and $\P(f(\rho)=0)=\P(f(\rho)=f(\rho^*))=\frac12$ where $1-\alpha>\frac{f(\rho^*)}{1+f(\rho^*)}$. Generally speaking, the same conclusion holds that the optimal equilibrium may no longer coincide with the smallest equilibrium due to the nonlinear aggregation of the diversity distribution when $\phi(x)$ does not satisfy the assumption \textbf{C-(iii)}. For a single agent's time inconsistent stopping under decreasing impatient discounting, it was shown in \cite{HuangZhou20} that the optimal equilibrium can be characterized as the intersection of all time consistent equilibria, i.e. the smallest equilibrium. As a sharp contrast, in our framework with a general attitude function $\phi(x)$, it is an appealing open problem to find a general characterization of the optimal equilibrium if it exists, which will be left for our future research.

\end{remark}

\noindent\textbf{Example 3.} In the next example, we will show an extreme impact by the aggregation attitude and the diversity distribution such that there is no global optimal equilibrium. Let us consider $\alpha = \frac14$ and $\P(f(\rho)=1)=\P(f(\rho)=2)=\frac12$.

Similar to the previous case, we have that $a^*=\gamma=\frac{1+\sqrt{13}}{8}$. For all $x\geq a \geq \gamma$, it holds that
\begin{equation*}
\Lambda(x,a) = \frac12 \min \Big\{ (1-a)\Big(\frac{a}{x}\Big) , \frac14 \Big\} + \frac12 \min \Big\{ (1-a) \Big(\frac{a}{x}\Big)^2 , \frac14 \Big\}.
\end{equation*}
In what follows, for a fixed $x\geq \gamma$, we look for $a^{**}(x)$ that maximizes $V(x,[0,a])=\phi(g(x))\Lambda(x,a)$. We plot in Figure \ref{phaseplot} to illustrate the piecewise definition of $\Lambda(x,a)$, which satisfies that
\begin{equation*}
    \Lambda(x,a)=\begin{cases}
        \frac{1-a}2\cdot\big(\frac{a}x+\big(\frac{a}x\big)^2\big),&\text{ when }x\geq 4a(1-a) \text{ and }x\geq a\text{ (Blue region)},
        \\
        \frac{1}8 + \frac {(1-a)a^2}{2x^2},&\text{ when }2a\sqrt{1-a}\leq x< 4a(1-a)\text{ (Green region)},
        \\
        \frac 14,&\text{ when }a\leq x< 2a\sqrt{1-a}\text{ 
 (Yellow region)},
        \\
        \phi(g(x))=\min\{1-x,\frac14\},&\text{ when }x< a\text{ (Red region)}.
    \end{cases}
\end{equation*}

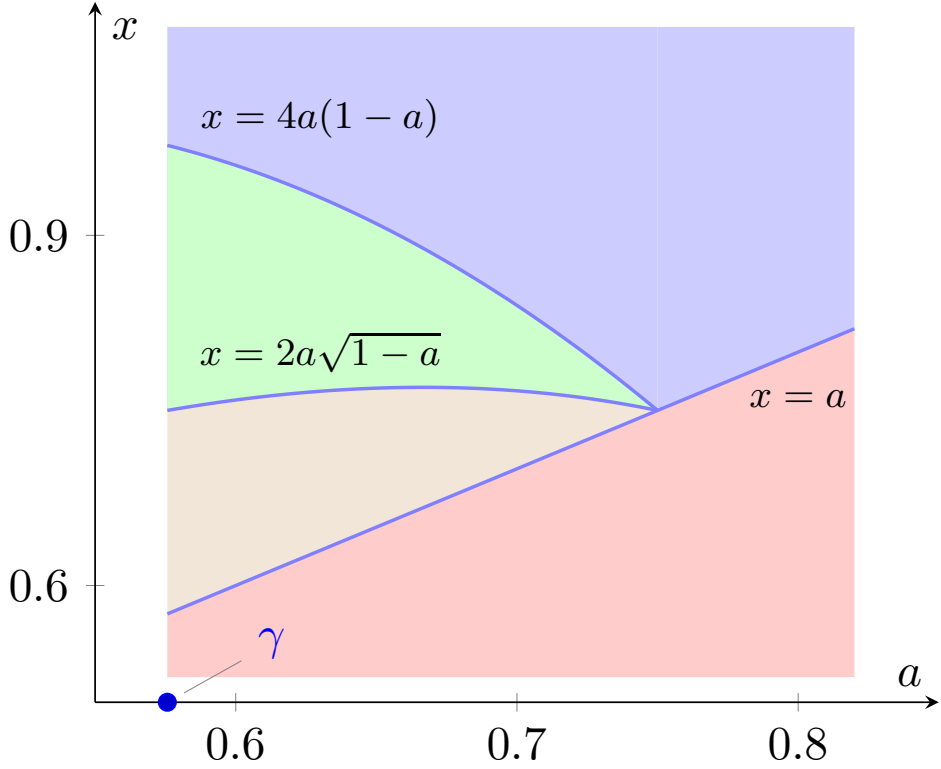
\begin{figure}[ht] 
\centering
\resizebox{0.8\columnwidth}{!}{
\begin{tikzpicture}
   \begin{axis}[
      xmin=0.55, xmax=0.85,
      ymin=0.5, ymax=1.1,
      axis lines=center,
      xtick distance=0.1, ytick distance=0.3,
      xlabel = {$a$},
      ylabel = {$x$},
    ]

    \addplot [domain=((1+sqrt(13))/8:0.82, samples=100, name path=f, thick, color=blue!50] {x};

    \addplot [domain=((1+sqrt(13))/8:0.82, samples=100, name path=x05, thick, color=white!50] {0.52};
    
    \addplot [domain=((1+sqrt(13))/8:0.82, samples=100, name path=x11, thick, color=white!50] {1.08};

    \addplot [domain=(1+sqrt(13))/8:0.75, samples=100, name path=g, thick, color=blue!50] {4*x*(1-x)};

    \addplot [domain=(1+sqrt(13))/8:0.75, samples=100, name path=x2, thick, color=blue!50] {2*x*sqrt(1-x)};

    \addplot[brown!20, opacity=0.4] fill between[of=f and x2, soft clip={domain=-2:0.75}];

    \addplot[blue!20, opacity=0.4] fill between[of=x2 and x11, soft clip={domain=-2:0.75}];
    \addplot[blue!20, opacity=0.4] fill between[of=f and x11, soft clip={domain=0.75:0.82}];
    \addplot[red!20, opacity=0.4] fill between[of=f and x05, soft clip={domain=-2:0.82}];
    \addplot[green!20, opacity=0.4] fill between[of=g and x2, soft clip={domain=-2:0.75}];

    \node[color=black, font=\footnotesize] at (axis cs: 0.8,0.76) {$x=a$};
    \node[color=black, font=\footnotesize] at (axis cs: 0.63,1) {$x=4a(1-a)$};
    \node[color=black, font=\footnotesize] at (axis cs: 0.63,0.8) {$x=2a\sqrt{1-a}$};
    \addplot+ coordinates {((1+sqrt(13))/8,0.5)}
            node [pin=20:{$\gamma$}] {};

   \end{axis}
  
\end{tikzpicture}
}
\caption{Different regions to determine $\Lambda(x,a)$.}\label{phaseplot}
\end{figure}

If $(x,a)$ falls into the blue region in Figure \ref{phaseplot}, we have
\begin{equation*}
    \Lambda(x,a)=\frac{1-a}2\cdot\Big(\frac{a}x+\Big(\frac{a}x\Big)^2\Big).
\end{equation*}
Taking derivative with respect to $a$, we get
\begin{equation*}
    \frac{\partial\Lambda(x,a)}{\partial a}= \frac{x-a(2x-2)-3a^2}{2x^2},
\end{equation*}
with two roots $a_1(x)=\frac{-x+1+\sqrt{x^2+x+1}}{3}$ and $a_2(x)=\frac{-x+1-\sqrt{x^2+x+1}}{3}<0$. Therefore, $\Lambda(x,a)$ is increasing when $0<a<a_1(x)$, and decreasing when $a>a_1(x)$. Note that $(x,a)$ is in the blue region if and only if $x\geq \max\{a,4a(1-a)\}$, which is equivalent to $\frac{1+\sqrt{1-x}}{2}\leq a\leq x$. To conclude, if $(x,a)$ is in the blue region, for any fixed $x$, the maximum of $a$ is attained by $\max\{a_1(x),\frac{1+\sqrt{1-x}}{2}\}$, which is illustrated by the red line in Figure \ref{phaseplot2}.

If $(x,a)$ falls into the green region, we have that
\begin{equation*}
    \Lambda(x,a)=\frac{1}8 + \frac {(1-a)a^2}{2x^2}
\end{equation*}
is increasing when $0<a<\frac23$, and decreasing when decreasing when $a>\frac 23$. Therefore, the red line in Figure \ref{phaseplot2} dominates the yellow region.

If $(x,a)$ falls into the yellow or brown region, $\Lambda(x,a)$ does not move with respect to $a$, therefore the shade region in Figure \ref{phaseplot2} corresponds to $a^{**}(x).$ In summary, we can conclude that

\begin{equation*}
    a^{**}(x)=\begin{cases}
        \max\{a_1(x),\gamma\},&\text{ when }x\geq \frac{7\sqrt{33}-9}{32},
        \\
        \frac12+\frac{\sqrt{1-x}}2, &\text{ when } \frac89 \leq x< \frac{7\sqrt{33}-9}{32},
        \\
        \frac23, &\text{ when } \frac{4\sqrt{3}}9 \leq x< \frac89,
        \\
        \big\{a\big|2a\sqrt{1-a}\geq x\big\}, &\text{ when }\frac34\leq x<\frac{4\sqrt{3}}9,
        \\
        \big\{\gamma\leq a\leq 1\big\},&\text{ when } \gamma\leq x< \frac 34.
    \end{cases}
\end{equation*}
As a result, in this example, there does not exist an optimal equilibrium such that its value function can dominate the ones under other equilibria for all $x>0$.

{
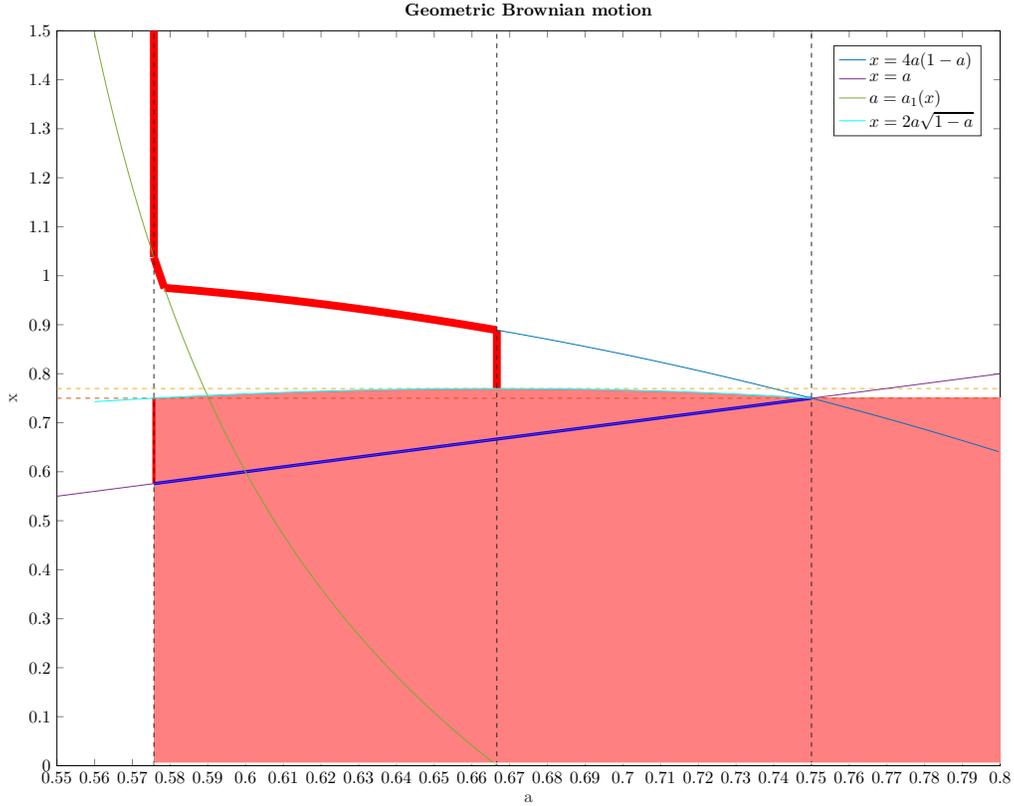
\begin{figure}[ht] 
\centering
\resizebox{\columnwidth}{!}{

\begin{tikzpicture}

\begin{axis}[%
width=8.288in,
height=6.463in,
at={(1.39in,0.872in)},
scale only axis,
xmin=0.55,
xmax=0.8,
xlabel style={font=\color{white!15!black}},
xlabel={a},
ymin=0,
ymax=1.5,
ylabel style={font=\color{white!15!black}},
ylabel={x},
axis background/.style={fill=white},
title style={font=\bfseries},
title={Geometric Brownian motion},
legend style={legend cell align=left, align=left, draw=white!15!black}
]
\addplot [color=red, line width=5.0pt, forget plot]
  table[row sep=crcr]{%
0.5757	1.03785\\
0.5757	1.5\\
};
\addplot [color=red, line width=2.0pt, forget plot]
  table[row sep=crcr]{%
0.5757	0.5757\\
0.5757	0.75\\
};
\addplot [color=blue, line width=2.0pt, forget plot]
  table[row sep=crcr]{%
0.5757	0.5757\\
0.75	0.75\\
};
\addplot [color=red, line width=5.0pt, forget plot]
  table[row sep=crcr]{%
0.6666	0.7698\\
0.6666	0.8889\\
};
\addplot [color=red, line width=5.0pt, forget plot]
  table[row sep=crcr]{%
0.5785	0.975351\\
0.579	0.975036\\
0.5795	0.974719\\
0.58	0.9744\\
0.5805	0.974079\\
0.581	0.973756\\
0.5815	0.973431\\
0.582	0.973104\\
0.5825	0.972775\\
0.583	0.972444\\
0.5835	0.972111\\
0.584	0.971776\\
0.5845	0.971439\\
0.585	0.9711\\
0.5855	0.970759\\
0.586	0.970416\\
0.5865	0.970071\\
0.587	0.969724\\
0.5875	0.969375\\
0.588	0.969024\\
0.5885	0.968671\\
0.589	0.968316\\
0.5895	0.967959\\
0.59	0.9676\\
0.5905	0.967239\\
0.591	0.966876\\
0.5915	0.966511\\
0.592	0.966144\\
0.5925	0.965775\\
0.593	0.965404\\
0.5935	0.965031\\
0.594	0.964656\\
0.5945	0.964279\\
0.595	0.9639\\
0.5955	0.963519\\
0.596	0.963136\\
0.5965	0.962751\\
0.597	0.962364\\
0.5975	0.961975\\
0.598	0.961584\\
0.5985	0.961191\\
0.599	0.960796\\
0.5995	0.960399\\
0.6	0.96\\
0.6005	0.959599\\
0.601	0.959196\\
0.6015	0.958791\\
0.602	0.958384\\
0.6025	0.957975\\
0.603	0.957564\\
0.6035	0.957151\\
0.604	0.956736\\
0.6045	0.956319\\
0.605	0.9559\\
0.6055	0.955479\\
0.606	0.955056\\
0.6065	0.954631\\
0.607	0.954204\\
0.6075	0.953775\\
0.608	0.953344\\
0.6085	0.952911\\
0.609	0.952476\\
0.6095	0.952039\\
0.61	0.9516\\
0.6105	0.951159\\
0.611	0.950716\\
0.6115	0.950271\\
0.612	0.949824\\
0.6125	0.949375\\
0.613	0.948924\\
0.6135	0.948471\\
0.614	0.948016\\
0.6145	0.947559\\
0.615	0.9471\\
0.6155	0.946639\\
0.616	0.946176\\
0.6165	0.945711\\
0.617	0.945244\\
0.6175	0.944775\\
0.618	0.944304\\
0.6185	0.943831\\
0.619	0.943356\\
0.6195	0.942879\\
0.62	0.9424\\
0.6205	0.941919\\
0.621	0.941436\\
0.6215	0.940951\\
0.622	0.940464\\
0.6225	0.939975\\
0.623	0.939484\\
0.6235	0.938991\\
0.624	0.938496\\
0.6245	0.937999\\
0.625	0.9375\\
0.6255	0.936999\\
0.626	0.936496\\
0.6265	0.935991\\
0.627	0.935484\\
0.6275	0.934975\\
0.628	0.934464\\
0.6285	0.933951\\
0.629	0.933436\\
0.6295	0.932919\\
0.63	0.9324\\
0.6305	0.931879\\
0.631	0.931356\\
0.6315	0.930831\\
0.632	0.930304\\
0.6325	0.929775\\
0.633	0.929244\\
0.6335	0.928711\\
0.634	0.928176\\
0.6345	0.927639\\
0.635	0.9271\\
0.6355	0.926559\\
0.636	0.926016\\
0.6365	0.925471\\
0.637	0.924924\\
0.6375	0.924375\\
0.638	0.923824\\
0.6385	0.923271\\
0.639	0.922716\\
0.6395	0.922159\\
0.64	0.9216\\
0.6405	0.921039\\
0.641	0.920476\\
0.6415	0.919911\\
0.642	0.919344\\
0.6425	0.918775\\
0.643	0.918204\\
0.6435	0.917631\\
0.644	0.917056\\
0.6445	0.916479\\
0.645	0.9159\\
0.6455	0.915319\\
0.646	0.914736\\
0.6465	0.914151\\
0.647	0.913564\\
0.6475	0.912975\\
0.648	0.912384\\
0.6485	0.911791\\
0.649	0.911196\\
0.6495	0.910599\\
0.65	0.91\\
0.6505	0.909399\\
0.651	0.908796\\
0.6515	0.908191\\
0.652	0.907584\\
0.6525	0.906975\\
0.653	0.906364\\
0.6535	0.905751\\
0.654	0.905136\\
0.6545	0.904519\\
0.655	0.9039\\
0.6555	0.903279\\
0.656	0.902656\\
0.6565	0.902031\\
0.657	0.901404\\
0.6575	0.900775\\
0.658	0.900144\\
0.6585	0.899511\\
0.659	0.898876\\
0.6595	0.898239\\
0.66	0.8976\\
0.6605	0.896959\\
0.661	0.896316\\
0.6615	0.895671\\
0.662	0.895024\\
0.6625	0.894375\\
0.663	0.893724\\
0.6635	0.893071\\
0.664	0.892416\\
0.6645	0.891759\\
0.665	0.8911\\
0.6655	0.890439\\
0.666	0.889776\\
0.6665	0.889111\\
};
\addplot [color=mycolor1]
  table[row sep=crcr]{%
0.6666	0.88897776\\
0.6676	0.88764096\\
0.6686	0.88629616\\
0.6696	0.88494336\\
0.6706	0.88358256\\
0.6716	0.88221376\\
0.6726	0.88083696\\
0.6736	0.87945216\\
0.6746	0.87805936\\
0.6756	0.87665856\\
0.6766	0.87524976\\
0.6776	0.87383296\\
0.6786	0.87240816\\
0.6796	0.87097536\\
0.6806	0.86953456\\
0.6816	0.86808576\\
0.6826	0.86662896\\
0.6836	0.86516416\\
0.6846	0.86369136\\
0.6856	0.86221056\\
0.6866	0.86072176\\
0.6876	0.85922496\\
0.6886	0.85772016\\
0.6896	0.85620736\\
0.6906	0.85468656\\
0.6916	0.85315776\\
0.6926	0.85162096\\
0.6936	0.85007616\\
0.6946	0.84852336\\
0.6956	0.84696256\\
0.6966	0.84539376\\
0.6976	0.84381696\\
0.6986	0.84223216\\
0.6996	0.84063936\\
0.7006	0.83903856\\
0.7016	0.83742976\\
0.7026	0.83581296\\
0.7036	0.83418816\\
0.7046	0.83255536\\
0.7056	0.83091456\\
0.7066	0.82926576\\
0.7076	0.82760896\\
0.7086	0.82594416\\
0.7096	0.82427136\\
0.7106	0.82259056\\
0.7116	0.82090176\\
0.7126	0.81920496\\
0.7136	0.81750016\\
0.7146	0.81578736\\
0.7156	0.81406656\\
0.7166	0.81233776\\
0.7176	0.81060096\\
0.7186	0.80885616\\
0.7196	0.80710336\\
0.7206	0.80534256\\
0.7216	0.80357376\\
0.7226	0.80179696\\
0.7236	0.80001216\\
0.7246	0.79821936\\
0.7256	0.79641856\\
0.7266	0.79460976\\
0.7276	0.79279296\\
0.7286	0.79096816\\
0.7296	0.78913536\\
0.7306	0.78729456\\
0.7316	0.78544576\\
0.7326	0.78358896\\
0.7336	0.78172416\\
0.7346	0.77985136\\
0.7356	0.77797056\\
0.7366	0.77608176\\
0.7376	0.77418496\\
0.7386	0.77228016\\
0.7396	0.77036736\\
0.7406	0.76844656\\
0.7416	0.76651776\\
0.7426	0.76458096\\
0.7436	0.76263616\\
0.7446	0.76068336\\
0.7456	0.75872256\\
0.7466	0.75675376\\
0.7476	0.75477696\\
0.7486	0.75279216\\
0.7496	0.75079936\\
0.7506	0.74879856\\
0.7516	0.74678976\\
0.7526	0.74477296\\
0.7536	0.74274816\\
0.7546	0.74071536\\
0.7556	0.73867456\\
0.7566	0.73662576\\
0.7576	0.73456896\\
0.7586	0.73250416\\
0.7596	0.73043136\\
0.7606	0.72835056\\
0.7616	0.72626176\\
0.7626	0.72416496\\
0.7636	0.72206016\\
0.7646	0.71994736\\
0.7656	0.71782656\\
0.7666	0.71569776\\
0.7676	0.71356096\\
0.7686	0.71141616\\
0.7696	0.70926336\\
0.7706	0.70710256\\
0.7716	0.70493376\\
0.7726	0.70275696\\
0.7736	0.70057216\\
0.7746	0.69837936\\
0.7756	0.69617856\\
0.7766	0.69396976\\
0.7776	0.69175296\\
0.7786	0.68952816\\
0.7796	0.68729536\\
0.7806	0.68505456\\
0.7816	0.68280576\\
0.7826	0.68054896\\
0.7836	0.67828416\\
0.7846	0.67601136\\
0.7856	0.67373056\\
0.7866	0.67144176\\
0.7876	0.66914496\\
0.7886	0.66684016\\
0.7896	0.66452736\\
0.7906	0.66220656\\
0.7916	0.65987776\\
0.7926	0.65754096\\
0.7936	0.65519616\\
0.7946	0.65284336\\
0.7956	0.65048256\\
0.7966	0.64811376\\
0.7976	0.64573696\\
0.7986	0.64335216\\
0.7996	0.64095936\\
};
\addlegendentry{$x=4a(1-a)$}

\addplot [color=mycolor2, dashed, name path=x75, forget plot]
  table[row sep=crcr]{%
0.55	0.75\\
0.552	0.75\\
0.554	0.75\\
0.556	0.75\\
0.558	0.75\\
0.56	0.75\\
0.562	0.75\\
0.564	0.75\\
0.566	0.75\\
0.568	0.75\\
0.57	0.75\\
0.572	0.75\\
0.574	0.75\\
0.576	0.75\\
0.578	0.75\\
0.58	0.75\\
0.582	0.75\\
0.584	0.75\\
0.586	0.75\\
0.588	0.75\\
0.59	0.75\\
0.592	0.75\\
0.594	0.75\\
0.596	0.75\\
0.598	0.75\\
0.6	0.75\\
0.602	0.75\\
0.604	0.75\\
0.606	0.75\\
0.608	0.75\\
0.61	0.75\\
0.612	0.75\\
0.614	0.75\\
0.616	0.75\\
0.618	0.75\\
0.62	0.75\\
0.622	0.75\\
0.624	0.75\\
0.626	0.75\\
0.628	0.75\\
0.63	0.75\\
0.632	0.75\\
0.634	0.75\\
0.636	0.75\\
0.638	0.75\\
0.64	0.75\\
0.642	0.75\\
0.644	0.75\\
0.646	0.75\\
0.648	0.75\\
0.65	0.75\\
0.652	0.75\\
0.654	0.75\\
0.656	0.75\\
0.658	0.75\\
0.66	0.75\\
0.662	0.75\\
0.664	0.75\\
0.666	0.75\\
0.668	0.75\\
0.67	0.75\\
0.672	0.75\\
0.674	0.75\\
0.676	0.75\\
0.678	0.75\\
0.68	0.75\\
0.682	0.75\\
0.684	0.75\\
0.686	0.75\\
0.688	0.75\\
0.69	0.75\\
0.692	0.75\\
0.694	0.75\\
0.696	0.75\\
0.698	0.75\\
0.7	0.75\\
0.702	0.75\\
0.704	0.75\\
0.706	0.75\\
0.708	0.75\\
0.71	0.75\\
0.712	0.75\\
0.714	0.75\\
0.716	0.75\\
0.718	0.75\\
0.72	0.75\\
0.722	0.75\\
0.724	0.75\\
0.726	0.75\\
0.728	0.75\\
0.73	0.75\\
0.732	0.75\\
0.734	0.75\\
0.736	0.75\\
0.738	0.75\\
0.74	0.75\\
0.742	0.75\\
0.744	0.75\\
0.746	0.75\\
0.748	0.75\\
0.75	0.75\\
0.752	0.75\\
0.754	0.75\\
0.756	0.75\\
0.758	0.75\\
0.76	0.75\\
0.762	0.75\\
0.764	0.75\\
0.766	0.75\\
0.768	0.75\\
0.77	0.75\\
0.772	0.75\\
0.774	0.75\\
0.776	0.75\\
0.778	0.75\\
0.78	0.75\\
0.782	0.75\\
0.784	0.75\\
0.786	0.75\\
0.788	0.75\\
0.79	0.75\\
0.792	0.75\\
0.794	0.75\\
0.796	0.75\\
0.798	0.75\\
0.8	0.75\\
};
\addplot [color=mycolor3, dashed, forget plot]
  table[row sep=crcr]{%
0.55	0.7698\\
0.552	0.7698\\
0.554	0.7698\\
0.556	0.7698\\
0.558	0.7698\\
0.56	0.7698\\
0.562	0.7698\\
0.564	0.7698\\
0.566	0.7698\\
0.568	0.7698\\
0.57	0.7698\\
0.572	0.7698\\
0.574	0.7698\\
0.576	0.7698\\
0.578	0.7698\\
0.58	0.7698\\
0.582	0.7698\\
0.584	0.7698\\
0.586	0.7698\\
0.588	0.7698\\
0.59	0.7698\\
0.592	0.7698\\
0.594	0.7698\\
0.596	0.7698\\
0.598	0.7698\\
0.6	0.7698\\
0.602	0.7698\\
0.604	0.7698\\
0.606	0.7698\\
0.608	0.7698\\
0.61	0.7698\\
0.612	0.7698\\
0.614	0.7698\\
0.616	0.7698\\
0.618	0.7698\\
0.62	0.7698\\
0.622	0.7698\\
0.624	0.7698\\
0.626	0.7698\\
0.628	0.7698\\
0.63	0.7698\\
0.632	0.7698\\
0.634	0.7698\\
0.636	0.7698\\
0.638	0.7698\\
0.64	0.7698\\
0.642	0.7698\\
0.644	0.7698\\
0.646	0.7698\\
0.648	0.7698\\
0.65	0.7698\\
0.652	0.7698\\
0.654	0.7698\\
0.656	0.7698\\
0.658	0.7698\\
0.66	0.7698\\
0.662	0.7698\\
0.664	0.7698\\
0.666	0.7698\\
0.668	0.7698\\
0.67	0.7698\\
0.672	0.7698\\
0.674	0.7698\\
0.676	0.7698\\
0.678	0.7698\\
0.68	0.7698\\
0.682	0.7698\\
0.684	0.7698\\
0.686	0.7698\\
0.688	0.7698\\
0.69	0.7698\\
0.692	0.7698\\
0.694	0.7698\\
0.696	0.7698\\
0.698	0.7698\\
0.7	0.7698\\
0.702	0.7698\\
0.704	0.7698\\
0.706	0.7698\\
0.708	0.7698\\
0.71	0.7698\\
0.712	0.7698\\
0.714	0.7698\\
0.716	0.7698\\
0.718	0.7698\\
0.72	0.7698\\
0.722	0.7698\\
0.724	0.7698\\
0.726	0.7698\\
0.728	0.7698\\
0.73	0.7698\\
0.732	0.7698\\
0.734	0.7698\\
0.736	0.7698\\
0.738	0.7698\\
0.74	0.7698\\
0.742	0.7698\\
0.744	0.7698\\
0.746	0.7698\\
0.748	0.7698\\
0.75	0.7698\\
0.752	0.7698\\
0.754	0.7698\\
0.756	0.7698\\
0.758	0.7698\\
0.76	0.7698\\
0.762	0.7698\\
0.764	0.7698\\
0.766	0.7698\\
0.768	0.7698\\
0.77	0.7698\\
0.772	0.7698\\
0.774	0.7698\\
0.776	0.7698\\
0.778	0.7698\\
0.78	0.7698\\
0.782	0.7698\\
0.784	0.7698\\
0.786	0.7698\\
0.788	0.7698\\
0.79	0.7698\\
0.792	0.7698\\
0.794	0.7698\\
0.796	0.7698\\
0.798	0.7698\\
0.8	0.7698\\
};
\addplot [color=mycolor4]
  table[row sep=crcr]{%
0.55	0.55\\
0.552	0.552\\
0.554	0.554\\
0.556	0.556\\
0.558	0.558\\
0.56	0.56\\
0.562	0.562\\
0.564	0.564\\
0.566	0.566\\
0.568	0.568\\
0.57	0.57\\
0.572	0.572\\
0.574	0.574\\
0.576	0.576\\
0.578	0.578\\
0.58	0.58\\
0.582	0.582\\
0.584	0.584\\
0.586	0.586\\
0.588	0.588\\
0.59	0.59\\
0.592	0.592\\
0.594	0.594\\
0.596	0.596\\
0.598	0.598\\
0.6	0.6\\
0.602	0.602\\
0.604	0.604\\
0.606	0.606\\
0.608	0.608\\
0.61	0.61\\
0.612	0.612\\
0.614	0.614\\
0.616	0.616\\
0.618	0.618\\
0.62	0.62\\
0.622	0.622\\
0.624	0.624\\
0.626	0.626\\
0.628	0.628\\
0.63	0.63\\
0.632	0.632\\
0.634	0.634\\
0.636	0.636\\
0.638	0.638\\
0.64	0.64\\
0.642	0.642\\
0.644	0.644\\
0.646	0.646\\
0.648	0.648\\
0.65	0.65\\
0.652	0.652\\
0.654	0.654\\
0.656	0.656\\
0.658	0.658\\
0.66	0.66\\
0.662	0.662\\
0.664	0.664\\
0.666	0.666\\
0.668	0.668\\
0.67	0.67\\
0.672	0.672\\
0.674	0.674\\
0.676	0.676\\
0.678	0.678\\
0.68	0.68\\
0.682	0.682\\
0.684	0.684\\
0.686	0.686\\
0.688	0.688\\
0.69	0.69\\
0.692	0.692\\
0.694	0.694\\
0.696	0.696\\
0.698	0.698\\
0.7	0.7\\
0.702	0.702\\
0.704	0.704\\
0.706	0.706\\
0.708	0.708\\
0.71	0.71\\
0.712	0.712\\
0.714	0.714\\
0.716	0.716\\
0.718	0.718\\
0.72	0.72\\
0.722	0.722\\
0.724	0.724\\
0.726	0.726\\
0.728	0.728\\
0.73	0.73\\
0.732	0.732\\
0.734	0.734\\
0.736	0.736\\
0.738	0.738\\
0.74	0.74\\
0.742	0.742\\
0.744	0.744\\
0.746	0.746\\
0.748	0.748\\
0.75	0.75\\
0.752	0.752\\
0.754	0.754\\
0.756	0.756\\
0.758	0.758\\
0.76	0.76\\
0.762	0.762\\
0.764	0.764\\
0.766	0.766\\
0.768	0.768\\
0.77	0.77\\
0.772	0.772\\
0.774	0.774\\
0.776	0.776\\
0.778	0.778\\
0.78	0.78\\
0.782	0.782\\
0.784	0.784\\
0.786	0.786\\
0.788	0.788\\
0.79	0.79\\
0.792	0.792\\
0.794	0.794\\
0.796	0.796\\
0.798	0.798\\
0.8	0.8\\
};
\addlegendentry{$x=a$}

\addplot [color=mycolor5]
  table[row sep=crcr]{%
0.56	1.49333333333333\\
0.5605	1.47536570247934\\
0.561	1.45768032786885\\
0.5615	1.44027032520325\\
0.562	1.42312903225806\\
0.5625	1.40625\\
0.563	1.38962698412698\\
0.5635	1.37325393700787\\
0.564	1.357125\\
0.5645	1.34123449612403\\
0.565	1.32557692307692\\
0.5655	1.31014694656489\\
0.566	1.29493939393939\\
0.5665	1.2799492481203\\
0.567	1.26517164179104\\
0.5675	1.25060185185185\\
0.568	1.23623529411765\\
0.5685	1.22206751824818\\
0.569	1.20809420289855\\
0.5695	1.19431115107914\\
0.57	1.18071428571428\\
0.5705	1.16729964539007\\
0.571	1.15406338028169\\
0.5715	1.14100174825175\\
0.572	1.12811111111111\\
0.5725	1.11538793103448\\
0.573	1.10282876712329\\
0.5735	1.09043027210884\\
0.574	1.07818918918919\\
0.5745	1.06610234899329\\
0.575	1.05416666666667\\
0.5755	1.04237913907285\\
0.576	1.03073684210526\\
0.5765	1.01923692810457\\
0.577	1.00787662337662\\
0.5775	0.996653225806451\\
0.578	0.985564102564101\\
0.5785	0.974606687898089\\
0.579	0.963778481012657\\
0.5795	0.953077044025157\\
0.58	0.942499999999999\\
0.5805	0.9320450310559\\
0.581	0.921709876543208\\
0.5815	0.911492331288343\\
0.582	0.901390243902438\\
0.5825	0.891401515151515\\
0.583	0.881524096385541\\
0.5835	0.871755988023952\\
0.584	0.862095238095237\\
0.5845	0.852539940828402\\
0.585	0.843088235294116\\
0.5855	0.833738304093567\\
0.586	0.824488372093022\\
0.5865	0.815336705202312\\
0.587	0.806281609195401\\
0.5875	0.797321428571428\\
0.588	0.788454545454544\\
0.5885	0.779679378531073\\
0.589	0.770994382022471\\
0.5895	0.762398044692737\\
0.59	0.753888888888888\\
0.5905	0.745465469613259\\
0.591	0.737126373626372\\
0.5915	0.728870218579234\\
0.592	0.720695652173912\\
0.5925	0.712601351351351\\
0.593	0.704586021505375\\
0.5935	0.696648395721925\\
0.594	0.688787234042552\\
0.5945	0.681001322751322\\
0.595	0.673289473684209\\
0.5955	0.665650523560209\\
0.596	0.658083333333332\\
0.5965	0.650586787564766\\
0.597	0.643159793814432\\
0.5975	0.635801282051282\\
0.598	0.628510204081631\\
0.5985	0.621285532994923\\
0.599	0.614126262626261\\
0.5995	0.607031407035175\\
0.6	0.599999999999999\\
0.6005	0.593031094527363\\
0.601	0.586123762376236\\
0.6015	0.579277093596059\\
0.602	0.57249019607843\\
0.6025	0.565762195121951\\
0.603	0.559092233009708\\
0.6035	0.552479468599033\\
0.604	0.545923076923076\\
0.6045	0.539422248803827\\
0.605	0.532976190476189\\
0.6055	0.526584123222748\\
0.606	0.520245283018867\\
0.6065	0.513958920187793\\
0.607	0.507724299065419\\
0.6075	0.501540697674418\\
0.608	0.495407407407406\\
0.6085	0.489323732718894\\
0.609	0.483288990825687\\
0.6095	0.477302511415525\\
0.61	0.471363636363635\\
0.6105	0.465471719457013\\
0.611	0.459626126126125\\
0.6115	0.453826233183856\\
0.612	0.448071428571427\\
0.6125	0.442361111111111\\
0.613	0.436694690265486\\
0.6135	0.431071585903083\\
0.614	0.425491228070174\\
0.6145	0.419953056768559\\
0.615	0.414456521739129\\
0.6155	0.409001082251082\\
0.616	0.403586206896551\\
0.6165	0.398211373390557\\
0.617	0.392876068376067\\
0.6175	0.387579787234042\\
0.618	0.382322033898304\\
0.6185	0.377102320675105\\
0.619	0.371920168067226\\
0.6195	0.36677510460251\\
0.62	0.361666666666666\\
0.6205	0.356594398340249\\
0.621	0.351557851239668\\
0.6215	0.346556584362139\\
0.622	0.341590163934425\\
0.6225	0.336658163265306\\
0.623	0.331760162601625\\
0.6235	0.326895748987854\\
0.624	0.322064516129031\\
0.6245	0.317266064257028\\
0.625	0.312499999999999\\
0.6255	0.30776593625498\\
0.626	0.303063492063491\\
0.6265	0.298392292490118\\
0.627	0.293751968503936\\
0.6275	0.289142156862745\\
0.628	0.284562499999999\\
0.6285	0.280012645914396\\
0.629	0.275492248062014\\
0.6295	0.271000965250965\\
0.63	0.266538461538461\\
0.6305	0.262104406130268\\
0.631	0.257698473282442\\
0.6315	0.253320342205323\\
0.632	0.248969696969696\\
0.6325	0.244646226415094\\
0.633	0.240349624060149\\
0.6335	0.236079588014981\\
0.634	0.231835820895521\\
0.6345	0.227618029739777\\
0.635	0.223425925925925\\
0.6355	0.21925922509225\\
0.636	0.215117647058823\\
0.6365	0.211000915750915\\
0.637	0.206908759124087\\
0.6375	0.202840909090909\\
0.638	0.198797101449274\\
0.6385	0.194777075812274\\
0.639	0.190780575539568\\
0.6395	0.18680734767025\\
0.64	0.182857142857142\\
0.6405	0.178929715302491\\
0.641	0.175024822695034\\
0.6415	0.171142226148409\\
0.642	0.167281690140844\\
0.6425	0.16344298245614\\
0.643	0.159625874125873\\
0.6435	0.155830139372822\\
0.644	0.152055555555555\\
0.6445	0.148301903114186\\
0.645	0.14456896551724\\
0.6455	0.140856529209622\\
0.646	0.137164383561643\\
0.6465	0.133492320819112\\
0.647	0.129840136054421\\
0.6475	0.126207627118643\\
0.648	0.122594594594594\\
0.6485	0.119000841750841\\
0.649	0.115426174496643\\
0.6495	0.111870401337792\\
0.65	0.108333333333332\\
0.6505	0.104814784053156\\
0.651	0.101314569536423\\
0.6515	0.0978325082508245\\
0.652	0.0943684210526305\\
0.6525	0.0909221311475404\\
0.653	0.0874934640522866\\
0.6535	0.0840822475570027\\
0.654	0.0806883116883107\\
0.6545	0.0773114886731385\\
0.655	0.0739516129032249\\
0.6555	0.0706085209003209\\
0.656	0.0672820512820504\\
0.6565	0.0639720447284339\\
0.657	0.0606783439490437\\
0.6575	0.0574007936507932\\
0.658	0.0541392405063282\\
0.6585	0.0508935331230278\\
0.659	0.0476635220125778\\
0.6595	0.044449059561128\\
0.66	0.0412499999999991\\
0.6605	0.0380661993769466\\
0.661	0.0348975155279494\\
0.6615	0.0317438080495351\\
0.662	0.0286049382716039\\
0.6625	0.0254807692307686\\
0.663	0.022371165644171\\
0.6635	0.0192759938837914\\
0.664	0.0161951219512186\\
0.6645	0.013128419452887\\
0.665	0.0100757575757566\\
0.6655	0.00703700906344351\\
0.666	0.00401204819277023\\
0.6665	0.00100075075075012\\
};
\addlegendentry{$a=a_1(x)$}

\addplot [color=red, line width=5.0pt, forget plot]
  table[row sep=crcr]{%
0.5757	1.03770495376486\\
0.5759	1.03305382081686\\
0.5761	1.02842555847569\\
0.5763	1.02381998689384\\
0.5765	1.01923692810457\\
0.5767	1.01467620599739\\
0.5769	1.01013764629389\\
0.5771	1.005621076524\\
0.5773	1.00112632600259\\
0.5775	0.996653225806451\\
0.5777	0.992201608751609\\
0.5779	0.987771309370989\\
0.5781	0.983362163892444\\
0.5783	0.978974010217113\\
0.5785	0.974606687898089\\
};
\addplot [color=mycolor6,name path=gg]
  table[row sep=crcr]{%
0.56	0.74292395303961\\
0.5605	0.74316466513149\\
0.561	0.743404382553668\\
0.5615	0.743643104250957\\
0.562	0.743880829165532\\
0.5625	0.744117556236916\\
0.563	0.74435328440197\\
0.5635	0.744588012594884\\
0.564	0.744821739747169\\
0.5645	0.745054464787642\\
0.565	0.74528618664242\\
0.5655	0.745516904234907\\
0.566	0.745746616485787\\
0.5665	0.745975322313011\\
0.567	0.746203020631785\\
0.5675	0.746429710354565\\
0.568	0.746655390391042\\
0.5685	0.746880059648134\\
0.569	0.747103717029972\\
0.5695	0.747326361437893\\
0.57	0.747547991770428\\
0.5705	0.747768606923291\\
0.571	0.747988205789369\\
0.5715	0.74820678725871\\
0.572	0.748424350218511\\
0.5725	0.748640893553111\\
0.573	0.748856416143976\\
0.5735	0.749070916869691\\
0.574	0.749284394605947\\
0.5745	0.749496848225528\\
0.575	0.749708276598305\\
0.5755	0.749918678591219\\
0.576	0.750128053068274\\
0.5765	0.750336398890524\\
0.577	0.75054371491606\\
0.5775	0.75075\\
0.578	0.750955252994478\\
0.5785	0.75115947274863\\
0.579	0.751362658108586\\
0.5795	0.751564807917454\\
0.58	0.751765921015312\\
0.5805	0.751965996239192\\
0.581	0.752165032423071\\
0.5815	0.752363028397861\\
0.582	0.752559982991389\\
0.5825	0.752755895028395\\
0.583	0.752950763330512\\
0.5835	0.753144586716256\\
0.584	0.753337364001016\\
0.5845	0.75352909399704\\
0.585	0.75371977551342\\
0.5855	0.753909407356083\\
0.586	0.754097988327777\\
0.5865	0.754285517228059\\
0.587	0.75447199285328\\
0.5875	0.754657413996576\\
0.588	0.754841779447852\\
0.5885	0.75502508799377\\
0.589	0.755207338417735\\
0.5895	0.755388529499886\\
0.59	0.755568660017076\\
0.5905	0.755747728742866\\
0.591	0.755925734447505\\
0.5915	0.756102675897923\\
0.592	0.756278551857713\\
0.5925	0.75645336108712\\
0.593	0.756627102343023\\
0.5935	0.75679977437893\\
0.594	0.756971375944956\\
0.5945	0.757141905787812\\
0.595	0.757311362650792\\
0.5955	0.75747974527376\\
0.596	0.757647052393131\\
0.5965	0.757813282741864\\
0.597	0.757978435049441\\
0.5975	0.758142508041859\\
0.598	0.75830550044161\\
0.5985	0.75846741096767\\
0.599	0.758628238335484\\
0.5995	0.758787981256952\\
0.6	0.758946638440411\\
0.6005	0.759104208590626\\
0.601	0.759260690408769\\
0.6015	0.759416082592409\\
0.602	0.759570383835494\\
0.6025	0.759723592828339\\
0.603	0.759875708257607\\
0.6035	0.760026728806297\\
0.604	0.760176653153726\\
0.6045	0.760325479975517\\
0.605	0.76047320794358\\
0.6055	0.7606198357261\\
0.606	0.760765361987518\\
0.6065	0.760909785388518\\
0.607	0.761053104586007\\
0.6075	0.761195318233106\\
0.608	0.761336424979128\\
0.6085	0.761476423469565\\
0.609	0.761615312346069\\
0.6095	0.761753090246439\\
0.61	0.761889755804605\\
0.6105	0.762025307650606\\
0.611	0.76215974441058\\
0.6115	0.762293064706744\\
0.612	0.762425267157379\\
0.6125	0.76255635037681\\
0.613	0.762686312975394\\
0.6135	0.762815153559498\\
0.614	0.762942870731485\\
0.6145	0.763069463089698\\
0.615	0.763194929228438\\
0.6155	0.76331926773795\\
0.616	0.763442477204406\\
0.6165	0.763564556209886\\
0.617	0.76368550333236\\
0.6175	0.763805317145672\\
0.618	0.763923996219519\\
0.6185	0.764041539119438\\
0.619	0.764157944406783\\
0.6195	0.764273210638709\\
0.62	0.764387336368153\\
0.6205	0.764500320143818\\
0.621	0.76461216051015\\
0.6215	0.764722856007325\\
0.622	0.764832405171224\\
0.6225	0.764940806533421\\
0.623	0.765048058621156\\
0.6235	0.765154159957325\\
0.624	0.765259109060454\\
0.6245	0.765362904444682\\
0.625	0.765465544619743\\
0.6255	0.765567028090944\\
0.626	0.765667353359146\\
0.6265	0.765766518920748\\
0.627	0.76586452326766\\
0.6275	0.76596136488729\\
0.628	0.76605704226252\\
0.6285	0.766151553871687\\
0.629	0.766244898188562\\
0.6295	0.766337073682332\\
0.63	0.766428078817576\\
0.6305	0.766517912054245\\
0.631	0.766606571847646\\
0.6315	0.766694056648413\\
0.632	0.766780364902493\\
0.6325	0.76686549505112\\
0.633	0.766949445530799\\
0.6335	0.767032214773278\\
0.634	0.767113801205532\\
0.6345	0.767194203249738\\
0.635	0.767273419323255\\
0.6355	0.767351447838603\\
0.636	0.767428287203436\\
0.6365	0.767503935820527\\
0.637	0.76757839208774\\
0.6375	0.767651654398009\\
0.638	0.767723721139317\\
0.6385	0.767794590694673\\
0.639	0.767864261442086\\
0.6395	0.767932731754546\\
0.64	0.768\\
0.6405	0.768066064541326\\
0.641	0.768130923736312\\
0.6415	0.768194575937633\\
0.642	0.768257019492826\\
0.6425	0.768318252744265\\
0.643	0.76837827402914\\
0.6435	0.768437081679431\\
0.644	0.768494674021883\\
0.6445	0.768551049377984\\
0.645	0.768606206063937\\
0.6455	0.768660142390641\\
0.646	0.768712856663657\\
0.6465	0.768764347183192\\
0.647	0.76881461224407\\
0.6475	0.768863650135705\\
0.648	0.768911459142078\\
0.6485	0.768958037541711\\
0.649	0.769003383607641\\
0.6495	0.769047495607391\\
0.65	0.76909037180295\\
0.6505	0.769132010450742\\
0.651	0.7691724098016\\
0.6515	0.76921156810074\\
0.652	0.769249483587737\\
0.6525	0.769286154496492\\
0.653	0.769321579055209\\
0.6535	0.769355755486368\\
0.654	0.769388682006696\\
0.6545	0.769420356827138\\
0.655	0.769450778152833\\
0.6555	0.769479944183083\\
0.656	0.769507853111325\\
0.6565	0.769534503125104\\
0.657	0.769559892406043\\
0.6575	0.769584019129815\\
0.658	0.769606881466116\\
0.6585	0.76962847757863\\
0.659	0.769648805625007\\
0.6595	0.769667863756829\\
0.66	0.76968565011958\\
0.6605	0.769702162852619\\
0.661	0.76971740008915\\
0.6615	0.769731359956186\\
0.662	0.769744040574528\\
0.6625	0.769755440058724\\
0.663	0.769765556517048\\
0.6635	0.76977438805146\\
0.664	0.769781932757583\\
0.6645	0.769788188724665\\
0.665	0.76979315403555\\
0.6655	0.769796826766648\\
0.666	0.769799204987898\\
0.6665	0.769800286762742\\
0.667	0.769800070148087\\
0.6675	0.769798553194276\\
0.668	0.769795733945051\\
0.6685	0.769791610437526\\
0.669	0.769786180702148\\
0.6695	0.769779442762666\\
0.67	0.769771394636096\\
0.6705	0.769762034332689\\
0.671	0.769751359855895\\
0.6715	0.769739369202329\\
0.672	0.769726060361737\\
0.6725	0.769711431316958\\
0.673	0.769695480043894\\
0.6735	0.76967820451147\\
0.674	0.769659602681601\\
0.6745	0.769639672509155\\
0.675	0.769618411941918\\
0.6755	0.769595818920555\\
0.676	0.769571891378577\\
0.6765	0.7695466272423\\
0.677	0.769520024430814\\
0.6775	0.769492080855937\\
0.678	0.769462794422186\\
0.6785	0.769432163026735\\
0.679	0.769400184559375\\
0.6795	0.769366856902479\\
0.68	0.769332177930964\\
0.6805	0.769296145512247\\
0.681	0.769258757506211\\
0.6815	0.769220011765165\\
0.682	0.769179906133799\\
0.6825	0.769138438449152\\
0.683	0.769095606540565\\
0.6835	0.769051408229645\\
0.684	0.76900584133022\\
0.6845	0.768958903648303\\
0.685	0.768910592982045\\
0.6855	0.768860907121698\\
0.686	0.76880984384957\\
0.6865	0.768757400939984\\
0.687	0.768703576159237\\
0.6875	0.768648367265553\\
0.688	0.768591772009042\\
0.6885	0.768533788131661\\
0.689	0.76847441336716\\
0.6895	0.768413645441048\\
0.69	0.768351482070543\\
0.6905	0.76828792096453\\
0.691	0.768222959823514\\
0.6915	0.768156596339575\\
0.692	0.768088828196323\\
0.6925	0.768019653068852\\
0.693	0.767949068623695\\
0.6935	0.767877072518772\\
0.694	0.767803662403352\\
0.6945	0.767728835917995\\
0.695	0.767652590694515\\
0.6955	0.767574924355922\\
0.696	0.767495834516383\\
0.6965	0.767415318781167\\
0.697	0.767333374746596\\
0.6975	0.76725\\
0.698	0.767165192119663\\
0.6985	0.767078948674776\\
0.699	0.766991267225384\\
0.6995	0.766902145322335\\
0.7	0.766811580507233\\
0.7005	0.76671957031238\\
0.701	0.766626112260729\\
0.7015	0.766531203865831\\
0.702	0.766434842631779\\
0.7025	0.766337026053159\\
0.703	0.766237751614993\\
0.7035	0.766137016792688\\
0.704	0.76603481905198\\
0.7045	0.765931155848879\\
0.705	0.765826024629615\\
0.7055	0.765719422830582\\
0.706	0.765611347878282\\
0.7065	0.765501797189269\\
0.707	0.76539076817009\\
0.7075	0.765278258217232\\
0.708	0.765164264717061\\
0.7085	0.765048785045764\\
0.709	0.764931816569294\\
0.7095	0.764813356643305\\
0.71	0.7646934026131\\
0.7105	0.764571951813562\\
0.711	0.764449001569104\\
0.7115	0.7643245491936\\
0.712	0.764198591990328\\
0.7125	0.764071127251907\\
0.713	0.763942152260235\\
0.7135	0.763811664286426\\
0.714	0.763679660590748\\
0.7145	0.763546138422558\\
0.715	0.763411095020239\\
0.7155	0.763274527611134\\
0.716	0.763136433411484\\
0.7165	0.762996809626358\\
0.717	0.762855653449589\\
0.7175	0.762712962063711\\
0.718	0.762568732639885\\
0.7185	0.762422962337835\\
0.719	0.762275648305782\\
0.7195	0.76212678768037\\
0.72	0.761976377586602\\
0.7205	0.761824415137767\\
0.721	0.761670897435369\\
0.7215	0.76151582156906\\
0.722	0.761359184616565\\
0.7225	0.76120098364361\\
0.723	0.761041215703854\\
0.7235	0.760879877838808\\
0.724	0.760716967077769\\
0.7245	0.760552480437741\\
0.725	0.76038641492336\\
0.7255	0.760218767526822\\
0.726	0.760049535227803\\
0.7265	0.759878714993386\\
0.727	0.75970630377798\\
0.7275	0.759532298523242\\
0.728	0.759356696158005\\
0.7285	0.759179493598187\\
0.729	0.759000687746724\\
0.7295	0.758820275493479\\
0.73	0.758638253715168\\
0.7305	0.758454619275273\\
0.731	0.758269369023964\\
0.7315	0.75808249979801\\
0.732	0.757894008420703\\
0.7325	0.757703891701765\\
0.733	0.75751214643727\\
0.7335	0.757318769409553\\
0.734	0.757123757387126\\
0.7345	0.75692710712459\\
0.735	0.756728815362544\\
0.7355	0.756528878827504\\
0.736	0.756327294231803\\
0.7365	0.756124058273508\\
0.737	0.755919167636329\\
0.7375	0.75571261898952\\
0.738	0.755504408987797\\
0.7385	0.755294534271234\\
0.739	0.755082991465177\\
0.7395	0.754869777180144\\
0.74	0.754654888011732\\
0.7405	0.75443832054052\\
0.741	0.754220071331969\\
0.7415	0.754000136936327\\
0.742	0.753778513888529\\
0.7425	0.753555198708097\\
0.743	0.753330187899038\\
0.7435	0.753103477949744\\
0.744	0.752875065332888\\
0.7445	0.752644946505323\\
0.745	0.752413117907975\\
0.7455	0.75217957596574\\
0.746	0.751944317087376\\
0.7465	0.751707337665398\\
0.747	0.751468634075967\\
0.7475	0.751228202678787\\
0.748	0.750986039816986\\
0.7485	0.750742141817015\\
0.749	0.750496504988531\\
0.7495	0.750249125624282\\
0.75	0.75\\
};
\addlegendentry{$x=2a\sqrt{1-a}$}

\addplot [color=white!15!black, dashed, forget plot]
  table[row sep=crcr]{%
0.5757	0\\
0.5757	1.5\\
};
\addplot [color=white!15!black, dashed, forget plot]
  table[row sep=crcr]{%
0.6666	0\\
0.6666	1.5\\
};
\addplot [color=white!15!black, dashed, forget plot]
  table[row sep=crcr]{%
0.75	0\\
0.75	1.5\\
};

 \addplot [domain=(0.5757:0.8, samples=100, name path=x005, thick, color=white!50] {0.005};

  \addplot[red!50, opacity=0.2] fill between[of=x005 and x75, soft clip={domain=0.5757:0.8}];
  \addplot[red!50, opacity=0.2] fill between[of=gg and x75, soft clip={domain=0.5757:0.75}];

\end{axis}

\begin{axis}[%
width=10.694in,
height=7.931in,
at={(0in,0in)},
scale only axis,
xmin=0,
xmax=1,
ymin=0,
ymax=1,
axis line style={draw=none},
ticks=none,
axis x line*=bottom,
axis y line*=left
]
\end{axis}
\end{tikzpicture}%
}
\caption{Optimal barrier $a^{**}(x)$ that attains the maximum of $\Lambda(x,a)$ for geometric Brownian Motion. (Red line and red region)}\label{phaseplot2}
\end{figure}}

\subsection{The case of Bessel process with $n=3$}\label{sec:Bessel_3}

We next focus on another example of underlying process $X_t$ modelled by the Bessel process with degree $\nu=\frac12$(or equivalently $n=2\nu+2=3$). More precisely, we consider that $X_t=\sqrt{(W^1_t)^2+(W^2_t)^2+(W^3_t)^2}$, where $(\left( W^1_t, W^2_t, W^3_t \right)$ is a three-dimensional standard Brownian motion. Let us consider the same Put payoff function $g(x)=(1-x)^+$ and the aggregation attitude $\phi(x)=\min(x, \alpha)$. 

\textbf{Characterization of the equilibrium.}
We will first give the characterization that $R=[0,a]$ is an equilibrium if and only if $a > a^*$ for some threshold $a^*$. In particular, we shall analyze how the parameter $\alpha$ in the aggregation function $\phi(x)$ and the diversity distribution may affect the value of $a^*$, and discuss the existence of the optimal equilibrium. This subsection can also serve as a comparison model with respect to the case of geometric Brownian motion, and demonstrates how the underlying state process may affect the conclusion of the optimal equilibrium.

In the current context, $\Lambda(x,a)$ defined in \eqref{Lambda_eq} can be written by
$$
\Lambda(x,a) =\int_0^\infty\min \Big\{ (1-a)\Big(\frac{a}{x}\Big) e^{-\sqrt{2 r}(x-a)}, \alpha \Big\} dF_\rho(r).
$$
Let $h_{r}(x):= (1-x) x e^{\sqrt{2 r} x}$, it follows that $h'_r(x) = e^{\sqrt{2r} x} \left[ 1-2x+(1-x)x  \sqrt{2r} \right]$, and consequently
$$
h'_r(x)\geq 0 \Leftrightarrow - \sqrt{\frac14+\frac{1}{2r}} -\frac{1}{\sqrt{2 r}} + \frac12 \leq x \leq \sqrt{\frac14+\frac{1}{2r}} -\frac{1}{\sqrt{2r}} + \frac12.
$$
Denote $x^*(r):=  \sqrt{\frac14+\frac{1}{2r}} -\frac{1}{\sqrt{2 r}} + \frac12$. Then on the interval $[0,1]$, $x \mapsto h_{r}(x)$ is increasing on $[0, x^*(r)]$, and decreasing on $[x^*(r), 1]$. Moreover, $x^*(r)'= \frac12 (2 r)^{-2} \big[ (2r)^{\frac12} - \big( \frac14+\frac{1}{2r} \big)^{-\frac12} \big] > 0$ for $r > 0$, hence $r \mapsto x^*(r)$ is increasing. We note here that $x \mapsto h_{r}(x)$ has the same monotonicity as the map $x \mapsto (1-x) x^{f(r)}$ in the previous example of geometric Brownian motion, hence the results are similar when all equilibria are restricted to the one-barrier form $[0,a]$ for some $a>0$.

\begin{proposition} \label{prop:Bessel}
$R$ is an equilibrium if and only if $R = [0, a]$ for some $a \geq a^*$, and $a^*$ is defined as:
$$
a^*=\left\{\begin{aligned}
& x^*\left(\frac12 \left( \int_0^\infty\sqrt{2 r} \,dF_\rho(r) \right)^2 \right), & \mbox{ if } 1-\alpha \leq x^*\left(\frac12 \left( \int_0^\infty\sqrt{2 r} \,dF_\rho(r)  \right)^2 \right),  \\
&1-\alpha, & \mbox{ if } x^*\left(\frac12 \int_0^\infty\sqrt{2 r} \,dF_\rho(r)  \right) < 1-\alpha \leq x^*(\rho^*), \\
&\gamma, & \mbox{ if } 1-\alpha > x^*(\rho^*),
\end{aligned} \right.
$$
where $\gamma$ is the unique solution to the equation 
\begin{equation} \label{eq:gamma_bessel}
(1-\gamma) \gamma e^{\sqrt{2\rho^*}} = \alpha (1-\alpha) e^{\sqrt{2\rho^*} (1-\alpha)}
\end{equation}
on the interval $[0,1-\alpha)$.
\end{proposition}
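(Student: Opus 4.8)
The plan is to transcribe the proof of Proposition~\ref{prop:eq} to the present setting, using that for the three-dimensional Bessel process $\E[e^{-r\tau_a^x}]=\frac{a}{x}e^{-\sqrt{2r}(x-a)}$, so that the integrand of $\Lambda(x,a)$ is governed by the already-analyzed profile $h_r(a)=(1-a)ae^{\sqrt{2r}a}$, which increases on $[0,x^*(r)]$, decreases on $[x^*(r),1]$, and satisfies that $r\mapsto x^*(r)$ is increasing. Since $(1-X_t)^+$ is a submartingale for the Bessel process, the reduction to one-barrier equilibria proceeds as in Lemma~\ref{lem-oneb}, so it remains to decide for which $a$ the equilibrium inequality $\Lambda(x,a)\geq\phi(g(x))=\min\{(1-x)^+,\alpha\}$ holds for every $x\geq a$. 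Writing $\bar s:=\int_0^\infty\sqrt{2r}\,dF_\rho(r)$, I first record that the linear-$\phi$ smallest barrier from Example-2, i.e.\ the root of $G(a)+\frac{1}{1-a}=0$ with $G(a)=-\bar s-\frac1a$, is exactly $x^*(\tfrac12\bar s^2)$: clearing denominators, both are characterized by $\bar s(1-a)a=2a-1$, which is the critical-point equation $(1-a)a\sqrt{2r}=2a-1$ at $\sqrt{2r}=\bar s$.

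I would first rule out every $a$ below the claimed $a^*$ by a single test point. Evaluating $\Lambda$ at $x=1-\alpha$ and splitting the $r$-integral over $\{r:\sqrt{2r}\geq\sqrt{2\rho^*}-\epsilon\}$ and its complement, the monotonicity of $h_r$ together with $\P(\sqrt{2\rho}\geq\sqrt{2\rho^*}-\epsilon)>0$ for every $\epsilon>0$ gives $\Lambda(1-\alpha,a)<\alpha=\phi(g(1-\alpha))$ as soon as $h_{\rho^*}(a)<h_{\rho^*}(1-\alpha)$. When $1-\alpha\leq x^*(\rho^*)$ this strict inequality holds for all $a<1-\alpha$, since both points then lie on the increasing branch of $h_{\rho^*}$; when $1-\alpha>x^*(\rho^*)$ it holds for all $a<\gamma$, where $\gamma<x^*(\rho^*)<1-\alpha$ is the smaller point with $h_{\rho^*}(\gamma)=h_{\rho^*}(1-\alpha)$. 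Hence $[0,a]$ fails to be an equilibrium in either case.

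For the complementary regime I would verify the inequality directly. If $a\geq 1-\alpha$, then $x\geq a\geq 1-\alpha$ forces the smooth term $(1-a)\frac{a}{x}e^{-\sqrt{2r}(x-a)}\leq 1-a\leq\alpha$, so the cap never binds, $\Lambda(x,a)=\int_0^\infty (1-a)\frac{a}{x}e^{-\sqrt{2r}(x-a)}\,dF_\rho(r)$, and $\phi(g(x))=(1-x)^+$; here the derivative-and-FKG computation from the proof of Theorem~\ref{small} (with $h_r(a)$ replacing $(1-a)a^{f(r)}$) shows that $[0,a]$ is an equilibrium precisely when $a\geq x^*(\tfrac12\bar s^2)$, and that $\Lambda(x,a)\geq\Lambda(x,x)=\phi(g(x))$ for $x\leq 1$ by monotonicity in $a$. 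Combined with the test-point step this yields $a^*=\max\{1-\alpha,\ x^*(\tfrac12\bar s^2)\}$, which is the first case when $1-\alpha\leq x^*(\tfrac12\bar s^2)$ and the second case otherwise. In the remaining regime $1-\alpha>x^*(\rho^*)$ I would instead split $x\geq a\geq\gamma$ into $x\geq 1$ (where $\phi(g(x))=0$), $x\in[a,1-\alpha)$ (where the $r=\rho^*$ term already exceeds $\alpha$, since $h_{\rho^*}(a)\geq h_{\rho^*}(1-\alpha)$ for $a\in[\gamma,1-\alpha)$ and $x<1-\alpha$, so $\Lambda(x,a)\geq\alpha=\phi(g(x))$), and $x\in[1-\alpha,1]$ (where it suffices to check $h_{\rho^*}(a)\geq h_{\rho^*}(x)$, which follows from the monotonicity of $h_{\rho^*}$ and the definition of $\gamma$), giving $a^*=\gamma$.

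The main obstacle, exactly as in the geometric case, is the bookkeeping forced by the cap: one must determine, as $(x,a,r)$ vary, on which portion of the support of $\rho$ the minimum in $\Lambda$ is realized by the smooth term rather than by $\alpha$, and keep these regions consistent under the integration in $r$. The technical core is the test-point estimate near $\rho^*$, which hinges on $h_r$ being monotone in the correct direction simultaneously for all $r$ and on $\rho$ placing positive mass arbitrarily close to $\rho^*$; once this is secured, the monotonicity-in-$a$ and Harris--FKG steps are routine translations of the proof of Theorem~\ref{small}.
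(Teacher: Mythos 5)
Your proposal is correct and follows essentially the same route as the paper's proof: the test-point estimate at $x=1-\alpha$ with positive mass of $\rho$ near $\rho^*$ to rule out $a$ below the threshold, the observation that the cap never binds once $a\geq 1-\alpha$ so that the linear-aggregation threshold $x^*(\tfrac12\bar s^2)$ governs that regime, and the three-way split in $x$ for the case $1-\alpha>x^*(\rho^*)$ using $h_{\rho^*}(a)\geq h_{\rho^*}(1-\alpha)=h_{\rho^*}(\gamma)$. The only cosmetic difference is that in the uncapped regime the paper verifies sufficiency via convexity of $x\mapsto\Lambda(x,a)$ and the boundary-derivative condition $\lim_{x\downarrow a}\partial_x\Lambda\geq -1$ (yielding the same quadratic in $a$), whereas you import the monotonicity-in-$a$/FKG argument of Theorem~\ref{small}; both are valid and lead to the identical threshold.
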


\begin{proof}
\textbf{Case 1}: $1-\alpha \leq x^*(\rho^*) = \sqrt{\frac14+\frac{1}{2\rho^*}} -\frac{1}{\sqrt{2\rho^*}} + \frac12$.

\hspace{-8mm} \textbf{(a).} If $a<1-\alpha$, as $h_{r}$ is increasing on $[0, x^*(r)]$, there exists $\epsilon>0$ such that 
$$
   (1-a) a e^{\sqrt{2 (\rho^*-\epsilon)} a} = h_{\rho^*-\epsilon}(a) < h_{\rho^*-\epsilon}(1-\alpha) = \alpha (1-\alpha) e^{\sqrt{2 (\rho^*-\epsilon)} (1-\alpha)}.
$$
Then for $x=1-\alpha$, we have that
$$
 \begin{aligned}
    \Lambda(1-\alpha,a)
    =&\int_0^\infty\min\Big\{(1-a)\Big(\frac a {1-\alpha}\Big) e^{-\sqrt{2r}(1-\alpha-a)},\alpha\Big\}d F_\rho(r)
    \\
    \leq& (1-a)\Big(\frac a {1-\alpha}\Big) e^{-\sqrt{2 (\rho^*-\epsilon)}(1-\alpha-a)}\cdot\P(\rho\geq\rho^*-\epsilon)+\alpha \P(\rho<\rho^*-\epsilon)
    \\
    <&\alpha\P(\rho\geq\rho^*-\epsilon)+\alpha \P(\rho<\rho^*-\epsilon)=\alpha=\phi(g(1-\alpha)),
    \end{aligned}
$$
where in the second line we have used the fact that $r \mapsto x^*(r)$ is increasing. It follows that this case is not an equilibrium.

\hspace{-8mm} \textbf{(b).} If $a \geq 1-\alpha$, then $\phi(g(x)) = \min \{ 1-x, \alpha \}= 1-x$, for all $x \geq a \geq 1-\alpha$. Hence 
$$
\Lambda(x,a) = \int_0^\infty\min \Big\{ (1-a)\Big(\frac{a}{x}\Big) e^{-\sqrt{2r}(x-a)}, \alpha \Big\} d F_\rho(r)= \int_0^\infty (1-a)(\frac{a}{x}) e^{-\sqrt{2r}(x-a)} d F_\rho(r),
$$
where we have used the fact that $1-a \leq \alpha$, and $(\frac{a}{x}) e^{-\sqrt{2r}(x-a)} \leq 1$. Similar as Theorem \ref{small}, it follows from that fact that $x \mapsto \Lambda(x,a)$ is convex and $\Lambda(a,a) = \phi(g(a))=1-a$, that we have
$$
\Lambda(x,a) \geq  \phi(g(x)) \Longleftrightarrow \lim_{x\downarrow a}\partial_x \Lambda(x,a)  \geq  \phi'(g(a)) = -1.
$$
In addition, note that
$$
\lim_{x\downarrow a}\partial_x \Lambda =\lim_{x\downarrow a} \int_0^\infty \bigg(-(1-a)ax^{-2} e^{-(x-a) \sqrt{2r}} - (1-a)ax^{-1} e^{-(x-a) \sqrt{2r}} \sqrt{ 2r } \bigg)d F_\rho(r).
$$
Requiring $\lim_{x\downarrow a}\partial_x \Lambda \geq -1$ leads to 
$$
 - \int_0^\infty \sqrt{2r} \,dF_\rho(r) a^2 + \Big( \int_0^\infty \sqrt{2r} \,d F_\rho(r) - 2\Big) a + 1 \leq 0,
$$
and consequently $a \geq -\frac{1}{\int_0^\infty \sqrt{2r} \,dF_\rho(r)} + \sqrt{\frac14+\frac{1}{(\int_0^\infty \sqrt{2r} \,dF_\rho(r))^2}} + \frac12 = x^*(\frac12 \left( \int_0^\infty \sqrt{2r} \,dF_\rho(r)\right)^2 )$.
(The solution $a \leq -\frac{1}{\int_0^\infty \sqrt{2r} \,dF_\rho(r)} - \sqrt{\frac14+\frac{1}{(\int_0^\infty \sqrt{2r} \,dF_\rho(r))^2}} + \frac12$
is ruled out in view of $a > 0$.)

In summary, when $x^*(\frac12 \left(\int_0^\infty \sqrt{2r} \,dF_\rho(r) \right)^2 ) \leq 1-\alpha \leq x^*(\rho^*)$, we have $a^*= 1- \alpha$; when $x^*(\frac12 \left( \int_0^\infty \sqrt{2r} \,dF_\rho(r)\right)^2 ) \geq 1-\alpha $, we have $a^*= x^*(\frac12 \left( \int_0^\infty \sqrt{2r} \,dF_\rho(r)\right)^2 )$.

\textbf{Case 2}: $1-\alpha \geq x^*(\rho^*) = \sqrt{\frac14+\frac{1}{2\rho^*}} -\frac{1}{\sqrt{2\rho^*}} + \frac12$.

Let us first examine the equation \eqref{eq:gamma_bessel}. It follows from $x \mapsto h_{\rho^*}(x)$ is increasing on $[0,x^*(\rho^*)]$ and decreasing on $[x^*(\rho^*), 1-\alpha)$, together with the continuity of $x \mapsto h_{\rho^*}(x)$ and $h_{\rho^*}(0)=0$,  that \eqref{eq:gamma_bessel} has a unique solution on $[0, 1-\alpha)$. In what follows, we shall prove that $a^*=\gamma$ is the smallest $a$ such that $[0,a]$ is an equilibrium.

\hspace{-8mm} \textbf{(a).}  If $a < \gamma$, similar to \textbf{Case 1 a}, one can derive that $[0,a]$ is not an equilibrium.

\hspace{-8mm} \textbf{(b).} If $a \geq \gamma$, we distinguish two separate cases: $x \in [a, 1-\alpha)$ and $x \in [1-\alpha, 1]$. 

If $x \in [1-\alpha, 1]$, then $\phi(g(x))=\min \{ 1-x, \alpha \}=1-x$. In addition, we can verify that 
\begin{equation} \label{eq:subcase_inequality_Bessel}
h_{\rho^*}(x) \leq h_{\rho^*}(a), ~~~~\forall x \in [1-\alpha, 1], a \in [\gamma, x].
\end{equation}
We need to consider two separate cases $a \in [\gamma, x^*(\rho^*)]$ and $a \in [x^*(\rho^*),x]$, and use the fact that $x \mapsto h_{\rho^*}(x)$ is increasing on $[0,x^*(\rho^*)]$ and decreasing on $[x^*(\rho^*), 1]$.  For $a \in [x^*(\rho^*),x]$, $h_{\rho^*}(x) \leq h_{\rho^*}(a)$ follows directly from the monotonicity. For $a \in [\gamma, x^*(\rho^*)]$,  we have that $h_{\rho^*}(a) \geq h_{\rho^*}(\gamma) =  h_{\rho^*}(1-\alpha) \geq  h_{\rho^*}(x)$. In conclusion, \eqref{eq:subcase_inequality_Bessel} holds valid. It follows that 
\begin{align*}
\Lambda(x,a) &= \int_0^\infty \min \Big\{ (1-a)\Big(\frac{a}{x}\Big) e^{-\sqrt{2r}(x-a)}, \alpha \Big\} d F_\rho(r) \geq  \min \Big\{ (1-a)\Big(\frac{a}{x}\Big) e^{-\sqrt{2\rho^*}(x-a)}, \alpha \Big\}\\
 &\geq  \min \{ 1-x, \alpha \} = \phi(g(x)),
\end{align*}
 where we have used \eqref{eq:subcase_inequality_Bessel} in the second inequality.
 
If $x \in [a,1-\alpha)$, then $\phi(g(x))=\min \{ 1-x, \alpha \}=\alpha$. In addition, taking $x=1-\alpha$ in \eqref{eq:subcase_inequality_Bessel}, we get $f_{\rho^*}(1-\alpha) \leq f_{\rho^*}(a)$, and consequently
$$
(1-a)\Big(\frac{a}{x}\Big) e^{-\sqrt{2\rho^*}(x-a)} \geq (1-a)\Big(\frac{a}{1-\alpha}\Big) e^{-\sqrt{2\rho^*}(1 - \alpha -a)} \geq \alpha.
$$
It follows that
$$
\begin{aligned}
\Lambda(x,a) &= \int_0^\infty \min \Big\{ (4-a)\Big(\frac{a}{x}\Big) e^{-\sqrt{2r}(x-a)}, \alpha \Big\} dF_\rho(r) 
\\
&\geq  \min \Big\{ (4-a)\Big(\frac{a}{x}\Big) e^{-\sqrt{2\rho^*}(x-a)}, \alpha \Big\}  \geq  \alpha = \phi(g(x)).
\end{aligned}
$$
In conclusion, any $a$ such that $a \geq \gamma$ is an equilibrium.
\end{proof}

Based on the characterization of all equilibria with the threshold $a^*$, we will show some examples within the framework of Bessel process that the optimal equilibrium may: i) exist and coincide with the smallest equilibrium, and the optimal equilibrium threshold $a^*$ depends on the aggregation attitude function $\phi(x)$; and ii) not exist. In Remark \ref{bess-remark}, we will also elaborate that the optimal equilibrium needs to coincide with the smallest equilibrium if it exists in the model of Bessel process, which shows an interesting distinction from the previous example with geometric Brownian motion. 

\ \\
\noindent\textbf{Example 1.} An illustrative example to show the impact on the optimal equilibrium by the aggregation attitude. Suppose that $1-\alpha\leq x^*(\rho^*)$.

By Proposition \ref{prop:Bessel}, we have that $a^* = \max\Big\{1-\alpha,x^*\left(\frac12 \left( \int_0^\infty \sqrt{2r}\,dF_\rho(r) \right)^2 \right) \Big\}$ and for any $a\geq a^*$, we have
\begin{equation*}
    \Lambda(x,a)=\int_0^\infty\frac{(1-a)a}{x} e^{-\sqrt{2r} (x-a)}d F_\rho(r).
\end{equation*}
Using the argument in Theorem \ref{small}, when $a\geq x^*\left(\frac12 \left( \int_0^\infty \sqrt{2r}\,dF_\rho(r) \right)^2 \right)$, we have that $\Lambda(x,a)$ is non-increasing with respect to $a$ and hence
\begin{equation*}
\Lambda(x,a)\leq \Lambda(x,a^*)
\end{equation*}
for any $x\geq a\geq a^*$. Hence $[0,a^*]$ is the optimal equilibrium. Similar to Example 1 in the model of geometric Brownian motion, the optimal equilibrium boundary $a^*$ depends on the value $\alpha$ in the aggregation function $\phi(x)$ and is also decreasing in $\alpha$, which is consistent with the real life situation that the social planner will quit the investment more likely or earlier if he focuses more on these impatient group members with larger discount rates. 

\ \\
\noindent\textbf{Example 2.} An example when there is no global optimal equilibrium. Suppose that $\alpha = \frac15$, $\P(\rho=0)=\P(\rho=4)=\frac12$.

 It is clear that $\int_0^\infty \sqrt{2r}\,dF_\rho(r) = \sqrt{2}$, and $\rho^*=4$. From the previous discussion, we have that when  $1-\alpha <-\frac{1}{2\sqrt{2}}+\sqrt{\frac14+\frac18}+\frac12$, or equivalently $\alpha \geq \frac12-\sqrt{\frac18} + \sqrt{\frac38} \approx 0.2411$, the minimal equilibrium $a^*=\max \{ 1-\alpha, \frac{\sqrt{3}}{2} + \frac12 - \frac{1}{\sqrt{2}} \}$. When $a < \frac12-\sqrt{\frac18} + \sqrt{\frac38} \approx 0.2411$, the minimal equilibrium $a^*=\gamma$, with $\gamma \in (0, 1-\alpha)$ being uniquely determined by $(1-\gamma) \gamma e^{\sqrt{2\rho^*} \gamma} = \alpha (1-\alpha) e^{\sqrt{2\rho^*} (1-\alpha)}$.

In what follows, we take $\alpha=\frac15$. One can get that 
$$
\phi(1-x) = \min (1-x, \frac15),
$$
and
$$
\Lambda(x,a) = \frac12 \min \Big\{ (1-a)\Big(\frac{a}{x}\Big) , \frac15 \Big\} + \frac12 \min \Big\{ (1-a) \Big(\frac{a}{x}\Big) e^{-2\sqrt{2} (x-a)} , \frac15 \Big\}.
$$

Let us determine the critical points for $x$:
$$
 (1-a)\Big(\frac{a}{x}\Big) = \frac15 \Rightarrow x = x^*:= 5 a(1-a);  1-x=\frac15 \Rightarrow x=\frac45.
$$
 It is clear that when $\frac15<a<\frac45$, it holds that $x^*>\frac45$; while when $0<a<\frac15$ or $\frac45<a<1$, it holds that $x^*<\frac45$.
In addition
$$
 (1-a) (\frac{a}{x}) e^{-2\sqrt{2} (x-a)} = \frac15 \Leftrightarrow (1-a) a e^{2\sqrt{2} a} = \frac15 x e^{2\sqrt{2} x} \Rightarrow x = x^{**}(a).
$$
This equation cannot be solved analytically, and in the following we shall use the fact that 
$$
x \mapsto q(x):=  (1-a) \Big(\frac{a}{x}\Big) e^{-2\sqrt{2} (x-a)}
$$
is decreasing. Let us compare $\frac45$ with $x^{**}$. As $q(\frac45)=\frac15 \Leftrightarrow a(1-a)=\frac{4}{25}e^{2\sqrt{2}(\frac45-a)}$.  Solving the equation gives that $a=\frac45$ or $a= \gamma \approx 0.71305$. When $\gamma < a < \frac45$, $q(\frac45)>\frac15=q(x^{**})$, hence $x^{**}>\frac45$. When $a<\gamma$ or $a > \frac45$, $q(\frac45)<\frac15$, hence $x^{**} < \frac45$.

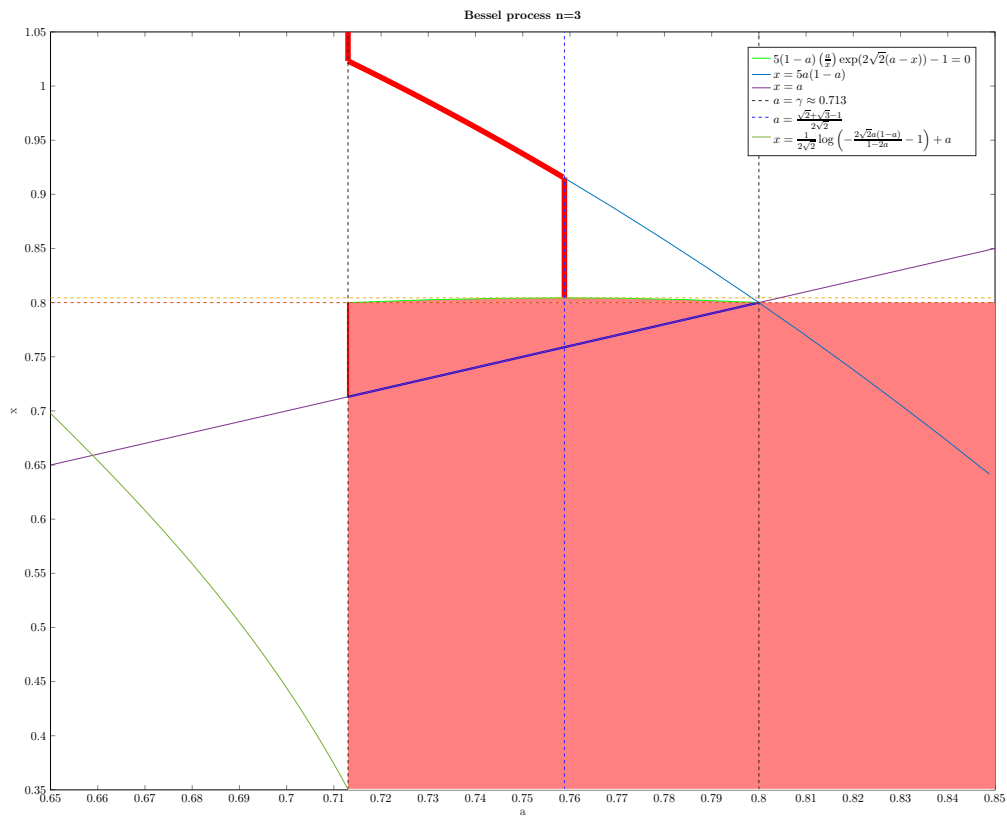
\begin{figure}[ht] 
\centering
\resizebox{\columnwidth}{!}{

\begin{tikzpicture}

\begin{axis}[%
width=11.797in,
height=9.474in,
at={(1.979in,1.279in)},
scale only axis,
xmin=0.65,
xmax=0.85,
xlabel style={font=\color{white!15!black}},
xlabel={a},
ymin=0.35,
ymax=1.05,
ylabel style={font=\color{white!15!black}},
ylabel={x},
axis background/.style={fill=white},
title style={font=\bfseries},
title={Bessel process n=3},
legend style={legend cell align=left, align=left, draw=white!15!black}
]

\addplot [color=green, name path=f]
  table[row sep=crcr]{%
0.713	0.8\\
0.73	 0.8026\\
0.745	0.80396\\
0.7588	0.8044\\
0.77	0.8041\\
0.785	0.8027\\
0.8	0.8\\
};
  \addlegendentry{$5(1-a)\left(\frac{a}{x}\right) \exp(2\sqrt{2}(a-x)) - 1=0$}

\addplot [color=red, line width=5.0pt, forget plot]
  table[row sep=crcr]{%
0.713	1.02316\\
0.713	1.05\\
};
\addplot [color=red, line width=2.0pt, forget plot]
  table[row sep=crcr]{%
0.713	0.713\\
0.713	0.8\\
};
\addplot [color=blue, line width=2.0pt, forget plot]
  table[row sep=crcr]{%
0.713	0.713\\
0.8	0.8\\
};
\addplot [color=red, line width=5.0pt, forget plot]
  table[row sep=crcr]{%
0.758819	0.804394\\
0.758819	0.9156\\
};
\addplot [color=red, line width=5.0pt, forget plot]
  table[row sep=crcr]{%
0.713	1.023155\\
0.7135	1.02208875\\
0.714	1.02102\\
0.7145	1.01994875\\
0.715	1.018875\\
0.7155	1.01779875\\
0.716	1.01672\\
0.7165	1.01563875\\
0.717	1.014555\\
0.7175	1.01346875\\
0.718	1.01238\\
0.7185	1.01128875\\
0.719	1.010195\\
0.7195	1.00909875\\
0.72	1.008\\
0.7205	1.00689875\\
0.721	1.005795\\
0.7215	1.00468875\\
0.722	1.00358\\
0.7225	1.00246875\\
0.723	1.001355\\
0.7235	1.00023875\\
0.724	0.99912\\
0.7245	0.99799875\\
0.725	0.996875\\
0.7255	0.99574875\\
0.726	0.99462\\
0.7265	0.99348875\\
0.727	0.992355\\
0.7275	0.99121875\\
0.728	0.99008\\
0.7285	0.98893875\\
0.729	0.987795\\
0.7295	0.98664875\\
0.73	0.9855\\
0.7305	0.98434875\\
0.731	0.983195\\
0.7315	0.98203875\\
0.732	0.98088\\
0.7325	0.97971875\\
0.733	0.978555\\
0.7335	0.97738875\\
0.734	0.97622\\
0.7345	0.97504875\\
0.735	0.973875\\
0.7355	0.97269875\\
0.736	0.97152\\
0.7365	0.97033875\\
0.737	0.969155\\
0.7375	0.96796875\\
0.738	0.96678\\
0.7385	0.96558875\\
0.739	0.964395\\
0.7395	0.96319875\\
0.74	0.962\\
0.7405	0.96079875\\
0.741	0.959595\\
0.7415	0.95838875\\
0.742	0.95718\\
0.7425	0.95596875\\
0.743	0.954755\\
0.7435	0.95353875\\
0.744	0.95232\\
0.7445	0.95109875\\
0.745	0.949875\\
0.7455	0.94864875\\
0.746	0.94742\\
0.7465	0.94618875\\
0.747	0.944955\\
0.7475	0.94371875\\
0.748	0.94248\\
0.7485	0.94123875\\
0.749	0.939995\\
0.7495	0.93874875\\
0.75	0.9375\\
0.7505	0.93624875\\
0.751	0.934995\\
0.7515	0.93373875\\
0.752	0.93248\\
0.7525	0.93121875\\
0.753	0.929955\\
0.7535	0.92868875\\
0.754	0.92742\\
0.7545	0.92614875\\
0.755	0.924875\\
0.7555	0.92359875\\
0.756	0.92232\\
0.7565	0.92103875\\
0.757	0.919755\\
0.7575	0.91846875\\
0.758	0.91718\\
0.7585	0.91588875\\
};

\addplot [color=mycolor1]
  table[row sep=crcr]{%
0.7588	0.9151128\\
0.7688	0.8887328\\
0.7788	0.8613528\\
0.7888	0.8329728\\
0.7988	0.8035928\\
0.8088	0.7732128\\
0.8188	0.7418328\\
0.8288	0.7094528\\
0.8388	0.6760728\\
0.8488	0.6416928\\
};
\addlegendentry{$x=5a(1-a)$}

\addplot [color=mycolor2, dashed,name path=x08, forget plot]
  table[row sep=crcr]{%
0.65	0.8\\
0.675	0.8\\
0.7	0.8\\
0.725	0.8\\
0.75	0.8\\
0.775	0.8\\
0.8	0.8\\
0.825	0.8\\
0.85	0.8\\
};
\addplot [color=mycolor3, dashed,name path=x08044, forget plot]
  table[row sep=crcr]{%
0.65	0.8044\\
0.675	0.8044\\
0.7	0.8044\\
0.725	0.8044\\
0.75	0.8044\\
0.775	0.8044\\
0.8	0.8044\\
0.825	0.8044\\
0.85	0.8044\\
};

\addplot [color=mycolor4]
  table[row sep=crcr]{%
0.65	0.65\\
0.675	0.675\\
0.7	0.7\\
0.725	0.725\\
0.75	0.75\\
0.775	0.775\\
0.8	0.8\\
0.825	0.825\\
0.85	0.85\\
};
\addlegendentry{$x=a$}

\addplot [color=white!15!black, dashed]
  table[row sep=crcr]{%
0.713	0.35\\
0.713	1.05\\
};
\addlegendentry{$a=\gamma\approx 0.713$}

\addplot [color=white!15!blue, dashed]
  table[row sep=crcr]{%
0.758819	0.35\\
0.758819	1.05\\
};
\addlegendentry{$a=\frac{\sqrt{2}+\sqrt{3}-1}{2\sqrt2}$}

\addplot [color=mycolor5]
  table[row sep=crcr]{%
0.65	0.697838976953865\\
0.6505	0.695692557563705\\
0.651	0.693542436210658\\
0.6515	0.691388517124209\\
0.652	0.689230703807635\\
0.6525	0.687068899016251\\
0.653	0.684903004735258\\
0.6535	0.682732922157214\\
0.654	0.680558551659088\\
0.6545	0.678379792778903\\
0.655	0.67619654419193\\
0.6555	0.674008703686449\\
0.656	0.671816168139032\\
0.6565	0.669618833489347\\
0.657	0.667416594714463\\
0.6575	0.665209345802647\\
0.658	0.662996979726617\\
0.6585	0.660779388416253\\
0.659	0.658556462730734\\
0.6595	0.656328092430088\\
0.66	0.654094166146131\\
0.6605	0.651854571352783\\
0.661	0.649609194335721\\
0.6615	0.647357920161378\\
0.662	0.645100632645226\\
0.6625	0.642837214319356\\
0.663	0.640567546399307\\
0.6635	0.638291508750129\\
0.664	0.636008979851651\\
0.6645	0.633719836762929\\
0.665	0.631423955085841\\
0.6655	0.629121208927809\\
0.666	0.626811470863608\\
0.6665	0.624494611896237\\
0.667	0.622170501416817\\
0.6675	0.619839007163489\\
0.668	0.617499995179264\\
0.6685	0.615153329768804\\
0.669	0.612798873454082\\
0.6695	0.610436486928899\\
0.67	0.608066029012203\\
0.6705	0.60568735660017\\
0.671	0.603300324617015\\
0.6715	0.600904785964476\\
0.672	0.598500591469924\\
0.6725	0.596087589833063\\
0.673	0.593665627571146\\
0.6735	0.591234548962687\\
0.674	0.588794195989574\\
0.6745	0.586344408277566\\
0.675	0.58388502303508\\
0.6755	0.581415874990232\\
0.676	0.578936796326047\\
0.6765	0.576447616613788\\
0.677	0.573948162744317\\
0.6775	0.571438258857434\\
0.678	0.568917726269095\\
0.6785	0.566386383396451\\
0.679	0.563844045680605\\
0.6795	0.56129052550702\\
0.68	0.558725632123472\\
0.6805	0.556149171555457\\
0.681	0.553560946518958\\
0.6815	0.550960756330469\\
0.682	0.548348396814154\\
0.6825	0.545723660206045\\
0.683	0.543086335055163\\
0.6835	0.540436206121409\\
0.684	0.537773054270138\\
0.6845	0.535096656363246\\
0.685	0.532406785146647\\
0.6855	0.529703209133987\\
0.686	0.526985692486433\\
0.6865	0.524253994888387\\
0.687	0.521507871418945\\
0.6875	0.518747072418914\\
0.688	0.51597134335323\\
0.6885	0.513180424668531\\
0.689	0.510374051645733\\
0.6895	0.507551954247335\\
0.69	0.504713856959281\\
0.6905	0.501859478627087\\
0.691	0.498988532286021\\
0.6915	0.496100724985043\\
0.692	0.493195757604239\\
0.6925	0.490273324665452\\
0.693	0.487333114135804\\
0.6935	0.484374807223766\\
0.694	0.481398078167463\\
0.6945	0.478402594014815\\
0.695	0.475388014395165\\
0.6955	0.472353991281963\\
0.696	0.469300168746107\\
0.6965	0.466226182699464\\
0.697	0.463131660628121\\
0.6975	0.460016221314846\\
0.698	0.45687947455024\\
0.6985	0.453721020832013\\
0.699	0.450540451051795\\
0.6995	0.44733734616884\\
0.7	0.444111276869985\\
0.7005	0.440861803215125\\
0.701	0.437588474267477\\
0.7015	0.434290827707836\\
0.702	0.430968389431977\\
0.7025	0.427620673130306\\
0.703	0.424247179848821\\
0.7035	0.420847397530359\\
0.704	0.417420800535064\\
0.7045	0.413966849138932\\
0.705	0.410484989009209\\
0.7055	0.406974650655353\\
0.706	0.403435248854175\\
0.7065	0.399866182047683\\
0.707	0.39626683171206\\
0.7075	0.392636561696085\\
0.708	0.388974717527222\\
0.7085	0.385280625683433\\
0.709	0.381553592828684\\
0.7095	0.37779290500992\\
0.71	0.373997826813189\\
0.7105	0.370167600476344\\
0.711	0.366301444955665\\
0.7115	0.362398554943436\\
0.712	0.358458099833411\\
0.7125	0.354479222630759\\
0.713	0.350461038802907\\
};
\addlegendentry{$x=\frac{1}{2\sqrt{2}} \log\left(- \frac{2\sqrt{2}a(1-a)}{1-2a}-1\right)+a$}

\addlegendentry{$ $}

\addplot [color=white!15!black, dashed, forget plot]
  table[row sep=crcr]{%
0.8	0.35\\
0.8	1.05\\
};

  \addplot [domain=(0.713:0.85, samples=100, name path=x035, thick, color=white!50] {0.3505};

  \addplot[red!50, opacity=0.2] fill between[of=x035 and x08, soft clip={domain=0.713:0.85}];
  \addplot[red!50, opacity=0.2] fill between[of=f and x08, soft clip={domain=0.713:0.8}];

\end{axis}

\begin{axis}[%
width=15.222in,
height=11.625in,
at={(0in,0in)},
scale only axis,
xmin=0,
xmax=1,
ymin=0,
ymax=1,
axis line style={draw=none},
ticks=none,
axis x line*=bottom,
axis y line*=left
]
\end{axis}
\end{tikzpicture}%
}
\caption{Optimal barrier $a^{**}(x)$ that attains the maximum of $\Lambda(x,a)$ for Bessel case. (Red line and red region)}\label{phaseplot3}
\end{figure}

It follows that there are four separate cases:
\begin{itemize}
\item[(i)] $0 < a < \frac15$, in this case $ x^{**} < x^{*} < \frac45$;

\item[(ii)] $\frac15 < a < \gamma$, in this case $x^{**} < \frac45 < x^{*}$;

\item[(iii)]  $\gamma < a < \frac45$, in this case $\frac45 < x^{**} < x^{*}$;

\item[(iv)] $\frac45 < a < 1$, in this case $x^{*} < x^{**} < \frac45$.
\end{itemize}

In cases (i) and (ii), $[0,a]$ is not an equilibrium. We now focus on case (iii) and (iv). For case (iii), $\phi(1-x)=\min(1-x,\frac15)$.  When $x < x^{**}$, we have $\Lambda(x,a)=1$. When $x^{**} < x < x^*$, we have 
$$
\Lambda(x,a) = \frac12 (1-a)(\frac{a}{x}) e^{-2\sqrt{2}(x-a)} + \frac12.
$$
When $x > x^*$, it holds that
$$
\Lambda(x,a) = \frac12 (1-a)\Big(\frac{a}{x}\Big) + \frac12 (1-a)\Big(\frac{a}{x}\Big) e^{-2\sqrt{2}(x-a)}.
$$
The two vertical lines $a=\gamma$, $a=\frac45$ and the three curves $x=\gamma$, $x=x^{**}(a)$ and $x=5a(1-a)$ divide the $x$-$a$ plane into three regions. In the following, given a fixed $x \in [ \gamma ,\infty)$, we look for $a^{**}(x)$ which maximises $\Lambda(x,a)$. When $ \gamma< x < \frac45$, $a \in [\gamma, \frac45] \mapsto \Lambda(x,a)=1$ is a constant. Notice that by the definition, maximising $a \mapsto x^{**}(a)$ is equivalent to maximising $a \mapsto a(1-a) e^{2\sqrt{2}a}$, we get $a = \hat{a}=\frac{\sqrt{2}+\sqrt{3}-1}{2\sqrt{2}}\approx 0.75882$. As $x^{**}(\hat a) \approx 0.80439$, it follows that when $x \in [\frac45, x^{**}(\hat a)]$, any $a \in [a_1, a_2]$ will maximise $\Lambda(x,a)$, with $a_1, a_2$ being the two solutions to the equation $x^{**}(a) = x$ in the region $[\gamma, \frac45]$. As $5 \hat a(1-\hat a)=\frac54(\sqrt{3}-1) \approx 0.91506$, we have that for $x \in [x^{**}(\hat a), \frac54(\sqrt{3}-1)]$, $a^{**}=\hat a$. 
Finally, for $x > 5a(1-a)$, we can solve $\frac{\partial \Lambda(x,a)}{\partial a} = 0$, and get 
$$
x=\hat x(a) = \frac{1}{2\sqrt{2}} \ln \left( - \frac{2\sqrt{2}a(1-a)}{1-2a} - 1 \right)+a.
$$
 As $\hat x(a) < x^{**}(a)$ for all $a \geq \gamma$,
we have that $a^{**}(x)$ satisfies $5 a^{**}(1-a^{**})=x$ for $x \in [\frac54(\sqrt{3}-1), 5\gamma(1-\gamma)]$, i.e. $a^{**}=\frac{1}{10} \left(5+\sqrt{25-20x}\right)$, and $a^{**}(x)=\gamma$ for $x \geq 5\gamma(1-\gamma)$.

For case (iv), $\phi(1-x)=\min(1-x,\frac15)=1-x$.  When $x < x^{*}$, $\Lambda(x,a)=1$. When $x^{*} < x < x^{**}$, 
$$
\Lambda(x,a) = \frac12 (1-a)(\frac{a}{x}) + \frac12.
$$
When $x > x^{**}$, 
$$
\Lambda(x,a) = \frac12 (1-a)(\frac{a}{x}) + \frac12 (1-a)(\frac{a}{x}) e^{-2\sqrt{2}(x-a)}.
$$
In this case (iv) $a^{**}=\frac45$ maximises $\Lambda(x,a)$, and in addition, case (iv) is dominated by case (iii).

In conclusion, we have that
$$
a^{**}(x) = \left\{\begin{aligned}
&\gamma,   &~~~\mbox{if } x \geq 5\gamma(1-\gamma), \\
&\frac{5+\sqrt{25-20x}}{10},   &~~~\mbox{if } \frac54(\sqrt{3}-1) \leq x < 5\gamma(1-\gamma), \\
&\hat{a},    &~~~\mbox{if } x^{**}(\hat a)  \leq x < \frac54(\sqrt{3}-1), \\
&\left\{ \gamma \leq a \leq \frac45 | x \leq x^{**}(a) \right\},   &~~~\mbox{if } \frac45  \leq x < x^{**}(\hat a), \\
&\left\{ \gamma \leq a \leq 1 \right\},   &~~~\mbox{if } \gamma  \leq x < \frac45,
\end{aligned}\right.
$$
where we recall that $\gamma \approx 0.71305$, $\hat a= \frac{\sqrt{2}+\sqrt{3}-1}{2\sqrt{2}}\approx 0.75882$, $x^{**}(\hat a) \approx 0.80439$, and $5\gamma(1-\gamma) \approx 1.02316$.
As a result, in this example, there does not exist an optimal equilibrium such that its value function dominates the ones under other equilibria for any $x>0$. 

\begin{remark}\label{bess-remark}

Unlike the previous geometric Brownian motion case, in Bessel process with $n=3$, there is no scenario under which the optimal equilibrium not only exists, but also differs from the minimal equilibrium. The reason is the following: for each $r \in \mathrm{supp}(\rho)$, one can find $a^{**}_{r}(x): = \argmax_a (1-a)(\frac{a}{x}) e^{-\sqrt{2r}(x-a)}$. When $x$ is large enough, the optimal $a^{**}$ is some combination of all these $a^{**}_{r}(x)$'s. In particular when $x$ changes, the weight of the above combination also changes, hence there is no universal maximizer $a^{**}$. 

In the previous geometric Brownian motion case, with $\rho$ being distributed on $0$ and another point, the optimization is carried out with respect to the sum of a constant($1-a$) and $(1-a)(\frac{a}{x})^{f(r)}$. In particular, when $x$(the weight) changes, the optimal equilibrium will always stay at the optimum of $(1-a)(\frac{a}{x})^{f(r)}$.

\end{remark}

\ \\
\noindent
\textbf{Acknowledgements}: The authors are grateful to two anonymous referees for their helpful comments and suggestions. S. Deng is supported by the Hong Kong University of Science and Technology Start-up grant no. R9826. X. Yu is supported by the Hong Kong
RGC General Research Fund (GRF) under grant no. 15304122 and by the Research Centre for Quantitative Finance at the Hong Kong Polytechnic University under grant no. P0042708.
J. Zhang is supported by the Chinese University of Hong Kong startup grant (4937261).
\\

\end{document}